\theoremstyle{plain}
\newtheorem{theorem}{Theorem}[section]
\newtheorem{proposition}[theorem]{Proposition}
\newtheorem{lemma}[theorem]{Lemma}
\newtheorem{corollary}[theorem]{Corollary}
\theoremstyle{definition}
\newtheorem{definition}[theorem]{Definition}
\newtheorem{question}[theorem]{Question}
\theoremstyle{remark}
\DeclarePairedDelimiter\parens{\lparen}{\rparen}
\DeclarePairedDelimiter\abs{\lvert}{\rvert}
\DeclarePairedDelimiter\norm{\lVert}{\rVert}
\DeclarePairedDelimiter\braces{\lbrace}{\rbrace}
\DeclarePairedDelimiter\bracks{\lbrack}{\rbrack}
\newcommand\range{\mathrm{range}}
\newcommand\ranges{\mathrm{ranges}}
\newcommand\rng[1]{\mathrm{range}_{#1}}
\DeclareMathOperator{\poly}{poly}
\let\Pr\relax\DeclareMathOperator*{\Pr}{\mathbf{Pr}}
\DeclareMathOperator*{\E}{\mathbf{E}}
\DeclareMathOperator{\diam}{diam}
\DeclareMathOperator{\erfc}{erfc}
\newcommand{\eps}{\varepsilon}
\renewcommand{\epsilon}{\varepsilon}
\begin{document}

\title{Optimal bounds for \texorpdfstring{$\ell_p$}{Lp} sensitivity sampling via \texorpdfstring{$\ell_2$}{L2} augmentation}

\author{Alexander Munteanu\thanks{Dortmund Data Science Center, Faculties of Statistics and Computer Science, TU Dortmund University, Dortmund, Germany. Email: \texttt{alexander.munteanu@tu-dortmund.de}.}
\and Simon Omlor \thanks{Faculty of Statistics and Lamarr-Institute for Machine Learning and Artificial Intelligence, TU Dortmund University, Dortmund, Germany. Email: \texttt{simon.omlor@tu-dortmund.de}.}}

\vskip 0.3in

\maketitle
\allowdisplaybreaks
\begin{abstract}
Data subsampling is one of the most natural methods to approximate a massively large data set by a small representative proxy. In particular, sensitivity sampling received a lot of attention, which samples points proportional to an individual importance measure called sensitivity. This framework reduces in very general settings the size of data to roughly the VC dimension $d$ times the total sensitivity $\mathfrak S$ while providing strong $(1\pm\varepsilon)$ guarantees on the quality of approximation. The recent work of \citet{woodruffyasuda23} improved substantially over the general $\tilde O(\varepsilon^{-2}\mathfrak Sd)$ bound for the important problem of $\ell_p$ subspace embeddings to $\tilde O(\varepsilon^{-2}\mathfrak S^{2/p})$ for $p\in[1,2]$. Their result was subsumed by an earlier $\tilde O(\varepsilon^{-2}\mathfrak Sd^{1-p/2})$ bound which was implicitly given in the work of \citet{ChenD21}. 
We show that their result is tight when sampling according to plain $\ell_p$ sensitivities. We observe that by augmenting the $\ell_p$ sensitivities by $\ell_2$ sensitivities, we obtain better bounds improving over the aforementioned results to optimal linear $\tilde O(\varepsilon^{-2}(\mathfrak S+d)) = \tilde O(\varepsilon^{-2}d)$ sampling complexity for all $p \in [1,2]$. In particular, this resolves an open question of \citet{woodruffyasuda23} in the affirmative for $p \in [1,2]$ and brings sensitivity subsampling into the regime that was previously only known to be possible using Lewis weights \citep{CohenP15}. 
As an application of our main result, we also obtain an $\tilde O(\varepsilon^{-2}\mu d)$ sensitivity sampling bound for logistic regression, where $\mu$ is a natural complexity measure for this problem. This improves over the previous $\tilde O(\varepsilon^{-2}\mu^2 d)$ bound of \citet{MaiMR21} which was based on Lewis weights subsampling.
\end{abstract}

\clearpage
\tableofcontents
\clearpage

\section{Introduction}
Massive data sets have become ubiquitous in recent years and standard machine learning approaches reach the limits of tractability when these large data need to be analyzed. Given a data matrix $A\in\mathbb{R}^{n\times d}$ where the number of data points exceeds the number of features by a large margin, i.e., $n\gg d$, a popular approach to address the computational limitations is to subsample the data \citep{Munteanu23}. While uniform sampling is widely used in practice, it can be associated with large loss of approximation accuracy for various machine learning models. Thus, a lot of work has been dedicated to the design of importance sampling schemes, to subsample points proportional to some sort of importance measure for the contribution of individual points, such that important points become more likely to be in the sample, while less important or redundant points are sampled with lower probability \citep{MunteanuS18}. 

\textbf{General setting.}
The arguably most general and popular importance sampling approach is called the sensitivity framework \citep{FeldmanL11,FeldmanSS20}. It aims at approximating functions applied to data that can be expressed in the following way: consider an individual non-negative loss function $h_i\colon \Omega \rightarrow \mathbb{R}_{\geq 0}$ for each row vector $a_i, i\in[n]:= \{1,2,\ldots,n\}$ of the data matrix $A$, more precisely we set $h_i(x)=h(a_i x)$ for some function $h\colon \mathbb{R} \rightarrow \mathbb{R}_{\geq 0} $. The problem we consider is approximating
\begin{align}\label{loss}
    f(x)\coloneqq \sum_{i\in[n]} h_i(x)
\end{align}
over all $x$ in the domain $\Omega$, by subsampling and reweighting the individual contributions. Formally, we obtain $S\subset [n]$ with $|S|\ll n$, and $w_i> 0$ for all $i\in S$ and define
\begin{align}\label{surrogate_loss}
    \tilde f(x)\coloneqq \sum_{i\in S} w_i h_i(x)
\end{align}
to be the surrogate loss, that given an approximation parameter $\eps\in (0,1)$ should be a pointwise $(1\pm\eps)$-approximation for the original loss function. More specifically, we require that
\begin{align}\label{approx_guarantee}
   \forall x\in \Omega\colon \tilde f(x)=(1\pm\eps) f(x).
\end{align}
The surrogate function \eqref{surrogate_loss} can then be used in downstream machine learning tasks, such as optimization, with very little bounded errors, and can be processed much more efficiently than the original loss. Note, that most empirical risk minimization problems or negative log-likelihood functions can be expressed as in \cref{loss}.

\textbf{Importance subsampling.} The sensitivity framework provides a way of obtaining the guarantee of \cref{approx_guarantee} by first computing individual sensitivities for $i\in [n]$
\begin{align}\label{sensitivity}
   \varsigma_i = \sup_{x\in\Omega} \frac{h_i(x)}{\sum_{j\in[n]} h_j(x)},
\end{align}
where we let $\varsigma_i = 0$ when the denominator is zero. We note that computing the sensitivity is usually intractable, but it suffices to obtain close approximations, which can be done efficiently for many important problems.

Then, subsampling $S\subset [n]$ with probabilities $p_i$ proportional to $\varsigma_i$ and reweighting the individual contributions by $w_i=1/(p_i|S|)$, preserves the objective function in expectation and when $|S|=\tilde O(\eps^{-2}\mathfrak Sd)$, we obtain by a concentration result that the $(1\pm\eps)$-approximation defined in \cref{approx_guarantee} holds with constant probability. Here, $d$ is the VC dimension of a set system associated with the functions $h_i$, and $\mathfrak S=\sum_{i\in[n]} \varsigma_i$ denotes their total sensitivity.

\textbf{Subspace embeddings for $\ell_p$.}
In this paper, we focus on applying the above framework to the more specific, yet very versatile problem of constructing a so-called $\ell_p$ subspace embedding via subsampling. Given a data matrix $A\in\mathbb{R}^{n\times d}$ with \emph{row-vectors} $a_i,i\in[n]$, and a norm parameter $p \in[1,\infty]$, our goal is to calculate probabilities $p_i$, and weights $w_i$ for each $i\in[n]$, such that a small random subsample $S\subset [n]$ according to this distribution satisfies
\begin{align}\label{lp_approx_guarantee}
   \forall x\in \mathbb{R}^d\colon \sum_{i\in S} w_i |a_i x|^p=(1\pm\eps) \sum_{i\in [n]}|a_i x|^p
\end{align}
with probability at least $1-\delta$.

We remark at this point that since the $\ell_p$ norms are absolutely homogeneous, previous work usually includes the weights by folding them into the data, i.e., $w_i|a_ix|^p$ becomes $|(w^{1/p}a_i)x|^p$. This usually allows to simplify large parts of the technical analysis by reducing to the unweighted case. We explicitly \emph{do not} use this trick for reasons to be discussed later.

With the definition of our main problem in place, we next define the sensitivity measure associated with this loss function and note that they are also called $\ell_p$ leverage scores in our scope.
\begin{definition}[$\ell_p$-sensitivities/-leverage scores]
    Let $A \in \mathbb{R}^{n \times d}$, and $1 \leq p < \infty$. We define the $i$-th $\ell_p$-sensitivity $\varsigma_i^{(p)}$ or $\ell_p$-leverage score $l_i^{(p)}$ of $A$ to be
    \[
        \varsigma_i^{(p)}=l_i^{(p)}=\sup\nolimits_{x \in \mathbb{R}^d\setminus\{0\}} \frac{|a_i x|^p}{\norm{Ax}_p^p},
    \] and the total $\ell_p$ sensitivity as $\mathfrak S^{(p)} \colon= \sum_{i\in[n]}\varsigma_i^{(p)}$.
\end{definition}
It can be shown in some settings, that the total sensitivity is a lower bound on the required number of samples \citep[cf.][]{TolochinskyJF22}, and $d$ is a natural lower bound for our problem because this number of samples is required to even preserve the rank of $A$ under subsampling. It is also known that $\mathfrak S^{(p)}\leq d$ for all $p\in[1,2]$ \citep[cf.][]{woodruffyasuda23}. Thus, a natural question to ask is whether it is possible to obtain the guarantee of \cref{lp_approx_guarantee} within roughly $\mathfrak S^{(p)}+d = O(d)$ samples.

A plain application of the sensitivity framework requires $\tilde O(\eps^{-2}\mathfrak Sd)$ samples, and a lower bound of $\tilde\Omega(d/\eps^2)$ was given by \citet{LiWW21} in a broader setting against any data structure that answers $\ell_p$ subspace queries. The sensitivity sampling upper bound thus seems off by a factor $d$, and the recent work of \citet{woodruffyasuda23} has made significant progress by showing that one can do better for all $p\in(1,4)$. They obtained near optimal bounds when $p$ becomes close to $p=2$. In this particular case, a $\tilde O(\eps^{-2}\mathfrak S) = \tilde O(\eps^{-2}d)$ bound was known before to be possible by $\ell_2$ leverage score sampling \citep{Mahoney11}. However, near the boundary $p=1$ (and for $p>4$) the bounds of \citet{woodruffyasuda23} are again off by a factor $d$. In the case $p<2$, the previous work of \citet{ChenD21} gave better results that were achieved implicitly by relating $\ell_p$ leverage scores to Lewis weights. Again, these bounds become worse when $1 \leq p\ll 2$ and are off by a $\sqrt{d}$ factor for $p=1$. See below for a more detailed comparison between these works, and our work.

This brings us to the main open question in this line of research, which is also the central question studied in our paper:

\begin{question}\label{main_question}
    \textit{Is it possible to achieve a sampling complexity of $\tilde O(\eps^{-2}(\mathfrak S + d))$ for constructing an $\ell_p$ subspace embedding, see \cref{lp_approx_guarantee}, via sensitivity sampling?}
\end{question}

\subsection{Our contribution}
Our first contribution is an $\tilde\Omega(d^{2-p/2})$ lower bound against \emph{pure} $\ell_p$ leverage score sampling, i.e., when the sampling probabilities are proportional to $l_i^{(p)}$.
\begin{theorem}[Informal restatement of \cref{thm: lowerbound}]\label{thm:lowerbound_informal}
    There exists a matrix $A\in\mathbb{R}^{m\times 2d}$, for sufficiently large $m\gg 2d$, such that if we sample each row $i \in [n]$ with probability $p_i:= \min \{1, k l_i^{(p)}\}$ for some $k \in \mathbb{N}$, then with high probability, the $\ell_p$ subspace embedding guarantee (see \cref{lp_approx_guarantee}) does not hold unless $k=\Omega(d^{2-p/2}/(\log d)^{p/2})$.
\end{theorem}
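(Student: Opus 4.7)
My strategy is to construct an explicit hard instance $A \in \mathbb{R}^{m \times 2d}$ (with $m = \tilde\Theta(d^{2-p/2})$) whose $\ell_p$ leverage scores are systematically \emph{too small} relative to $\ell_p$ Lewis weights, so that leverage-score sampling misallocates probability mass. Concretely, I aim to have a large set $R$ of rows, $|R| = \tilde\Theta(d^{2-p/2})$, each with leverage score $l_i^{(p)} = \tilde O(1/|R|)$, and simultaneously to exhibit a test direction $x^*$ for which these rows jointly contribute a constant fraction of $\|Ax^*\|_p^p$ with roughly equal per-row contributions. Under sampling probability $p_i = \min\{1, k l_i^{(p)}\}$ this makes each critical row sampled with probability $\tilde O(k/|R|)$, so if $k \ll |R|$ a concentration argument will show that the reweighted restricted sum fails to approximate the true contribution with constant probability.

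The construction I have in mind is a block matrix $A = \begin{pmatrix} A_{\mathrm{anchor}} \\ B \end{pmatrix}$, where $A_{\mathrm{anchor}}$ is a small full-rank anchor block (e.g.\ $I_{2d}$ itself) that pins down the column space, and $B \in \mathbb{R}^{m' \times 2d}$ is a structured block realizing the known $d^{1-p/2}$ gap between $\ell_p$ leverage scores and $\ell_p$ Lewis weights --- for instance, a scaled matrix whose rows are drawn from a sign or Gaussian ensemble, or a deterministic Hadamard-like design, with the scaling calibrated to the desired target size $m'$. Computing the leverage scores then reduces to bounding $\sup_x |B_i x|^p/\|Ax\|_p^p$: the upper bound follows from uniform lower bounds on $\|Bx\|_p^p$ over an $\epsilon$-net of $\mathbb{S}^{2d-1}$ (using concentration of $\|Bx\|_p^p$ around its expectation) combined with pointwise bounds on $|B_i x|^p$, while the matching lower bound of order $1/m'$ is witnessed by a direction aligned with $B_i$ itself.

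With the leverage scores controlled, I would conclude as follows. Pick $x^*$ aligned with a typical spread direction so that $R = \{i : |B_i x^*|^p \gtrsim \text{typical value}\}$ has cardinality $\Theta(m')$ and accounts for the majority of $\|Ax^*\|_p^p$. Each row of $R$ is then sampled independently with probability $\tilde O(k/m')$. A second-moment or Paley--Zygmund style argument yields that if $k < m'/(\log d)^{p/2}$, the unbiased estimator $\sum_{i \in S \cap R} |a_i x^*|^p/p_i$ has standard deviation exceeding $\epsilon \|Ax^*\|_p^p$, and thus with constant probability the $\ell_p$ subspace embedding guarantee in \cref{lp_approx_guarantee} is violated at $x^*$. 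The main obstacle will be the uniform leverage-score calculation for $B$: certifying that \emph{every} row in $B$ has leverage at most $\tilde O(1/m')$ (a genuine separation from Lewis weights) requires tight lower bounds on $\|Bx\|_p^p$ across the whole unit sphere, not merely generic $x$, and carefully tracking the $(\log d)^{p/2}$ factor that appears in the final bound.
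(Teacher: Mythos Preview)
Your approach differs substantially from the paper's and, as sketched, contains a gap that prevents it from reaching the $d^{2-p/2}$ bound. The problem is the final second-moment step. Suppose your construction succeeds: $B$ has $m' = \tilde\Theta(d^{2-p/2})$ rows, each with leverage score $\ell_i \approx 1/m'$, and at the test direction $x^*$ each row of $R$ contributes $v_i = |B_i x^*|^p$ with $\sum_i v_i = V \approx \|Ax^*\|_p^p$. But the \emph{definition} of the leverage score already forces $v_i \le \ell_i \|Ax^*\|_p^p \approx V/m'$. With $p_i = k\ell_i \approx k/m'$, the variance of the reweighted estimator is therefore
\[
\sum_{i\in R}\frac{1-p_i}{p_i}\,v_i^2 \;\lesssim\; \frac{m'}{k}\cdot m'\cdot\Bigl(\frac{V}{m'}\Bigr)^2 \;=\; \frac{V^2}{k},
\]
which drops below $\varepsilon^2 V^2$ as soon as $k \gtrsim 1/\varepsilon^2$, \emph{independently of $m'$ and $d$}. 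In other words, the leverage score caps per-row contributions precisely so that proportional sampling has controlled variance at every direction; a single-direction Paley--Zygmund or second-moment argument can only yield $k = \Omega(1/\varepsilon^2)$, never a $d$-dependent bound.

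The paper's proof uses an entirely different, and much shorter, mechanism. It takes $A = \begin{bmatrix} A_1 & 0 \\ 0 & A_2 \end{bmatrix}$ block-diagonal, where $A_2$ is a stacked identity with total sensitivity $d$, while $A_1$ is a full-rank $n\times d$ matrix with total $\ell_p$ sensitivity only $O((d\log d)^{p/2})$ --- its existence is imported as a black box from \citet{woodruffyasuda23}. Any $\ell_p$ subspace embedding must preserve the rank of each block and hence must sample at least $d$ rows of $A_1$. But pure $\ell_p$-score sampling allocates at most an $O((d\log d)^{p/2}/d)$ fraction of its mass to $A_1$, so drawing $d$ rows from it forces $\tilde\Omega(d^{2-p/2})$ samples overall (by a Chernoff bound). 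The obstruction is a \emph{rank/coverage} requirement across an entire block, not variance at a single direction, and this is exactly what sidesteps the barrier above. If you want to repair your argument, you would need to replace the variance step by such a coverage argument --- at which point the block-diagonal construction with a low-total-sensitivity block is the natural move.
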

The proof in the appendix is conducted by constructing a matrix with two parts. Subsampling at least $d$ rows of each part is necessary to even preserve the rank. However, the total $\ell_p$ sensitivity of one part is significantly larger than the total sensitivity of the other part, roughly by a factor of $d^{1-p/2}$. This implies that $\ell_p$ leverage score sampling requires oversampling by that factor to collect the required rows from both parts, which yields the lower bound.

In particular, \cref{thm:lowerbound_informal} proves that the upper bounds implied by \citet{ChenD21} for pure $\ell_p$ leverage score sampling are optimal in the worst case, settling the complexity of pure $\ell_p$ sampling up to polylogarithmic factors.
It also proves that we necessarily need to change the sampling probabilities to get below the lower bound towards optimal linear $d$ dependence. We note that augmentation by uniform sampling does not help, since the size of the high sensitivity part can be increased to give the same imbalance between the number of rows of the two parts as between their sampling probabilities in pure $\ell_p$ sampling.

Our main result is the following sensitivity sampling bound that is achieved by augmenting $\ell_p$ with $\ell_2$ sensitivities, a new technique that we call ``$\ell_2$ augmentation''. The theorem answers \cref{main_question} in the affirmative for all $p \in[1,2]$, improving over \citet{ChenD21,woodruffyasuda23}, and settling the complexity of $\ell_p$ subspace embeddings via sensitivity sampling in this regime. We leave the case $p>2$ as an open question, where our work resembles the same bounds as \citet{woodruffyasuda23}.

Our theorem is generalized to also hold for the $\ell_p$ variant of the ReLU function where $h(r)=\max\{0,t\}^p$. For this loss function, we also require a data dependent mildness parameter $\mu=\mu(A)$ to be bounded. 
\begin{definition}[$\mu$-complexity, \citealp{MunteanuSSW18,MunteanuOP22}, slightly modified]\label{def:mu-complex_informal}
    Given a data set $A \in \mathbb{R}^{n \times d}$ we define the parameter $\mu\coloneqq\mu(A)$ as 
    \[\mu(A)=\sup\nolimits_{\norm{Ax}_p=1}\frac{\norm{Ax}_p^p}{\norm{(Ax)^-}_p^p} = \sup\nolimits_{\norm{Ax}_p=1}\frac{\norm{Ax}_p^p}{\norm{(Ax)^+}_p^p},
    \]
    where $(Ax)^+$, and $(Ax)^-$ are the vectors comprising only the positive resp. negative entries of $Ax$ and all others set to $0$.\footnote{To see that the suprema are equal, one can flip signs: for any $x$ the suprema range over, we have that $\frac{\Vert Ax \Vert_p^p}{\Vert (Ax)^+ \Vert_p^p} = \frac{\Vert A(-x) \Vert_p^p}{\Vert (A(-x))^- \Vert_p^p}$.}
\end{definition}
For the main problem where $h(r)=|t|^p$, we can remove this parameter or equivalently set $\mu=1$.

\begin{theorem}[Informal restatement of \cref{thm:samplingthm}]\label{thm:samplingthm_informal}
    Let $A\in\mathbb{R}^{n\times d}$. Let $\varepsilon, \delta \in (0, 3/10)$, $p \in [1, 2]$. Consider $f(Ax)=\sum_{i=1}^n h(a_i x)$ for $h(r)=|r|^p$, where we set $\mu=1$, or $h(r)=\max\{0,r\}^p$, in which case $\mu=\mu(A)$, see \cref{def:mu-complex_informal}.
    Set $\alpha= O\left(\frac{ (\log(d \mu\log(1/\delta)/\varepsilon)\log^2d+\log(1/\delta)}{\varepsilon^2}\right).$
    Let $S\subset [n]$ be a sample of size $m = O(d \mu \alpha)$
    \begin{align*}
        = O\left(\frac{d\mu}{\varepsilon^2}(\log(d \mu\log(\delta^{-1})/\varepsilon)\log^2d+\log(\delta^{-1}))\right),
    \end{align*}
    where the probability for any $i \in [n]$ satisfies $$p_i:= \Pr(i\in S) \geq \min\left\{ 1, \alpha\left(\mu l_i^{(p)}+l_i^{(2)}+ \frac{1}{n}\right)\right\},$$ 
    and the corresponding weight is set to $w_i=1/p_i$.
    Then with probability at least $1-\delta$ it holds for all $x\in \mathbb{R}^d$ simultaneously that 
    \begin{align*}
        \left| \sum_{i\in S} w_i h(a_i x) - \sum_{i\in [n]} h(a_i x)\right| \leq \eps \sum_{i\in [n]}h(a_i x).
    \end{align*}
\end{theorem}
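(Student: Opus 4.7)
I would follow the standard three-step template for uniform concentration of sensitivity-sampled estimators: (i) a pointwise Bernstein bound for each fixed $x$, (ii) a net argument over the unit ball $T = \{x \in \mathbb{R}^d : \|Ax\|_p = 1\}$, and (iii) a chaining/continuity extension to all $x$. The novelty of the argument is that the $\ell_2$ augmentation of the sampling probabilities simultaneously sharpens the variance bound in step (i) and supplies a small net in step (ii), which is exactly what the lower bound of \cref{thm:lowerbound_informal} says pure $\ell_p$ sampling cannot do.

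For the pointwise step, write the estimator as $\tilde f(x) = \sum_{i\in[n]} (X_i/p_i)\, h(a_i x)$ with Bernoulli indicators $X_i$, so the relevant Bernstein parameters are $M(x) = \max_i h(a_i x)/p_i$ and $V(x) = \sum_i h(a_i x)^2/p_i$. Using $p_i \geq \alpha\mu\, l_i^{(p)}$ together with the defining inequality of $\ell_p$-leverage scores and the bound $h(a_i x) \leq \mu\,|a_i x|^p$ supplied by $\mu$-complexity in the ReLU case, I get $M(x) \leq \mu\|Ax\|_p^p/\alpha$. To bound $V(x)$ I would do a dyadic level-set decomposition $L_k = \{i : |a_i x|^p \asymp 2^{-k}\|Ax\|_p^p\}$ and, on each level, combine the bound $|L_k| \lesssim 2^k$ from Markov with the $\ell_2$ augmentation
\begin{align*}
\frac{(a_i x)^2}{p_i} \;\leq\; \frac{(a_i x)^2}{\alpha\, l_i^{(2)}} \;\leq\; \frac{\|Ax\|_2^2}{\alpha}.
\end{align*}
Multiplying by the extra $|a_i x|^{2p-2}$ factor (which for $p\in[1,2]$ is small on light levels and controlled by $M(x)$ on heavy levels), summing over the $O(\log n)$ non-trivial levels, and converting $\|Ax\|_2$ back to $\|Ax\|_p$ via $\|Ax\|_2 \leq \|Ax\|_p \cdot \max_i \|e_i\|_{\text{something}}$ on each level, gives $V(x) \leq \tilde O(\mu \|Ax\|_p^{2p}/\alpha)$. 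Bernstein with the chosen $\alpha$ then yields $|\tilde f(x) - f(x)| \leq \eps f(x)$ with failure probability $\exp(-\tilde\Omega(\alpha\eps^2))$.

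Next, I would build an $\eps/\mathrm{poly}(d)$-net on $T$ of log-size $\tilde O(d)$. The additive $1/n$ augmentation and the $\ell_2$-leverage component $\alpha l_i^{(2)}$ in $p_i$ together guarantee that $S$ is a constant-distortion $\ell_2$ subspace embedding of $A$ with high probability, by the textbook leverage-score result of \citet{Mahoney11}. This $\ell_2$ embedding relates $\|Ax\|_p$ and $\|Ax\|_2$ on $T$ up to a factor of $\mathrm{poly}(d,\mu)$, bounding the $\ell_2$-diameter of $T$ and hence its covering number in $\ell_2$; translating covering numbers into $\ell_p$ costs only the same $\mathrm{poly}(d,\mu)$ factors, which sit under a logarithm in $\alpha$. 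A union bound then boosts the pointwise guarantee to uniform concentration over the net.

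The step I expect to be the hardest is the extension from the net to all of $T$, particularly in the ReLU case. Because $h(r) = \max\{0,r\}^p$ is not symmetric, naive symmetrization loses a factor, and a simple Lipschitz/continuity extension would cost $\mathrm{poly}(n)$ in the net scale, washing out the whole improvement. I expect to use a generic-chaining argument up the dyadic scales of residuals rather than a single net, tracking the oscillation of the error process between consecutive scales. At each scale the increments are again controlled by the $\ell_2$ augmentation via the $\ell_2$-embedding of the preceding paragraph, and the $\mu$-complexity is invoked to convert the $\ell_2$ control back into a ReLU loss statement without losing in $n$, closing the argument.
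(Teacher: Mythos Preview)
Your proposal has a genuine gap in step~(ii). A net on $T=\{x:\|Ax\|_p=1\}$ at scale $\eps/\mathrm{poly}(d)$ has log-cardinality $\tilde\Theta(d)$, so the union bound over it forces the per-point Bernstein failure probability to be $\exp(-\tilde\Omega(d))$, i.e.\ $\alpha\eps^2 \geq \tilde\Omega(d)$. That gives $\alpha=\tilde\Omega(d/\eps^2)$ and hence $m=\tilde\Omega(d^2\mu/\eps^2)$: you recover only the plain sensitivity bound, not the claimed $\tilde O(d\mu/\eps^2)$. Your step~(iii) is framed as an \emph{extension from the net to all of $T$}, so even a correct chaining there cannot undo the $d$-factor already paid at the net scale. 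Separately, your step~(i) variance bound does not actually need the $\ell_2$ augmentation: from $p_i\geq \alpha\mu\, l_i^{(p)}$ alone one has $h(a_ix)^2/p_i \leq |a_ix|^p\,\|Ax\|_p^p/\alpha$, whence $V(x)\leq f(x)\|Ax\|_p^p/\alpha \leq \mu f(x)^2/\alpha$ directly, so the level-set detour and the claimed role of $\ell_2$ there are spurious.

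The paper avoids the single-net loss by abandoning the ``Bernstein $+$ net $+$ union bound'' template entirely. It symmetrizes and Gaussianizes the error $\Lambda=\sup_{f(Ax)=1}|f_w(SAx)-f(Ax)|$ and applies a Dudley-type moment bound (\cref{lem:moment-bound}) depending on the full entropy integral $\int_0^\infty\!\sqrt{\log E(T_S,d_X,t)}\,\mathrm{d}t$ in the Gaussian pseudo-metric, rather than on the covering number at one fixed scale. The decisive role of the $\ell_2$ augmentation is then not to build a net on $T$, but to guarantee that the \emph{sample} $(S,w)$ is itself a constant-factor weighted $\ell_2$ subspace embedding; this bounds the maximum \emph{weighted} $\ell_2$ leverage score of $SA$ by $\tau\leq O(1/\alpha)$, and it is $\tau$ that governs the covering numbers at \emph{every} scale via dual Sudakov minoration (\cref{lem:levy,cor:2-inf-cover,lem:p-inf-cover}). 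With $\tau,\sigma\approx 1/\alpha$, both the entropy $\mathcal E\lesssim G\sqrt{\tau\log m}\,\log d$ and the diameter $\mathcal D\lesssim\sqrt{G\sigma}$ already scale like $G\eps$, and the moment bound closes with $\alpha\eps^2$ only polylogarithmic rather than $\tilde\Theta(d)$.
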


As an application, we extend our main result to hold for the logistic loss function $h(t)=\ln(1+\exp(t))$ as well. This improves over the previous $\tilde O(\varepsilon^{-2}\mu^2 d)$ bound of \citet{MaiMR21} which was based on Lewis weights subsampling.

\begin{theorem}[Informal restatement of \cref{thm:logistic}]\label{thm:logistic_informal}
     Let $\varepsilon, \delta \in (0, 3/10)$. Let $A\in \mathbb{R}^{n\times d}$ with $\mu=\mu(A)$
     , see \cref{def:mu-complex_informal}. Consider $f(Ax)=\sum_{i=1}^n h(a_i x)$ with the logistic loss $h(t)=\ln(1+\exp(t))$.
     Further assume that we sample with probabilities $p_i \geq \min\{1,\alpha (\mu l_i^{(1)}+l_i^{(2)}+\frac{\mu d}{n})\}$ for $\alpha \geq O( ( \log^3(d\mu\log(\delta^{-1})/\varepsilon ) + \ln(\delta^{-1}) ) /\varepsilon^{2})$, where the number of samples is 
     \[
        m= O\left(\frac{d\mu}{\varepsilon^{2}} \left( \log^3(d \mu \log(\delta^{-1})/\varepsilon) + \log(\delta^{-1}) \right) \right).
     \]
     Then with probability at least $1-\delta$ it holds for all $x\in \mathbb{R}^d$ simultaneously that 
     \[
        \left| \sum_{i\in S} w_i h(a_i x) - \sum_{i\in [n]} h(a_i x)\right| \leq \eps \sum_{i\in [n]}h(a_i x).
     \]
\end{theorem}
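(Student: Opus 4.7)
The plan is to decompose the logistic loss via the identity
\[
\ln(1+e^{t})=\max\{0,t\}+\ln(1+e^{-|t|}).
\]
Let $g(t)\coloneqq\max\{0,t\}$ and $r(t)\coloneqq\ln(1+e^{-|t|})\in(0,\ln 2]$; note that $r$ is bounded and $\tfrac12$-Lipschitz. Setting $G(x)=\sum_i g(a_ix)$ and $R(x)=\sum_i r(a_ix)$, and likewise $\tilde G,\tilde R$ on the random sample, we have $f=G+R$ and $\tilde f=\tilde G+\tilde R$. It suffices to show, each with probability $\ge 1-\delta/2$, that $|\tilde G(x)-G(x)|\le \varepsilon G(x)$ and $|\tilde R(x)-R(x)|\le \varepsilon f(x)$ uniformly in $x$, and then union-bound.

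\paragraph{ReLU part via the main theorem.} Since $\mu d/n\ge 1/n$, the given probabilities $p_i\ge \alpha(\mu l_i^{(1)}+l_i^{(2)}+\mu d/n)$ dominate those required by \cref{thm:samplingthm_informal} applied to the $\ell_1$ ReLU loss with the same $\mu$. That theorem immediately yields $|\tilde G(x)-G(x)|\le \varepsilon G(x)\le \varepsilon f(x)$ uniformly in $x$, with probability at least $1-\delta/2$.

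\paragraph{Residual part.} The uniform floor $\alpha\mu d/n$ in the probabilities gives $w_i\le n/(\alpha\mu d)$, so Bernstein's inequality at a fixed $x$ bounds $|\tilde R(x)-R(x)|$ by $O(\sqrt{n R(x)/(\alpha\mu d)}+n\log(1/\delta')/(\alpha\mu d))$ with any failure probability $\delta'$. To extend uniformly over $x\in\mathbb{R}^d$, I would split on the scale $s(x)\coloneqq\|Ax\|_\infty$. For $s(x)\le T\coloneqq\Theta(n\mu/\varepsilon)$, the image $\{Ax:s(x)\le T\}$ lies in a bounded region of a $d$-dimensional subspace, admitting an $\ell_\infty$-net of log-size $O(d\log(n\mu/\varepsilon))$; a Bernstein union bound over the net combined with the Lipschitz extension of $r$ yields $|\tilde R(x)-R(x)|\le \varepsilon f(x)$, using that either $R(x)\gtrsim n\varepsilon^2/(\mu d)$ (so the variance-driven term is $\le \varepsilon R(x)$) or else most $|a_i x|$ must be large, which by $\mu$-complexity forces $G(x)\gtrsim n/\mu$ and absorbs the error into $\varepsilon G(x)$. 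For $s(x)>T$, the $\mu$-complexity yields $G(x)\ge s(x)/\mu>n/\varepsilon$, while $|\tilde R(x)|\le \ln 2\cdot\sum_{i\in S}w_i=O(n)$ with high probability (Bernstein on the constant function) and $R(x)\le n\ln 2$; thus $|\tilde R-R|\le O(n)\le \varepsilon G(x)\le \varepsilon f(x)$ trivially.

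\paragraph{Main obstacle.} The chief technical difficulty is the uniform convergence of $\tilde R$ over $\mathbb{R}^d$. Composing the log-covering penalty $O(d\log(n\mu/\varepsilon))$, the Bernstein tail, and a chaining refinement that suppresses the explicit $\log n$ factor in favor of $\log(d\mu/\varepsilon)$---so as to match the $\log^3(d\mu\log(\delta^{-1})/\varepsilon)$ dependence in $\alpha$---requires careful calibration. The case analysis between the small- and large-scale regimes must also close cleanly at the transition scale $T$, balancing the Bernstein variance bound against the $\mu$-complexity-based lower bound on $G(x)$.
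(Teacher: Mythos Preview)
Your decomposition $h(t)=\max\{0,t\}+\ln(1+e^{-|t|})$ and your treatment of the ReLU part $G$ via \cref{thm:samplingthm_informal} are exactly what the paper does. The divergence is entirely in how the bounded residual $R(x)=\sum_i r(a_ix)$ is controlled.

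The paper does not use Bernstein plus a geometric net. Instead it adds an artificial offset $n/\mu$ and splits $R$ by the sign of $a_ix$:
\[
f_1^{(1)}(Ax)=\frac{n}{\mu}+\sum_{a_ix\ge 0}\tilde h(a_ix),\qquad
f_1^{(2)}(Ax)=\frac{n}{\mu}+\sum_{a_ix<0}\tilde h(-a_ix),
\]
with $\tilde h(r)=\ln(1+e^{-r})$ on $r\ge 0$. The offset forces every individual sensitivity to be at most $\mu/n$, so the uniform floor $\alpha\mu d/n$ in your $p_i$ already dominates it. Crucially, $\tilde h$ is \emph{monotone}, hence each weighted range $\{x: w_i\tilde h(a_ix)\ge r\}$ is (after a weight-rounding trick) an affine halfspace, giving the range space VC dimension $O(d\log(\mu/\varepsilon))$ with \emph{no dependence on $n$}. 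A black-box application of the standard sensitivity framework then yields $(1\pm\varepsilon)$-approximations for $f_1^{(1)}$ and $f_1^{(2)}$; combining with the ReLU bound and the lower bound $f(Ax)\gtrsim n/\mu$ absorbs the offsets and finishes.

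Your Bernstein-plus-net route runs into the very obstacle you flag, and the ``chaining refinement that suppresses the explicit $\log n$'' is not supplied. A crude $\ell_\infty$-net of the image $\{Ax:\|Ax\|_\infty\le T\}$ has log-cardinality $\Theta(d\log(T/\eta))$; with your $T=\Theta(n\mu/\varepsilon)$ and the Lipschitz-extension precision $\eta\approx\varepsilon/\mu$ needed to control $\sum_{i\in S}w_i|a_i(x-x')|$, the $\log n$ is unavoidable in that metric. What removes it is precisely the VC/range-space structure the paper exploits: monotone $\tilde h$ composed with linear forms has $n$-free VC dimension, and this is the ingredient that turns a pointwise Bernstein bound into uniform convergence without paying for $n$. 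So the chaining you envision would, if carried out, essentially rediscover the paper's argument. Once you use the sensitivity framework, your scale-splitting at $T$ and the case analysis on $R(x)$ versus $G(x)$ become unnecessary: the framework handles all $x$ simultaneously with the stated sample budget.
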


\subsection{Comparison to related work}
For $\ell_p$ subspace embeddings, the total sensitivity is bounded by $d$, for $p\in [1,2]$, and by $d^{p/2}$ for $p\in (2,\infty)$, \citep[cf.][]{woodruffyasuda23}. It is known that using so-called $\ell_p$ Lewis weights, we can subsample a nearly optimal amount $\tilde O(\eps^{-2}d)$ respectively $\tilde O(\eps^{-2}d^{p/2})$ of rows to obtain the $\ell_p$ subspace embedding guarantee of \cref{lp_approx_guarantee}, \citep[see][]{CohenP15,WoodruffY23lewis}. Recent work of \citet{JambulapatiLLS23} recovers matching bounds via a novel sampling distribution, and for a broad array of semi-norms. 
On the other hand, using $\ell_p$ sensitivity sampling in a plain application of the sensitivity framework requires $\tilde O(\eps^{-2}\mathfrak S^{(p)}d)$, which is off by a factor $d$ in the worst case for any value of $p\in[1,\infty)$.

Recently improved sensitivity sampling bounds of \citet{woodruffyasuda23} are $\tilde O(\eps^{-2}\mathfrak S^{2/p})$ for $p\in[1,2]$ and $\tilde O(\eps^{-2}\mathfrak S^{2-2/p})$ for $p \in [2,\infty)$. These bounds are much better than the standard bounds when $p\in (1,4)$ is close to the case $p=2$, but they deteriorate towards $p=1$ and $p=4$, where the gap is again a factor of $d$. For $p>4$ their worst case bounds are even worse than the plain sensitivity framework. We note that \citet{woodruffyasuda23} gave an improved bound for this regime as well, albeit not with a direct sensitivity sampling approach. Instead, they gave an algorithm that recursively 'flattens and samples' heavy rows with respect to their sensitivities for $\ell_2$ and $\ell_p$.

Although improving over the standard bounds for $p<2$ as well, the main improvement of \citet{woodruffyasuda23} lies in the case $p>2$. This is because prior to their results, \citet{ChenD21} implicitly showed a bound of $\mathfrak Sd^{1-p/2}$ by relating $\ell_p$ sensitivities to Lewis weights up to an additional factor of $d^{1-p/2}$. Oversampling the sensitivity scores by this amount guarantees that the increased scores exceed the Lewis weights, which in turn implies a sampling complexity of $\tilde O(\eps^{-2}\mathfrak Sd^{1-p/2})$ to be sufficient. We show in \cref{thm:lowerbound_informal} that their bound is tight up to polylogarithmic factors in the worst case, when $\mathfrak S=\Theta(d)$, if we sample according to pure $\ell_p$ leverage scores.

This leads us to our main result, \cref{thm:samplingthm_informal}, which samples for any value of $p\in[1,2]$ a number of $\tilde O(\eps^{-2}(\mathfrak S^{(p)} + \mathfrak S^{(2)})) = \tilde O(\eps^{-2}(\mathfrak S^{(p)} + d))$ many samples according to a mixture of $\ell_p$, and $\ell_2$ sensitivities with a uniform distribution $1/n$. This technique relies only on $\ell_p$ sensitivity sampling and in the worst case, the number of samples amounts to $\tilde O(\eps^{-2}d)$ for all $p\in[1,2]$. We note that this matches up to polylogarithmic factors the optimal complexities obtained by Lewis weights, and by other novel sampling probabilities \citep{CohenP15,WoodruffY23lewis,JambulapatiLLS23}.
See \cref{fig:samplecomplexity} for a visual comparison of our bounds with previous sensitivity sampling bounds.

\begin{figure}[!ht]
\vskip 0.04in
\begin{center}
Sample complexity bounds for $\ell_p$ sensitivity sampling\\
\includegraphics[width=.4\textwidth]{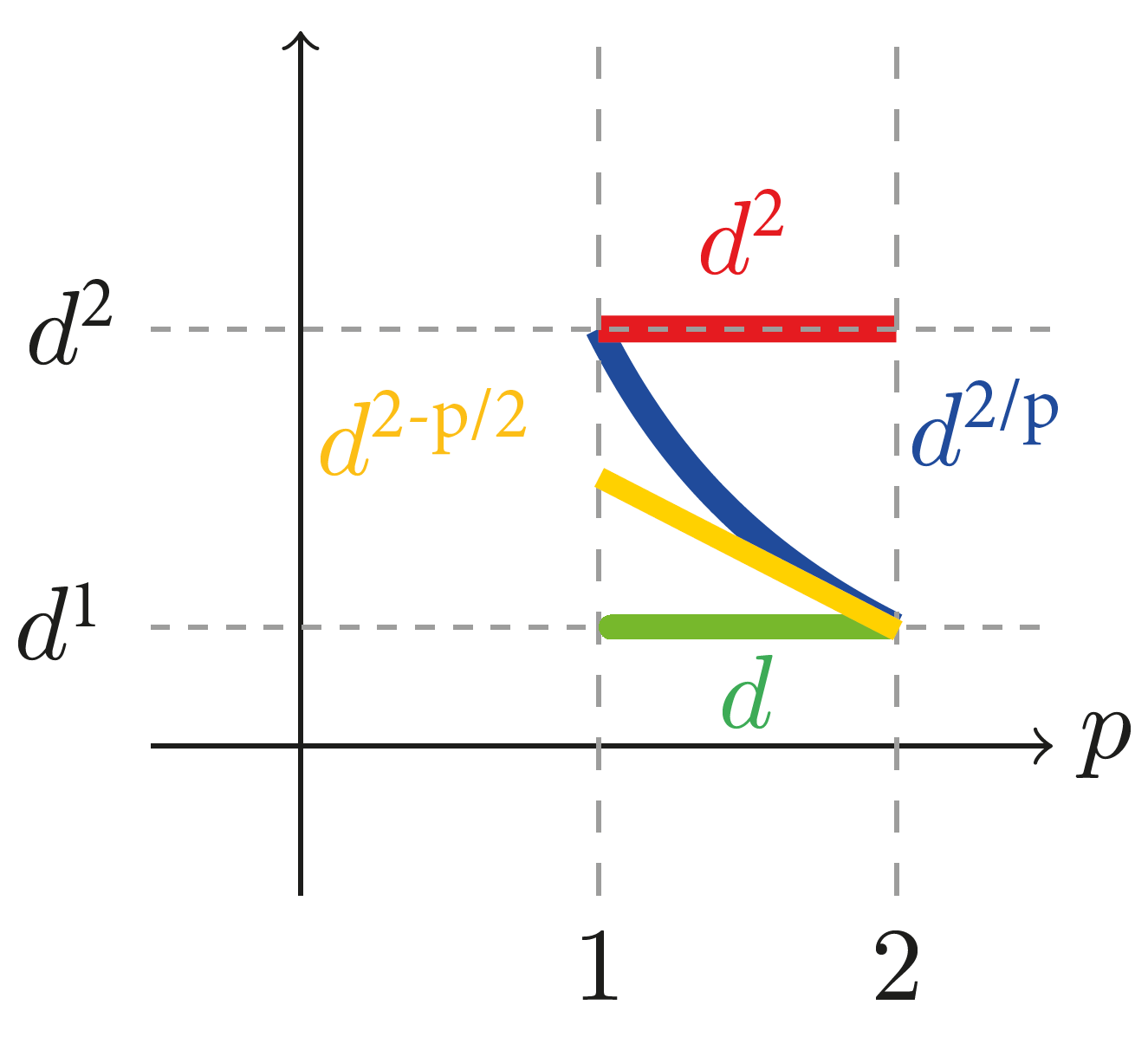}
\vskip -0.1in
\caption{Leading dependence on $d$ for $\ell_p$ sensitivity sampling for $p\in[1,2]$ in the worst case, i.e., when $\mathfrak S^{(p)}=d$. The horizontal axis represents $p$. The vertical axis indicates the exponent on $d$ in the respective sample complexity results. The red line indicates the standard bounds obtained from a plain application of the sensitivity framework \citep{FeldmanSS20}, blue indicates the result of \citet{woodruffyasuda23}, yellow indicates the result of \citet{ChenD21}, and green indicates our new main result.}
\label{fig:samplecomplexity}
\end{center}
\vskip -0.2in
\end{figure}

In particular, note that our bounds improve the $d^2$ respectively $d^{3/2}$ dimension dependence of previous bounds for $p=1$ to linear. This allows an application to logistic regression (\cref{thm:logistic_informal}), obtaining up to polylogarithmic factors a sampling bound of $\tilde O(\eps^{-2}d \mu )$. Previously, the linear dependence on $d$ was only known to be possible using $\ell_1$ Lewis weights \citep{MaiMR21,WoodruffY23lewis}. In fact our result even improves over their $\tilde O(\eps^{-2}d \mu^2 )$ bound by a factor of $\mu$, a complexity parameter introduced by \citet{MunteanuSSW18,MunteanuOP22} for compressing data in the scope of logistic regression and other asymmetric functions.
We note that linear $\mu$ and (near-)linear $d$ dependence was recently achieved in the sketching regime, though at the cost of constant approximation factors \cite{MunteanuOW23}.
We remark that our polylogarithmic dependencies hidden in the $\tilde O$ notation are only $\operatorname{polylog}(\mu,d,\eps^{-1},\delta^{-1})$ and do not depend on $n$, which is also a minor improvement compared to almost all mentioned previous works. 

Our paper assumes that we have access to $\ell_p$ sensitivity scores, without giving details on how to compute or approximate them. We refer to \citep{DasguptaDHKM09,WoodruffZ13,ClarksonDMMMW16,MunteanuOP22} for classic techniques such as ellipsoidal rounding and well-conditioned bases, as well as to recent advances in constructing improved well-conditioned bases \citep{WangW22}, novel $\ell_p$ spanning sets \citep{WoodruffY23spanningset,BhaskaraMV23}, or direct $\ell_p$ sensitivity approximations \citep{PadmanabhanWZ23}. We also refer to \citep{Mahabadi2020,MunteanuO24turnstile} for $\ell_p$ sensitivity sampling in data streams.

One might argue that the sensitivity sampling approach is not very interesting for $p\in[1,2]$, since Lewis weights, among others, already obtain optimal bounds in this regime. However, leverage scores are usually simpler to calculate or to approximate. For instance \citet{CohenP15,MaiMR21} calculate an approximation to Lewis weights by recursively reweighting the data and computing $\ell_2$ leverage scores $O(\log\log(n))$ times over and over again. While the factor $O(\log\log(n))$ overhead seems minor from a theoretical perspective, this slows down computations by a non-negligible amount. We refer to \citep{MaiMR21,MunteanuOP22} for details, where computational issues have been discussed and demonstrated in experiments. 

The experiments in \citep{MaiMR21,MunteanuOP22} also suggest that sensitivity sampling works much better than indicated by upper bounds, sometimes even better than Lewis weights. It is thus very important to find a theoretical explanation for the success of sensitivity sampling and to find out whether they also achieve the optimal complexity or if there are lower bounds preventing them from achieving optimality. This is the motivation behind our work.

We would like to mention that very similar findings have been observed in the center-based clustering regime, where group sampling was known to produce subsamples of optimal size \citep{Cohen-AddadSS21,Cohen-AddadLSSS22,HuangLW22}. But group sampling was often outperformed by the conceptually and computationally simpler sensitivity sampling approach on practical and on hard instances \citep{SchwiegelshohnS22}. Independently of our work, this was explained in \cite{bansal2024sensitivity} by proving that sensitivity sampling also achieves optimal subsample size for $k$-means and $k$-median clustering.

Further, $\ell_p$ sensitivity sampling has already been studied for a plethora of problems such as $\ell_p$ regression \citep{DasguptaDHKM09}, $M$-estimators \citep{ClarksonW15,ClarksonW15focs}, near-convex functions \citep{TukanMF20}, logistic regression \citep{MunteanuSSW18}, other generalized linear models such as probit regression \citep{MunteanuOP22}, Poisson and graphical models \citep{MolinaMK18}, IRT models in educational sciences and psychometrics \citep{FrickKM24}. Also, some seemingly more distant works have strong connections to sensitivity sampling, such as graph sparsification using effective resistances \citep{SpielmanS11}. Our new optimal bounds for all $p\in[1,2]$ cover the most common regime encountered in all of these works and will thus be useful in obtaining improved bounds for a broad array of applications as well.

\section{New sensitivity subsampling bounds}
Our analysis uses several results of \citet{woodruffyasuda23}, and our main argumentation follows their general outline. Since our analysis is in wide parts a strict generalization, the worst possible outcome of our investigations would simply resemble their exact same bounds. As we have indicated previously, this is actually the case for $p>2$ in which our techniques do not improve over their bounds. The corresponding part of the analysis is therefore not contained in our paper and we focus on the $p\in[1,2]$ regime in the remainder.

One main technical argument in this regime is that by monotonicity of maximal $\ell_p$ sensitivities, the largest $\ell_2$ leverage score upper bounds the largest $\ell_p$ leverage score. To leverage this fact, previous analyses relied on an auxiliary subspace embedding for obtaining a constant factor subspace embedding that required $\operatorname{poly}(d)$ overhead.

In our analysis, we bypass this problem by adding the $\ell_2$ leverage scores to the sensitivity upper bound that defines our sampling probabilities. Intuitively, this allows us to obtain the subspace embedding guarantee for $\ell_2$ and $\ell_p$ simultaneously: the $\ell_2$ subspace embedding is known to hold already for small sample size $\tilde O(\eps^{-2}d)$. Taking it from here, it enables the $\ell_p$ subspace embedding to work with little more samples. Fortunately, this overhead is negligible compared to the small sample already taken, and also smaller compared to the previous bounds to obtain the $\ell_p$ subspace embedding directly.
Note that also reversely, the data matrix might have much smaller total $\ell_p$ sensitivity than $d$. In this case, augmenting the sample to at least the rank preserving lower bound can be accomplished with the least number of additional samples by $\ell_2$ sensitivity sampling.

The main reason why the previous analyses do not admit a simultaneous $\ell_2$ and $\ell_p$ subspace embedding, is that they tend to fold the weights into the data (or into the sampling matrix), rather than keeping the weights separately. This is a nice trick which simplifies wide parts of previous analyses by reducing the weighted case to the unweighted case. However, it prevents from our goals as we show in the following simple, yet instructive example for simultaneous $\ell_2$ and $\ell_1$ embedding, which requires to store the weights separately:

Take $A \in \mathbb{R}^{n \times 1}$ to be the matrix consisting of $n$ copies of the row vector $1$.
Note that for $x=1$ we have that $\norm{Ax}_1=n$ and $\norm{Ax}_2^2={n}$.
We wish to construct a subspace embedding, preserving both norms up to a factor of two.
To this end, assume that we have a reduced and reweighted matrix $A'\in\mathbb{R}^{m \times 1}$ with $\norm{A'x}_1\geq n/2 $.
Then we also have that $\norm{A'x}_2^2\geq \norm{A'x}^2_1/{m} \geq n^2/(4{m}) $.
Now, we also require that $\norm{A'x}_2^2 \leq 2{n} $. Combining both inequalities implies that $ 2{n} \geq n^2/(4{m})$ which is equivalent to $m \geq n/8 $.
We conclude that any subspace embedding without auxiliary weights that preserves both, the $\ell_1$, and the $\ell_2$ norm up to a factor of $2$ requires at least $\Omega(n)$ rows.

In stark contrast to this impossibility result, using probabilities $p_i \geq l_i^{(1)}+l_i^{(2)}$, standard sensitivity sampling allows us to take $\tilde O(\eps^{-2}d^2)$ samples $S$ and reweight them by $w$ such that $\sum_{i\in S} w_i|a_i x|=(1\pm\eps)\|A x\|_1$, and $\sum_{i\in S} w_i|a_i x|^2=(1\pm\eps)\|A x\|_2^2$ hold simultaneously.

To leverage this fact and improve the sampling complexity to linear, we need to open up and generalize large parts of the work of \citet{woodruffyasuda23} to deal with weighted $\ell_p$ norms, which we define as follows:

\begin{definition}
Given a vector $v \in \mathbb{R}^n$, $p \in [1, \infty)$ and weights $w \in \mathbb{R}_{\geq 1}^n$, we let
\[
    \norm{v}_{w, p}:= \left( \sum_{i=1}^n w_i |v_i|^p \right)^{1/p}.
\]
For the $\ell_{\infty}$ norm, we also let
\[
     \norm{v}_{w, \infty}:= \lim_{p \rightarrow \infty}\norm{v}_{w, p} = \max_{i \in [n]}|v_i|.
\]
and
\[
    \norm{v}_{w, \infty, p}:= \max_{i \in [n]}|w_i^{1/p} v_i|
\]
\end{definition}

\subsection{Outline of the analysis}
The full formal details can be found in the appendix. Here we provide an outline for the proofs of our main results. We note that some definitions or notation might slightly differ from the appendix for the sake of a clean and concise presentation. The proof consists of several main steps for which we give high level descriptions:

\subsubsection{Bounding by a Gaussian process} The first step is to bound the approximation error by a Gaussian process. Note that the subspace embedding guarantee of \cref{approx_guarantee} will be achieved when 
\begin{align*}
    \Lambda := \sup\nolimits_{x \in \mathbb{R}^d}|f(Ax)-f_w(SAx)| \leq \varepsilon f(Ax),
\end{align*}
where $f_w(Ax)=\sum\nolimits_{i=1}^n w_i h(a_i x)$, and $S\in\{0,1\}^{m\times n}$ is a sampling matrix with exactly one non-zero entry in each row such that $SA$ consists of $m$ rows out of $n$ from the data matrix $A$. By homogeneity of the loss functions, we can restrict the analysis to the case that $f(Ax)=1$, and our goal is to bound the above term by $\epsilon$.

To this end, we bound higher moments of the expected error using a standard symmetrization and Gaussianization argument by 
\[
\E_S |\Lambda|^l \leq 
    (2\pi)^{l/2}\E_{S, g} \sup\nolimits_{f(Ax)=1} \left|\sum_{i \in S}g_i w_i h(a_i x) \right|^l ,
\]
where the expectation is taken over the random subsample, represented by $S$, and on the right hand side additionally over i.i.d. standard Gaussians $g_i\sim N(0,1)$.

The sum on the right hand side is a Gaussian process that induces a pseudo metric, such that for all $y=SAx,y'=SAx'$, we have $d_X(y, y')\coloneqq$
\[
    \left(\E_{g \sim \mathcal{N}(0, I_{m})} \left(\sum_{i \in S}  g_i w_i h(y_i) - \sum_{i \in S}g_i w_i h( y_i'))  \right)^2\right)^{1/2}
\]
acting in the reduced and reweighted space.

For bounding the Gaussian process, we use a slightly adapted moment bound of \citet{woodruffyasuda23}
\[
     \E[\abs{\Lambda}^l] \leq (C\mathcal E)^l (\mathcal E/\mathcal D) + O(\sqrt l \mathcal D)^{l}.
\]
which follows from a tail bound due to Sudakov, which is sometimes attributed to Dudley. See \citep{Dudley2016,LT1991} for bibliographical discussion and references. 

Here, $C$ is a sufficiently large absolute constant, $\mathcal E$ is an upper bound on the entropy of the Gaussian process, and $\mathcal D$ is an upper bound on the diameter of the set $T_S\coloneqq T_S(A)=\{SAx \mid x\in \mathbb{R}^d, f(Ax)=1\}$ according to the pseudo-metric, i.e., $\sup\{d_X(y,y')\mid y,y'\in T_S(A)\}\leq \mathcal D.$

Note, that we can accomplish our goal by using the moment bound for an appropriately large choice of $l$ and applying Markov's inequality. Our remaining task thus reduces to bounding the diameter, and the entropy, and to quantify the required value of $l$, which will also determine a sufficient subsample size.

\subsubsection{Bounding the diameter}
We bound the diameter by relating it to the approximation error and to the largest possible coordinate in the reduced and reweighted $\ell_p$ norm vector among all vectors that satisfy $f(Ax)=1$. 
More specifically, let
\begin{align*}
    \sigma &:=\sup\nolimits_{f{(Ax)}=1} \norm{S A x}_{w, \infty, p}^p
\end{align*}
Note that $\sigma$ is similar to the largest $\ell_p$ leverage score.
Also define 
\[
    G := 1+\sup\nolimits_{f(Ax)=1}|f_{w}(SAx)-f(Ax)|.
\]
Then we prove that the diameter with respect to the pseudo metric is bounded by 
\[
    \diam(T_S(A)) \leq 4 (G \sigma)^{1/2}.
\]

We remark here, that this requires special care when treating the $p$-ReLU function $h(t)=\max\{0,t\}^p$ so not to loose additional $\mu$ factors unnecessarily. But in the final step, it can simply be upper bounded by the proper norm function since $\max\{0,t\}^p\leq |t|^p$ for all $t\in \mathbb{R}$. Thus, the same diameter bound applies to both functions without additional $\mu$ dependence.

\subsubsection{Bounds on covering numbers}
Before we can proceed with bounding the entropy of the Gaussian process, we first need to bound the smallest number of (weighted) $\ell_q$ balls of certain radius $t$ that are required to cover $\ell_p$ balls, for various values of $p,q\in [1,\infty)$.

To this end, we define balls according to the weighted norms.
Given a Matrix $A'\in \mathbb{R}^{m \times d}$, a weight vector $w \in \mathbb{R}^n$ and $q \geq  1$, we set 
\[
    B_w^q:=B_w^q(A')=\{ x \in \mathbb{R}^d ~|~ \norm{A'x}_{w, q} \leq 1 \}.
\]
For any $p,q\geq 1$ and $t > 0$, let the covering number $E(B_w^p,B_w^q,t)$ denotes the minimum cardinality of a set $N$ of $B_w^q$ balls of radius $t$ required to cover the unit $B_w^p$ ball. That is, $N$ is chosen such that for any $x \in B_w^p$ there exists $x' \in N$ with $\|x-x'\|_{w,q} \leq t$. This enables chaining arguments to construct a sequence of $t$-nets at different scales, which can be smaller than using one single $\eps$-net, i.e., one fixed scale for the entire space. See \citet{Nelson16} for a survey on chaining techniques and applications.

To bound the covering numbers, we aim at applying a so-called Dual-Sudakov-minoration result \citep[see][]{BLM1989}. Let $\norm*{\cdot}_X$ be a norm, and let $B\subseteq\mathbb R^d$ denote the Euclidean unit ball in $d$ dimensions. Then,
\[
    \log E(B, \norm*{\cdot}_X, t) \leq O(d)\frac{M_X^2}{t^2}
\]
where $M_X$ denotes the L{\'e}vy mean
\[
    M_X \coloneqq \frac{\E_{g\sim\mathcal N(0,I_d)}\norm*{g}_X}{\E_{g\sim\mathcal N(0,I_d)}\norm*{g}_2}.
\]
It is well known that the denominator is $\Theta(\sqrt d)$, therefore the previous bound reduces to 
\[
   \log E(B, \norm*{\cdot}_X, t) \leq O(1)\frac{(\E_{g\sim\mathcal N(0,I_d)}\norm*{g}_X)^2}{t^2} .
\]
A very important step in the proof is thus the following bound, which also required to be reproved to account for the weighted norm
for $q\geq 2$.

Consider an orthonormal matrix $A'\in\mathbb R^{m\times d}$ and $w \in \mathbb{R}_{\geq 1}^n$. Let $\tau \geq \max_{i=1}^m w_i\norm*{e_i^T A'}_2^2$ be the largest weighted $\ell_2$ leverage score. Then we have that 
\[
    \E_{g\sim\mathcal N(0,I_d)}\norm{A'g}_{w, q} \leq m^{1/q}\sqrt{q\cdot \tau}.
\]
Here is also where the $\ell_2$ leverage scores play a role in our bounds, and will later be used together with the $\ell_p$ leverage scores in order to balance the diameter bound with the entropy bound.

This bound allows us to control the covering numbers for various $\ell_q$ norms, including the $\ell_\infty$ norm for the respective weighted balls. In particular, we obtain bounds on the number of weighted $\ell_\infty$ balls required to cover weighted $\ell_p$ balls, by first covering $B_w^p$ using $B_w^2$ balls and then covering each $B_w^2$ ball again with $B_w^{\infty}$ balls. Applying a chaining technique using a telescoping sum over varying scales, yields a bound of roughly
\begin{align}\label{informal_covering_bound}
    \log E(B_w^p,d_X,t) = O(1)\frac{\tau \log m}{t^p}.
\end{align}

\subsubsection{Bounding the entropy}
For bounding the entropy, we need to control the following quantity
\[
    \int_0^\infty \sqrt{\log E(T_S, d_X, t)}~\mathrm{d}t.
\]
To this end, we first derive the following final covering bounds: one for small $t$ with a logarithmic dependence on $t^{-1}$ and a different bound for larger $t$ with a squared dependence on $t^{-1}$ but lower dependence on $d$.
\begin{align*}
    &1)~ \log E(T_S, d_{X}, t) \leq O(d) \log\left( \frac{G m}{t} \right),\\
    &2)~ \log E(T_S, d_{X}, t) \leq O(\log m) \frac{G^2 \tau}{t^2}.
\end{align*}
The first item follows by relating to the unweighted case, where a simple net construction suffices, i.e., 
\begin{align*}
    \log E(B_w^p, B_w^\infty, t) &\leq \log E(B^p, B^\infty, t) \leq  O(d) \log\left( \frac{G m}{t}\right).
\end{align*}
In fact, this is the only place in our proof where the weighted case can simply be reduced to the unweighted case. 
The second bound follows by applying the previous \cref{informal_covering_bound}.

We split the entropy integral at an appropriate point $\lambda$ into 
\[
    \int_0^\lambda \sqrt{\log E(T_S, d_X, t)}~\mathrm{d}t+\int_\lambda^{\mathcal D} \sqrt{\log E(T_S, d_X, t)}~\mathrm{d}t
\]
where the latter can be cut off at our previous diameter bound $\mathcal D$ because when the integral exceeds the diameter, it becomes $0$.

The two parts of the integral can now be bounded using the covering bounds $1)$ respectively $2)$ from above. That is, for small radii less than $\lambda$ we use the first bound and for radii larger than $\lambda$, we use the the second bound.

Choosing the right value for $\lambda$ so as to keep both terms appropriately small, we obtain the following entropy bound:
\[
    \int_0^\infty \sqrt{\log E(T_S, d_{X}, t)}\mathrm{d}t \leq O(G \tau^{1/2}){(\log m)}^{1/2}\log\frac{d\sigma}{\tau}
\]
Note, in particular the dependence on the weighted largest $\ell_2$ leverage score $\tau$ which will be crucial to balance the diameter with the entropy bound in the main proof.

\subsubsection{Outline of the main proof}
We have now worked out all pieces that we need in order to prove our main result given in \cref{thm:samplingthm_informal}. Again, we refer to the appendix for the full technical details. Here, we present a sketch of the final proof:

Let us begin with the sample size $m$. This is handled in a standard way by defining an indicator random variable that attains $X_i=1$ if row $i$ is in the sample and otherwise it attains $X_i = 0$. The expected size equals 
\[
    \E\left(\sum_{i=1}^n X_i \right)= \sum_{i=1}^n p_i = \alpha\left(1+\mu \sum_{i=1}^n l_i^{(p)} + \sum_{i=1}^n l_i^{(2)}\right).
\]
Since $d\geq 1$, $\sum_{i=1}^n l_i^{(p)}\leq d$, and $\sum_{i=1}^n l_i^{(2)}=d$, we can thus bound the expected size by 
\[
    \alpha d \leq \E\left(\sum_{i=1}^n X_i \right) \leq 3\alpha \mu d.
\]
An application of Chernoff's bounds yields in particular that $m\leq 6\alpha d \mu$ holds with probability at least $1-\delta$.

Next, note that by our choice of $\alpha$ it holds that $m\geq \tilde O(d+\log(1/\delta))$, and each sample is taken with probability $p_i \geq {l_i^{(2)}}$. This is sufficient to achieve the $ \ell_2$ norm subspace embedding up to a factor $1/2$ with probability at least $1-\delta$ \citep{Mahoney11}.

It allows us to relate $\tau$ to the largest weighted $\ell_2$ leverage score of the original matrix, rather than the subsample, i.e.,
\[
    \tau \leq 4\max_{i\in [n]} w_i l_i^{(2)}.
\]
We assume without loss of generality that $0 < p_i < 1 $ and thus noting that $\alpha l_i^{(2)} < p_i < 1$, we have that $ w_i l_i^{(2)}= l_i^{(2)}/p_i \leq 1/\alpha$ and thus $\tau \leq 4/\alpha$. A very similar argument yields $\sigma \leq 1/\alpha$. Consequently, we have that $$\sigma\approx\tau\approx 1/\alpha.$$

Plugging this into our diameter bound, we obtain
\[
    4(G  \sigma)^{1/2}
    \leq 8 (G/\alpha)^{1/2}
    \leq G \frac{\varepsilon}{2 \sqrt{l}} := \mathcal D,
\]
where $l=\Theta(\eps^2 \alpha)$ for a suitable constant.

Plugging this into the entropy integral bound, we obtain
\[
    \int_0^\infty \sqrt{\log E(T_S, d_X, t)}~\mathrm{d}t \leq G \varepsilon/8 := \mathcal E
\]
Finally, we found bounds for $\mathcal D$, and $\mathcal E$ which are suitably balanced and allow us to apply the moment bound of \citet{woodruffyasuda23}, which yields 
\[
     \E[\abs{\Lambda}^l] \leq (C'\mathcal E)^l (\mathcal E/\mathcal D) + O(\sqrt l \mathcal D)^{l}\leq C^l\eps^l\delta,
\]
for suitably large absolute constants $C'\leq C$.

Using this higher moment bound in an application of Markov's inequality, we get that $|\Lambda| \leq C \varepsilon$ holds with probability at least $1-\delta$, since
\[
    \Pr(|\Lambda| \geq C \varepsilon) = \Pr(|\Lambda|^l \geq C^l \varepsilon^l) \leq \frac{C^l \varepsilon^l \delta}{C^l \varepsilon^l} = \delta.
\]
This concludes the proof by taking a union bound over the three probabilistic events, and rescaling $\eps$ and $\delta$ respectively.

\section{Application to logistic regression}
Here we provide an outline and some high level intuition behind the proof of our second result given in \cref{thm:logistic_informal}.

The logistic loss function is given by
\[
    f(Ax)\coloneqq \sum_{i=1}^n \ln(1+\exp(a_ix)),
\]
so in our previous notation we have to deal with individual loss functions $h(r)=\ln(1+\exp(r))$.

Unfortunately, $f$ does not fully satisfy the assumptions of our main theorem. Therefore, we cannot apply \cref{thm:samplingthm_informal} directly.
Instead, we observe that $f$ can be rewritten in terms of the coordinate-wise ReLU function and the remainder.

We thus split $f$ into two parts $f(Ax)=f_1(Ax)+f_2(Ax)$ as follows:
\begin{align*}
    f(Ax)&=\sum_{i=1}^n \ln(1+\exp(a_ix)) \\
    &=\sum_{i=1}^n \ln(1+\exp(-|a_ix|)) + \sum_{i=1}^n  \max\{0, a_i x\} \\
    &=\sum_{i=1}^n h_1(a_ix) + \sum_{i=1}^n  h_2(a_i x)\\
    &=f_1(Ax) + f_2(Ax).
\end{align*}
Using this split, we show that taking a sample $S$ with probabilities $p_i=\min\{1, \alpha( \mu l_i^{(1)} + l_i^{(2)} + \mu d/n))\}$ where $\alpha= O(\frac{\log^3(\mu d \log(\delta^{-1})/\varepsilon)+\ln(\delta^{-1})}{\varepsilon^2})$ preserves the logistic loss function for all $x\in\mathbb{R}^d$ up to a relative error of at most $\varepsilon$. Note, that we oversample only by a linear factor $\mu$.

Indeed, this allows the $p$-ReLU function $h_2(r)=\max\{0,r\}^p$, with $p=1$ to be handled by a direct application of our main result, which yields 
\[
    \forall x\in \mathbb{R}^d\colon \left|f_2(Ax) - f_{2,w}(SAx)\right| \leq \eps f_2(Ax).
\]
The remaining part $f_1(Ax)$ is a bounded function, and can be handled by the uniform part of our sample. We note that this can be proven by a simple additive concentration bound, and charging the additive error by the optimal cost.

We take a slight detour using the standard sensitivity framework, which allows us to draw from existing previous work and saves a lot of technicalities. To this end, we restrict the function $h(r)$ to the negative domain to obtain $\tilde{h}\colon\mathbb{R}_{\geq 0}\rightarrow \mathbb{R}_{\geq 0}$ with $\tilde{h}(r)=\ln(1+\exp(-r))$.

Now, observe that $f_1$ can be rewritten as 
\[
    f_1(Ax) = f_1^{(1)}(Ax) + f_1^{(2)}(Ax) - 2 \frac{n}{\mu},
\]
where
\[
    f_1^{(1)}(Ax) = \frac{n}{\mu}+ \sum_{a_ix \geq 0} \tilde{h}(a_ix)
\]
and
\[
    f_1^{(2)}(Ax) = \frac{n}{\mu}+ \sum_{a_ix < 0} \tilde{h}(-a_ix).
\]
Next, we observe that since $\tilde{h}(r)\leq \tilde{h}(0)=\ln(2) <1$, and $f_1^{(1)},f_1^{(2)}\geq \frac{n}{\mu}$, we have that each sensitivity for both of the two functions is bounded by $\varsigma_i\leq \frac{\mu}{n}$, and the total sensitivity is $O(\mu)$.

Using the strict monotonicity of both functions, we can relate the VC dimension of a set system associated with the two functions to the VC dimension of affine hyperplane classifiers. Using a thresholding and rounding trick, this yields a final VC dimension bound of $O(d\log(\mu/\eps))$. These arguments are standard from recent literature, see \citet{MunteanuSSW18,MunteanuOP22} for details.

Putting the VC dimension and sensitivity bounds into the standard sensitivity framework and defining the approximate functions
\[
    f_{1,w}^{(1)}(SAx) = \frac{n}{\mu}+ \sum_{a_ix \geq 0} w_i \tilde{h}(a_ix)
\]
and similarly
\[
    f_{1,w}^{(2)}(SAx) = \frac{n}{\mu}+ \sum_{a_ix < 0} w_i \tilde{h}(-a_ix).
\]
yields for both $i\in\{1,2\}$ separately that
\[
    \forall x\in \mathbb{R}^d\colon \left|f_1^{(i)}(Ax) - f_{1,w}^{(i)}(SAx)\right| \leq \eps f_1^{(i)}(Ax),
\]
each with probability at least $1-\delta$.

Now, by a union bound, and using the triangle inequality we can put all three approximations together, which yields
\begin{align*}
        \bigg|f(Ax)-f_w(SAx)\bigg|&=\left| f_1^{(1)}(Ax) + f_1^{(2)}(Ax)-\frac{2n}{\mu} + f_2(Ax)  - f_{1, w}^{(1)}(SAx) - f_{1,w}^{(2)}(SAx) + \frac{2n}{\mu}  - f_{2, w}(SAx)  \right|\\
        &\leq \left|f_1^{(1)}(Ax) - f_{1,w}^{(1)}(SAx)  \right| + \left|f_1^{(2)}(Ax) - f_{1,w}^{(2)}(SAx) \right| + \left|f_2(Ax) - f_{2, w}(SAx) \right| \\
        &\leq \varepsilon \left(f_1^{(1)}(Ax)+ f_1^{(2)}(Ax) + f_2(Ax)\right)\\
        &\leq \varepsilon f(Ax)+2\eps \frac{n}{\mu}\\
        &\leq 3\varepsilon f(Ax).
\end{align*}
with probability at least $1-3\delta$. We conclude the proof by rescaling $\eps$, and $\delta$.

As a final remark, previous work aimed at approximating the coordinate-wise ReLU function $f_2(Ax)$ by bounding the error additively to within an $\eps$ fraction of the $\ell_1$ norm and then using a rescaled $\eps'=\eps/\mu$ to relate the $\ell_1$ norm back to the ReLU function.

Since the dependence on $\eps$ is typically quadratic, this approach is prone to a $\mu^2$ factor in the final sample size.
Our direct approximation of the ReLU function handled via \cref{thm:samplingthm_informal} requires oversampling by only $\mu l_i^{(p)}$, which results in a linear $\mu$ dependence for logistic regression as well.

\section{Conclusion and open directions}
In this paper, we resolve the sample complexity of $\ell_p$ subspace embedding via $\ell_p$ sensitivity sampling for all $p\in[1,2]$. 

Specifically, our work establishes new $\tilde\Omega(d^{2-p/2})$ lower bounds against pure $\ell_p$ leverage score sampling, showing that upper bounds implied by previous work of \citet{ChenD21} are tight in the worst case up to polylogarithmic factors.

By generalizing the approach of \citet{woodruffyasuda23} to deal with weighted norms and augmenting $\ell_p$ sampling probabilities with $\ell_2$ leverage scores, our work strengthens previous upper bounds of \citet{ChenD21,woodruffyasuda23} for all $p\in[1,2]$ to linear $\tilde O(\eps^{-2}(\mathfrak S^{(p)}+d))$ sample complexity, matching known $\tilde\Omega(d/\eps^2)$ lower bounds by \citet{LiWW21} in the worst case.

In particular, this resolves an open question of \citet{woodruffyasuda23} in the affirmative for $p \in [1,2]$ and brings the conceptually and computationally simple sensitivity subsampling approach into the regime that was previously only known to be possible using Lewis weights \citep{CohenP15}, or other alternatives \cite{JambulapatiLLS23}.

As an application of our results, we also obtain the first fully linear $\tilde O(\eps^{-2}\mu d)$ bound for approximating logistic regression, obtained via a special treatment of the $p$-generalization of the ReLU function, and improving over a previous $\tilde O(\eps^{-2}\mu^2 d)$ bound \citep{MaiMR21} as well as over $\tilde O(\eps^{-2}\mu d^c)$ bounds with $c\gg 1$. Our $p$-generalization of the ReLU function suggests similar improvements for $p$-generalized probit regression \citep{MunteanuOP22}, which we leave as an open problem.

We note that in the case of $p>2$, our generalization does not yield an improvement over the previous bounds of \citet{woodruffyasuda23}. Therefore, obtaining $\tilde O(\eps^{-2}(\mathfrak S^{(p)}+d))$ bounds or any improvement towards that goal remains an important open problem for future research. Further, we hope that our methods will help to improve sensitivity sampling bounds for other, more general loss functions, and distance based loss functions beyond $\ell_p$ norms.

\section*{Acknowledgements}
The authors would like to thank the anonymous reviewers of ICML 2024 for very valuable comments and discussion. This work was supported by the German Research Foundation (DFG), grant MU 4662/2-1 (535889065), and by the Federal Ministry of Education and Research of Germany (BMBF) and the state of North Rhine-Westphalia (MKW.NRW) as part of the Lamarr-Institute for Machine Learning and Artificial Intelligence, Dortmund, Germany. Alexander Munteanu was additionally supported by the TU Dortmund - Center for Data Science and Simulation (DoDaS).

\bibliography{arxivmain}

\begin{thebibliography}{}

\bibitem[Bansal et~al., 2024]{bansal2024sensitivity}
Bansal, N., Cohen-Addad, V., Prabhu, M., Saulpic, D., and Schwiegelshohn, C.
  (2024).
\newblock Sensitivity sampling for $k$-means: Worst case and stability optimal
  coreset bounds.
\newblock {\em CoRR}, abs/2405.01339.

\bibitem[Bhaskara et~al., 2023]{BhaskaraMV23}
Bhaskara, A., Mahabadi, S., and Vakilian, A. (2023).
\newblock Tight bounds for volumetric spanners and applications.
\newblock In {\em Advances in Neural Information Processing Systems 36
  {(NeurIPS)}}.

\bibitem[Bourgain et~al., 1989]{BLM1989}
Bourgain, J., Lindenstrauss, J., and Milman, V. (1989).
\newblock Approximation of zonoids by zonotopes.
\newblock {\em Acta Math}, 162(1-2):73–141.

\bibitem[Chen and Derezinski, 2021]{ChenD21}
Chen, X. and Derezinski, M. (2021).
\newblock Query complexity of least absolute deviation regression via robust
  uniform convergence.
\newblock In {\em Conference on Learning Theory {(COLT)}}, pages 1144--1179.

\bibitem[Clarkson et~al., 2016]{ClarksonDMMMW16}
Clarkson, K.~L., Drineas, P., Magdon{-}Ismail, M., Mahoney, M.~W., Meng, X.,
  and Woodruff, D.~P. (2016).
\newblock The fast {C}auchy transform and faster robust linear regression.
\newblock {\em {SIAM} Journal on Computing}, {45}(3):763--810.

\bibitem[Clarkson and Woodruff, 2015a]{ClarksonW15focs}
Clarkson, K.~L. and Woodruff, D.~P. (2015a).
\newblock Input sparsity and hardness for robust subspace approximation.
\newblock In Guruswami, V., editor, {\em {IEEE} 56th Annual Symposium on
  Foundations of Computer Science {(FOCS)}}, pages 310--329.

\bibitem[Clarkson and Woodruff, 2015b]{ClarksonW15}
Clarkson, K.~L. and Woodruff, D.~P. (2015b).
\newblock Sketching for \emph{M}-estimators: {A} unified approach to robust
  regression.
\newblock In {\em Proceedings of the 26th Annual {ACM-SIAM} Symposium on
  Discrete Algorithms {(SODA)}}, pages 921--939.

\bibitem[Cohen and Peng, 2015]{CohenP15}
Cohen, M.~B. and Peng, R. (2015).
\newblock $\ell_p$ row sampling by {L}ewis weights.
\newblock In {\em Proceedings of the Forty-Seventh Annual {ACM} on Symposium on
  Theory of Computing {(STOC)}}, pages 183--192.

\bibitem[Cohen{-}Addad et~al., 2022]{Cohen-AddadLSSS22}
Cohen{-}Addad, V., Larsen, K.~G., Saulpic, D., Schwiegelshohn, C., and
  Sheikh{-}Omar, O.~A. (2022).
\newblock Improved coresets for {E}uclidean k-means.
\newblock In {\em Advances in Neural Information Processing Systems 35
  {(NeurIPS)}}.

\bibitem[Cohen{-}Addad et~al., 2021]{Cohen-AddadSS21}
Cohen{-}Addad, V., Saulpic, D., and Schwiegelshohn, C. (2021).
\newblock A new coreset framework for clustering.
\newblock In {\em 53rd Annual {ACM} {SIGACT} Symposium on Theory of Computing
  {(STOC)}}, pages 169--182.

\bibitem[Dasgupta et~al., 2009]{DasguptaDHKM09}
Dasgupta, A., Drineas, P., Harb, B., Kumar, R., and Mahoney, M.~W. (2009).
\newblock Sampling algorithms and coresets for $\ell_p$ regression.
\newblock {\em {SIAM} J. Comput.}, 38(5):2060--2078.

\bibitem[Dudley, 2016]{Dudley2016}
Dudley, R.~M. (2016).
\newblock {V.N.} {S}udakov's work on expected suprema of {G}aussian processes.
\newblock In Houdr{\'e}, C., Mason, D.~M., Reynaud-Bouret, P., and
  Rosi{\'{n}}ski, J., editors, {\em High Dimensional Probability VII}, pages
  37--43. Springer International Publishing.

\bibitem[Feldman and Langberg, 2011]{FeldmanL11}
Feldman, D. and Langberg, M. (2011).
\newblock A unified framework for approximating and clustering data.
\newblock In {\em Proceedings of the 43rd {ACM} Symposium on Theory of
  Computing {(STOC)}}, pages 569--578.

\bibitem[Feldman et~al., 2020]{FeldmanSS20}
Feldman, D., Schmidt, M., and Sohler, C. (2020).
\newblock Turning {B}ig {D}ata into tiny data: Constant-size coresets for
  k-means, {PCA}, and projective clustering.
\newblock {\em {SIAM} J. Comput.}, 49(3):601--657.

\bibitem[Frick et~al., 2024]{FrickKM24}
Frick, S., Krivosija, A., and Munteanu, A. (2024).
\newblock Scalable learning of item response theory models.
\newblock In {\em International Conference on Artificial Intelligence and
  Statistics {(AISTATS)}}, pages 1234--1242.

\bibitem[Huang et~al., 2022]{HuangLW22}
Huang, L., Li, J., and Wu, X. (2022).
\newblock Towards optimal coreset construction for (k,z)-clustering: Breaking
  the quadratic dependency on k.
\newblock {\em CoRR}, abs/2211.11923.

\bibitem[Jambulapati et~al., 2023]{JambulapatiLLS23}
Jambulapati, A., Lee, J.~R., Liu, Y.~P., and Sidford, A. (2023).
\newblock Sparsifying sums of norms.
\newblock In {\em 64th {IEEE} Annual Symposium on Foundations of Computer
  Science {(FOCS)}}, pages 1953--1962.

\bibitem[Langberg and Schulman, 2010]{LangbergS10}
Langberg, M. and Schulman, L.~J. (2010).
\newblock Universal $\varepsilon$-approximators for integrals.
\newblock In {\em Proceedings of the Twenty-First Annual {ACM-SIAM} Symposium
  on Discrete Algorithms {(SODA)}}, pages 598--607.

\bibitem[Ledoux and Talagrand, 1991]{LT1991}
Ledoux, M. and Talagrand, M. (1991).
\newblock {\em Probability in Banach Spaces: isoperimetry and processes},
  volume~23.
\newblock Springer Science \& Business Media.

\bibitem[Li et~al., 2021]{LiWW21}
Li, Y., Wang, R., and Woodruff, D.~P. (2021).
\newblock Tight bounds for the subspace sketch problem with applications.
\newblock {\em {SIAM} J. Comput.}, 50(4):1287--1335.

\bibitem[Mahabadi et~al., 2020]{Mahabadi2020}
Mahabadi, S., Razenshteyn, I.~P., Woodruff, D.~P., and Zhou, S. (2020).
\newblock Non-adaptive adaptive sampling on turnstile streams.
\newblock In {\em Proccedings of the 52nd Annual {ACM} {SIGACT} Symposium on
  Theory of Computing {(STOC)}}, pages 1251--1264.

\bibitem[Mahoney, 2011]{Mahoney11}
Mahoney, M.~W. (2011).
\newblock Randomized algorithms for matrices and data.
\newblock {\em Found. Trends Mach. Learn.}, 3(2):123--224.

\bibitem[Mai et~al., 2021]{MaiMR21}
Mai, T., Musco, C., and Rao, A. (2021).
\newblock Coresets for classification - simplified and strengthened.
\newblock In {\em Advances in Neural Information Processing Systems 34
  (NeurIPS)}, pages 11643--11654.

\bibitem[Molina et~al., 2018]{MolinaMK18}
Molina, A., Munteanu, A., and Kersting, K. (2018).
\newblock Core dependency networks.
\newblock In {\em Proceedings of the Thirty-Second {AAAI} Conference on
  Artificial Intelligence {(AAAI)}}, pages 3820--3827.

\bibitem[Munteanu, 2023]{Munteanu23}
Munteanu, A. (2023).
\newblock Coresets and sketches for regression problems on data streams and
  distributed data.
\newblock In {\em Machine Learning under Resource Constraints, Volume 1 -
  Fundamentals}, pages 85--98. De Gruyter, Berlin, Boston.

\bibitem[Munteanu and Omlor, 2024]{MunteanuO24turnstile}
Munteanu, A. and Omlor, S. (2024).
\newblock Turnstile $\ell_p$ leverage score sampling with applications.
\newblock In {\em Proceedings of the 41st International Conference on Machine
  Learning {(ICML)}}.

\bibitem[Munteanu et~al., 2022]{MunteanuOP22}
Munteanu, A., Omlor, S., and Peters, C. (2022).
\newblock $p$-{G}eneralized probit regression and scalable maximum likelihood
  estimation via sketching and coresets.
\newblock In {\em Proceedings of the 25th International Conference on
  Artificial Intelligence and Statistics {(AISTATS)}}, pages 2073--2100.

\bibitem[Munteanu et~al., 2021]{MunteanuOW21}
Munteanu, A., Omlor, S., and Woodruff, D.~P. (2021).
\newblock Oblivious sketching for logistic regression.
\newblock In {\em Proceedings of the 38th International Conference on Machine
  Learning {(ICML)}}, pages 7861--7871.

\bibitem[Munteanu et~al., 2023]{MunteanuOW23}
Munteanu, A., Omlor, S., and Woodruff, D.~P. (2023).
\newblock Almost linear constant-factor sketching for $\ell_1$ and logistic
  regression.
\newblock In {\em The Eleventh International Conference on Learning
  Representations {(ICLR)}}.

\bibitem[Munteanu and Schwiegelshohn, 2018]{MunteanuS18}
Munteanu, A. and Schwiegelshohn, C. (2018).
\newblock Coresets-methods and history: {A} theoreticians design pattern for
  approximation and streaming algorithms.
\newblock {\em K{\"{u}}nstliche Intell.}, 32(1):37--53.

\bibitem[Munteanu et~al., 2018]{MunteanuSSW18}
Munteanu, A., Schwiegelshohn, C., Sohler, C., and Woodruff, D.~P. (2018).
\newblock On coresets for logistic regression.
\newblock In {\em Advances in Neural Information Processing Systems 31
  {(NeurIPS)}}, pages 6562--6571.

\bibitem[Nelson, 2016]{Nelson16}
Nelson, J. (2016).
\newblock Chaining introduction with some computer science applications.
\newblock {\em Bull. {EATCS}}, 120.

\bibitem[Padmanabhan et~al., 2023]{PadmanabhanWZ23}
Padmanabhan, S., Woodruff, D.~P., and Zhang, R. (2023).
\newblock Computing approximate $\ell_p$ sensitivities.
\newblock In {\em Advances in Neural Information Processing Systems 36
  {(NeurIPS)}}.

\bibitem[Schwiegelshohn and Sheikh{-}Omar, 2022]{SchwiegelshohnS22}
Schwiegelshohn, C. and Sheikh{-}Omar, O.~A. (2022).
\newblock An empirical evaluation of k-means coresets.
\newblock In {\em 30th Annual European Symposium on Algorithms {(ESA)}}, pages
  84:1--84:17.

\bibitem[Spielman and Srivastava, 2011]{SpielmanS11}
Spielman, D.~A. and Srivastava, N. (2011).
\newblock Graph sparsification by effective resistances.
\newblock {\em {SIAM} J. Comput.}, 40(6):1913--1926.

\bibitem[Tolochinsky et~al., 2022]{TolochinskyJF22}
Tolochinsky, E., Jubran, I., and Feldman, D. (2022).
\newblock Generic coreset for scalable learning of monotonic kernels: Logistic
  regression, sigmoid and more.
\newblock In {\em International Conference on Machine Learning {(ICML)}}, pages
  21520--21547.

\bibitem[Tukan et~al., 2020]{TukanMF20}
Tukan, M., Maalouf, A., and Feldman, D. (2020).
\newblock Coresets for near-convex functions.
\newblock In {\em Advances in Neural Information Processing Systems 33
  {(NeurIPS)}}.

\bibitem[Wang and Woodruff, 2022]{WangW22}
Wang, R. and Woodruff, D.~P. (2022).
\newblock Tight bounds for $\ell_1$ oblivious subspace embeddings.
\newblock {\em {ACM} Trans. Algorithms}, 18(1):8:1--8:32.

\bibitem[Woodruff and Yasuda, 2023a]{WoodruffY23spanningset}
Woodruff, D.~P. and Yasuda, T. (2023a).
\newblock New subset selection algorithms for low rank approximation: Offline
  and online.
\newblock In {\em Proceedings of the 55th Annual {ACM} Symposium on Theory of
  Computing {(STOC)}}, pages 1802--1813.

\bibitem[Woodruff and Yasuda, 2023b]{WoodruffY23lewis}
Woodruff, D.~P. and Yasuda, T. (2023b).
\newblock Online {L}ewis weight sampling.
\newblock In {\em Proceedings of the 2023 {ACM-SIAM} Symposium on Discrete
  Algorithms {(SODA)}}, pages 4622--4666.

\bibitem[Woodruff and Yasuda, 2023c]{woodruffyasuda23}
Woodruff, D.~P. and Yasuda, T. (2023c).
\newblock Sharper bounds for $\ell_p$ sensitivity sampling.
\newblock In {\em International Conference on Machine Learning {(ICML)}}, pages
  37238--37272.

\bibitem[Woodruff and Zhang, 2013]{WoodruffZ13}
Woodruff, D.~P. and Zhang, Q. (2013).
\newblock Subspace embeddings and $\ell_p$-regression using exponential random
  variables.
\newblock In {\em The 26th Annual Conference on Learning Theory {(COLT)}},
  pages 546--567.

\end{thebibliography}
\bibliographystyle{apalike}

\newpage
\appendix
\onecolumn
\section{Setting}

We are given some dataset consisting of $n$ points $a_i \in \mathbb{R}^d$ for $i \in [n]$ where $d \ll n$.
We set $A \in \mathbb{R}^{n \times d}$ to be the matrix with rows $a_i$.
Further we are given a possibly weighted function $f_w(v)=\sum_{i=1}^n w_ih(a_i x)$ for some function $h: \mathbb{R} \rightarrow \mathbb{R}_{\geq 0}$ that measures the contribution of each point. We drop the subscript if the weights are uniformly $w_i=1$ for all $i\in [n]$

Specifically, we consider in this paper $h(r)= |r|^p$ and $h(r)=\max\{0, r\}^p$ for $p \in [1, \infty)$. We will also extend  the latter case for $p=1$ to logistic regression $h(r)=\ln(1+\exp(r))$.

Our goal is to show that when sampling with probability proportional to a mixture of the $\ell_p$-leverage scores, the $\ell_2$-leverage scores and $1/n$, then with a sample of $m\ll n$ elements we can guarantee with failure probability at most $\delta$ that it holds that
\begin{align}\label{goal}
    \sup\nolimits_{x \in \mathbb{R}^d}|f(Ax)-f_w(SAx)| \leq \varepsilon f(Ax),
\end{align}

where $S\in\{0,1\}^{m\times n}$ is a sampling matrix with exactly one non-zero entry in each row such that $SA$ extracts $m$ rows out of $n$ from the data matrix $A$, see \cref{def: sampling}.

We first prove a lower bound against pure $\ell_p$ leverage score sampling. The remainder is dedicated to proving our main results, namely optimal upper bounds for $\ell_p$ sensitivity sampling with $\ell_2$ augmentation.

\section{Lower bound against pure \texorpdfstring{$\ell_p$}{Lp} leverage score sampling}

\begin{theorem}\label{thm: lowerbound}
    There exists a matrix $A\in\mathbb{R}^{m\times 2d}$, for sufficiently large $m\gg 2d$, such that if we sample each row $i \in [n]$ with probability $p_i:= \min \{1, k l_i^{(p)}\}$ for some $k \in \mathbb{N}$, then with high probability, the $\ell_p$ subspace embedding guarantee (see \cref{lp_approx_guarantee}) does not hold unless $k=\Omega(d^{2-p/2}/(\log d)^{p/2})$.
\end{theorem}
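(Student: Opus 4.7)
The plan is to prove the lower bound by constructing a block-diagonal hard instance that forces pure $\ell_p$ leverage-score sampling either to ignore a large part of the matrix or to inflate $k$ dramatically.

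\textbf{Construction.} Let $A = \mathrm{blkdiag}(A_1, A_2) \in \mathbb{R}^{m \times 2d}$ with $A_1 = I_d$ (so each of its $d$ rows has $l_i^{(p)} = 1$ and $\mathfrak{S}^{(p)}(A_1) = d$), and $A_2 \in \mathbb{R}^{m_2 \times d}$ a flat well-conditioned matrix, e.g.\ a normalized subsampled Hadamard or a random $\pm 1/\sqrt{m_2}$ matrix, with $A_2^\top A_2 = I_d$ and equal row $\ell_2$-norms $\sqrt{d/m_2}$. The number of rows $m_2$ is a tunable parameter (a natural choice is $m_2 \asymp d^2$) chosen so the quantitative bounds line up. Block-diagonality gives $\|Ax\|_p^p = \|A_1 x^{(1)}\|_p^p + \|A_2 x^{(2)}\|_p^p$ and hence $l_i^{(p)}(A) = l_i^{(p)}(A_j)$ for $i$ in block $j$, so the sampling distribution on block $j$ is completely determined by the within-block $\ell_p$ sensitivities.

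\textbf{Sensitivity estimates.} A H\"older / Cauchy--Schwarz computation, using $|a_i^\top x|\le \sqrt{d/m_2}\|x\|_2$ together with $\|A_2 x\|_p \ge \|A_2 x\|_2 = \|x\|_2$ (valid for $p\le 2$ since $A_2$ is orthonormal), upper-bounds $l_i^{(p)}(A_2) = O(d^{p/2}/m_2)$; a Khintchine-type lower bound on $\|A_2 x\|_p$ in a well-chosen direction matches this up to constants. Hence $\mathfrak{S}^{(p)}(A_2) = \Theta(d^{p/2})$, so the ratio of total sensitivities between the two blocks is exactly the $\Theta(d^{1-p/2})$ advertised in the informal overview.

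\textbf{From sensitivities to the bound on $k$.} Under sampling probabilities $p_i = \min\{1, k l_i^{(p)}\}$, Chernoff shows that the number of rows of $A_2$ in $S$ is $O(k d^{p/2})$ with high probability. On the other hand, the plan is to exhibit a bad test direction $y^\star \in \mathbb{R}^d$ highly aligned with a specific row of $A_2$ (randomized over the choice of row) for which the reweighted estimator $\sum_{i\in S}(1/p_i)|a_i^\top y^\star|^p$ has variance too large relative to $\|A_2 y^\star\|_p^{2p}$ unless the per-row probabilities $p_i$ are close to $1$. A Paley--Zygmund / second-moment computation pointwise, combined with a standard covering net over $\mathbb{R}^d$ that loses only a $\mathrm{polylog}(d)$ factor, converts this into a failure of the $\ell_p$ subspace embedding. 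Balancing the variance against $\|A_2 y^\star\|_p^{2p}$ then extracts $k = \Omega(d^{2-p/2}/(\log d)^{p/2})$.

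\textbf{Main obstacle.} The hardest step is the last one: the na\"ive rank-preservation argument (``need at least $d$ rows of $A_2$ in $S$'') only forces $k d^{p/2} \gtrsim d$, i.e.\ $k \gtrsim d^{1-p/2}$, a factor of roughly $d$ short of the target. The missing factor must come from an anti-concentration estimate on the sampled $\ell_p$-mass of $A_2 y^\star$ in a carefully chosen direction rather than from rank preservation, and it requires calibrating $A_2$, $m_2$, and the witness $y^\star$ so that the variance of the estimator fails to concentrate precisely at the sharp threshold claimed in the theorem.
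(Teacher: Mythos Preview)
Your block-diagonal construction and the instinct to use rank preservation are both right and match the paper. The paper builds $A=\mathrm{blkdiag}(A_1,A_2)$ with $A_1$ a full-rank matrix of total $\ell_p$ sensitivity $O((d\log d)^{p/2})$ (quoted as a black box from Theorem~1.6 of Woodruff--Yasuda; your flat Hadamard/random matrix is essentially one realization of that result) and $A_2$ a \emph{stacked} identity with total sensitivity exactly $d$. The entire proof is then: preserving the rank of the low-sensitivity block $A_1$ requires at least $d$ of its rows, hence $k\cdot O((d\log d)^{p/2})\gtrsim d$, i.e.\ $k=\Omega(d^{1-p/2}/(\log d)^{p/2})$. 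No variance computation, no anti-concentration, no net.

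The factor-$d$ shortfall you flag as the ``main obstacle'' is a phantom: the $d^{2-p/2}$ figure in the statement is really a bound on the \emph{number of sampled rows}, not on $k$ (the paper's proof is written explicitly as ``if we sample less than $\frac{d^{2-p/2}}{4\kappa(\log d)^{p/2}}$ rows of $A$\ldots''). This sample-size bound follows from the $k$ bound \emph{only because the high-sensitivity block is a stacked identity with arbitrarily many rows}: those rows have sensitivity $d/n'$, so at the threshold $k$ the expected count from that block alone is $k\cdot d=\Omega(d^{2-p/2}/(\log d)^{p/2})$. Your choice $A_1=I_d$ defeats this: with only $d$ rows there, the sample size at the threshold is $\tilde O(d)$, and indeed your instance \emph{does} admit an $\ell_p$ embedding with $\tilde O(d)$ samples (on your $A_2$ all $\ell_p$ sensitivities are equal, so $\ell_p$ sampling is uniform and hence already optimal). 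So the fix is not the vague Paley--Zygmund detour---which cannot work anyway, since a literal $k=\Omega(d^{2-p/2})$ would force sample size $\Omega(d^{3-p/2})$, contradicting the known $\tilde O(d^{2-p/2})$ upper bound---but simply to replace $I_d$ by a stacked identity; the rank argument you already wrote down then finishes the proof.

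A smaller issue in your sensitivity estimate: the inequality $\|A_2x\|_p\ge\|A_2x\|_2=\|x\|_2$ by itself only yields $l_i^{(p)}(A_2)\le (d/m_2)^{p/2}$, which sums to $d^{p/2}m_2^{1-p/2}$, not $d^{p/2}$. To get $l_i^{(p)}(A_2)=O(d^{p/2}/m_2)$ you need the uniform Khintchine-type lower bound $\|A_2x\|_p^p\gtrsim m_2^{1-p/2}\|x\|_2^p$ over all $x$, which is precisely the content of the Woodruff--Yasuda result the paper cites.
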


\begin{proof}
Theorem 1.6 of \citep{woodruffyasuda23} implies the existence of an $n \times d$ matrix $A_1$ of full-rank $d$, where the dimensions $n\geq d$ are sufficiently large and $A_1$ has total $\ell_p$ sensitivity $O((d\log d)^{p/2})$.

Let $A_2$ be an $n' \times d$ matrix, where $n'\geq n$ is divisible by $d$, that consists of the identity matrix stacked $n'/d$ times. Note, that it has total $\ell_p$ sensitivity $d$ by construction. We let $m=n+n'$.

We combine the two matrices to get $A=\begin{bmatrix} A_1 & 0 \\ 0 & A_2 \end{bmatrix}$. Note that by construction, $A\in\mathbb{R}^{m\times 2d}$ has full rank $2d$ and its total $\ell_p$ sensitivity is $O((d\log d)^{p/2})+d\geq d$.

Obtaining an $\ell_p$ subspace embedding for $A$ requires to preserve the rank, which requires at least $d$ rows from each matrix, $A_1$ and $A_2$, to be sampled.

Sampling via pure $\ell_p$ scores as defined in the theorem, the probability to hit a row from $A_1$ is bounded above by $\kappa (d\log d)^{p/2}/d$ for some absolute constant $\kappa$.

Thus, if we sample less than $\frac{d^{2-p/2}}{4\kappa (\log d)^{p/2} }$ rows of $A$, the expected number of rows that we collect from $A_1$ is bounded by $$\frac{d^{2-p/2}}{4\kappa (\log d)^{p/2} } \cdot \frac{\kappa (d\log d)^{p/2}}{d} = d/4.$$

By a standard application of Chernoff's bound, our sample will thus comprise less than $d/2 < d$ rows from $A_1$ with high probability, implying that the subsample is rank deficient.

This proves an $\Omega(d^{2-p/2}/(\log d)^{p/2})$ lower bound against $\ell_p$ subspace embeddings via pure $\ell_p$ leverage score sampling.
\end{proof}

\section{Preliminaries}

\subsection{Covering numbers}

Covering numbers are the minimum numbers to cover one shape or body using multiple (possibly overlapping) copies of another shape or body. A prominent example is the minimum number of Euclidean balls of radius $1/2$ to cover the unit radius Euclidean ball.

More generally, if the second body is an $\varepsilon$-ball with respect to a certain metric then the covering number corresponds to the minimum size of an $\varepsilon$-net.

Given two convex bodies $T, K \subseteq \mathbb{R}^d$, we define the covering number $E(T, K)$ by
\[
    E(T, K)=\inf\nolimits_{k \in \mathbb{N}}\{ \exists X \subseteq \mathbb{R}^d, T \subseteq \bigcup_{x \in X}x+K   \}
\]
where $ x+K = \{ y=x+z ~|~ z \in K \}$ denotes the Minkowski sum.

Further, given a metric $d_X$ and $t \geq 0$, we define the ball of radius $t$, denoted $B_X(t)$, by
\[
    B_X(t)=\{ x \in \mathbb{R}^d ~|~ d_X(0, x) \leq t \}.
\]

Further we define $E(T, d_X, t)$ to be the minimum cardinality of any $t$-net of $T$ with respect to $d_X$, formally
\[
    E(T, d_X, t)=\inf \{|N| ~|~ N \subset T, \text{ for any $x \in T$ there exists $x' \in N$ with $d_X(x,x')\leq t$} \}
\]
If for any $ x', x \in T$ it holds that $d_X(x, x')\leq d_{X'}(0, x-x')$ then it holds that $E(T, d_X, t)\leq E(T, B_{X'}(t))$.

Given $\varepsilon>0$ we say that $N \subset \mathbb{R}^d $ is an $\varepsilon$-net of $T$ with respect to $d_X$ if for any point $x \in T$ there exists a point $y \in N$ such that $d_X(x, y)\leq \varepsilon$.
In particular, it holds for any $\varepsilon$-net $N$ that $E(T, d_X, \varepsilon)\leq |N|$.

\paragraph{Dual Sudakov Minoration}

The following result will help us bounding covering numbers for the case where $T$ is the Euclidean ball.

\begin{definition}[L{\'e}vy mean]
Let $\norm*{\cdot}_X$ be a norm. Then, the \emph{L{\'e}vy mean of $\norm*{\cdot}_X$} is defined as
\[
    M_X \coloneqq \frac{\E_{g\sim\mathcal N(0,I_d)}\norm*{g}_X}{\E_{g\sim\mathcal N(0,I_d)}\norm*{g}_2}.
\]
\end{definition}

Bounds on the L{\'e}vy mean imply bounds for covering the Euclidean ball by $\norm*{\cdot}_X$-balls using the following result:

\begin{theorem}[Dual Sudakov minoration, Proposition 4.2 of \citealp{BLM1989}]
\label{thm:dual-sudakov}
Let $\norm*{\cdot}_X$ be a norm, and let $B\subseteq\mathbb R^d$ denote the Euclidean unit ball in $d$ dimensions. Then,
\[
    \log E(B, \norm*{\cdot}_X, t) \leq O(d)\frac{M_X^2}{t^2}
\]
\end{theorem}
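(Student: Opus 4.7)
The plan is to derive the Dual Sudakov minoration via a Gaussian-integration duality between packings and the Gaussian measure of the norm ball, in the style of Pajor and Tomczak-Jaegermann. Let $K_X=\{x\in\mathbb{R}^d:\|x\|_X\le 1\}$ denote the unit ball of the norm $\|\cdot\|_X$. I would first reduce covering to packing: pick a maximal $t$-separated set $\{x_1,\ldots,x_N\}\subset B$ under $\|\cdot\|_X$, so that the translates $x_i+(t/2)K_X$ are pairwise disjoint and $N\ge E(B,\|\cdot\|_X,t)$ up to a constant factor in $t$.

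Next, I would introduce a rescaling parameter $\lambda>0$ to be tuned later and dilate: the sets $\lambda x_i+(\lambda t/2)K_X$ remain pairwise disjoint, so integrating the standard Gaussian density over their union yields $\sum_i \gamma_d(\lambda x_i+(\lambda t/2)K_X)\le 1$, where $\gamma_d$ denotes the standard Gaussian measure on $\mathbb{R}^d$. The key lower bound is the symmetrization inequality that for any origin-symmetric convex $K\subseteq\mathbb{R}^d$ and any $y\in\mathbb{R}^d$,
\[
\gamma_d(y+K)\ \ge\ e^{-\|y\|_2^2/2}\,\gamma_d(K),
\]
obtained by writing $\gamma_d(y+K)=e^{-\|y\|_2^2/2}\int_K e^{-\langle y,z\rangle-\|z\|_2^2/2}\,dz/(2\pi)^{d/2}$ and using the symmetry of $K$ together with $\tfrac12(e^{u}+e^{-u})\ge 1$. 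Since $\|x_i\|_2\le 1$, applying this to each $x_i$ gives $N\cdot e^{-\lambda^2/2}\cdot\Pr_g[\|g\|_X\le \lambda t/2]\le 1$, with $g\sim\mathcal{N}(0,I_d)$.

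Finally, I would choose $\lambda t/2$ to be a fixed multiple of $\E\|g\|_X=M_X\cdot\E\|g\|_2=\Theta(M_X\sqrt d)$, large enough that Markov's inequality forces $\Pr_g[\|g\|_X\le \lambda t/2]\ge 1/2$. This corresponds to $\lambda=\Theta(M_X\sqrt d/t)$ and gives $\log N\le \lambda^2/2+O(1)=O(d)\,M_X^2/t^2$, which matches the claimed bound after unraveling the factor-of-two loss from the packing-to-covering conversion.

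I expect the main obstacle to be the symmetrization step producing $\gamma_d(y+K)\ge e^{-\|y\|_2^2/2}\gamma_d(K)$: one must ensure that $K$ is actually symmetric convex (which here holds because $K_X$ is a norm ball) and carefully avoid picking up extra dimensional factors in the exponent. A secondary technical point is extracting a constant-fraction lower bound on $\Pr_g[\|g\|_X\le \lambda t/2]$; while Markov's inequality suffices once $\lambda t/2\ge 2\,\E\|g\|_X$, obtaining a sharper constant in the final exponent would instead call on Borell's Gaussian concentration inequality applied to the $\|\cdot\|_X$-Lipschitz map $g\mapsto\|g\|_X$.
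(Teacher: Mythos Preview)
The paper does not supply a proof of this theorem: it is quoted as a black-box result from \citet{BLM1989} (Proposition~4.2 there) and used without further argument. Your proposal is the standard Pajor--Tomczak-Jaegermann style derivation of dual Sudakov minoration, and it is correct as written. The packing-to-covering reduction, the Gaussian shift inequality $\gamma_d(y+K)\ge e^{-\|y\|_2^2/2}\gamma_d(K)$ for symmetric convex $K$, and the choice $\lambda=\Theta(M_X\sqrt{d}/t)$ so that $\Pr_g[\|g\|_X\le \lambda t/2]\ge 1/2$ via Markov all go through exactly as you describe, yielding $\log N\le \lambda^2/2+O(1)=O(d)M_X^2/t^2$. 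There is nothing to compare against in the present paper; your argument is essentially the proof one finds in the original reference.
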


\subsection{Dudleys theorem}

The following moment bound follows from the so-called Dudley's theorem and is one of the main tools we use in our analysis: 

\begin{lemma}[\citealp{woodruffyasuda23}, slightly modified]
\label{lem:moment-bound}
Let $(X_t)_{t\in T}$ be a Gaussian process with pseudo-metric $d_X(s,t)\coloneqq \norm*{X_s - X_t}_2 = \sqrt{\E (X_s - X_t)^2}$.
Further let $\mathcal \diam(T) \coloneqq \sup\braces*{d_X(s,t) : s, t\in T}\leq \mathcal D$ be a bound on the diameter of $d_X$ on $T$ and $\int_0^\infty \sqrt{\log E(T, d_X, u)}~\mathrm{d}u \leq \mathcal E$ be a bound on the entropy.
Then, there is a constant $C = O(1)$ such that for
\[
    \Lambda \coloneqq \E\sup\nolimits_{t\in T}X_t
\]
it holds that
\[
     \E[\abs{\Lambda}^l] \leq (2\mathcal E)^l (\mathcal E/\mathcal D) + O(\sqrt l \mathcal D)^{l}.
 \]
\end{lemma}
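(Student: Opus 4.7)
My plan is to combine the two canonical tools for controlling the supremum of a Gaussian process: Dudley's entropy integral for the expected supremum, and Borell--TIS (sub-Gaussian) concentration for deviations from the mean. I read $\Lambda$ as $\sup_{t\in T} X_t$ rather than its expectation, since as stated the claim would be a bound on a deterministic quantity and the sub-Gaussian term $O(\sqrt{l}\,\mathcal{D})^l$ would be meaningless; this appears to be a typographic slip. The scheme is to decompose $\sup_{t\in T} X_t = M + Y$ with $M \coloneqq \E \sup_{t\in T} X_t$ and $Y$ the fluctuation, bound $M$ by $\mathcal{E}$, and bound $Y$'s moments by the sub-Gaussian parameter $\mathcal{D}$.

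\textbf{Bounding the mean.} I would invoke Dudley's chaining argument directly for the Gaussian process on $(T, d_X)$, obtaining
\[
    M \;\leq\; K_1 \int_0^\infty \sqrt{\log E(T, d_X, u)}\,\mathrm{d}u \;\leq\; K_1\,\mathcal{E}
\]
for an absolute constant $K_1$. A centering step (translating by $X_{t_0}$ for a fixed $t_0 \in T$) is free: it shifts the supremum by at most $\mathcal{D}$ and can be absorbed into the deviation term.

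\textbf{Concentration and moments.} After centering, each $\E X_t^2 \leq \mathcal{D}^2$, so the Borell--TIS inequality yields $\Pr(|Y|\geq u) \leq 2\exp(-u^2/(2\mathcal{D}^2))$. From the elementary inequality $\E|\Lambda|^l \leq 2^{l-1}(M^l + \E|Y|^l)$, the first summand is at most $(2K_1\,\mathcal{E})^l$, while the layer-cake identity combined with the substitution $v = u^2/(2\mathcal{D}^2)$ and Stirling's formula gives $\E|Y|^l \leq O(\sqrt{l}\,\mathcal{D})^l$. This already matches the stated bound in structure, namely $(K\mathcal{E})^l + O(\sqrt{l}\,\mathcal{D})^l$ in place of $(2\mathcal{E})^l (\mathcal{E}/\mathcal{D}) + O(\sqrt{l}\,\mathcal{D})^l$.

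\textbf{Expected main obstacle.} The factor $\mathcal{E}/\mathcal{D}$ multiplying the first summand is the delicate refinement. To extract it I would not split $\E|\Lambda|^l$ at the mean $M$ but at a threshold $u_\star \approx 2K_1\,\mathcal{E}$, apply the Borell--TIS tail at $u_\star$, and use the resulting factor $\exp(-\Omega(\mathcal{E}^2/\mathcal{D}^2))$ in place of the bare constant multiplying $(2\mathcal{E})^l$. In the regime $\mathcal{E}\gtrsim \mathcal{D}$ this exponential is dominated by $\mathcal{E}/\mathcal{D}$ and the inequality is clear; in the opposite regime the sub-Gaussian term $O(\sqrt{l}\,\mathcal{D})^l$ already dominates and the refinement is essentially free. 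No new conceptual ingredient beyond Dudley's theorem and Borell--TIS concentration is needed; the only real work is choosing the split point and tracking constants so that the two summands balance in the downstream Markov application, where $\mathcal{E}/\mathcal{D}$ plays the role of the target failure probability.
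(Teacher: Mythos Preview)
Your approach is sound and is precisely the standard route for such a statement: Dudley's entropy integral controls $\E\sup_t X_t$ by $O(\mathcal{E})$, Borell--TIS gives sub-Gaussian tails with parameter $\mathcal{D}$, and integrating the tail yields the $O(\sqrt{l}\,\mathcal{D})^l$ moment contribution. You also correctly diagnose the typographic slip in the definition of $\Lambda$.

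There is nothing substantive to compare against, because the paper does not supply its own proof of this lemma. It is quoted from \citet{woodruffyasuda23} with the remark that ``their proof remains unchanged'' under the slight generalization of $\Lambda$. Your reconstruction is exactly the expected one and almost certainly coincides with the cited argument.

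One small point of caution: your heuristic for extracting the extra $\mathcal{E}/\mathcal{D}$ factor---splitting the tail integral at $u_\star \approx 2K_1\mathcal{E}$ and invoking $\exp(-\Omega(\mathcal{E}^2/\mathcal{D}^2)) \leq \mathcal{E}/\mathcal{D}$ when $\mathcal{E}\gtrsim\mathcal{D}$---is morally right but loose as written; it gives a much smaller prefactor than $\mathcal{E}/\mathcal{D}$, so the bound you obtain is actually stronger in that regime, and in the opposite regime the second summand dominates anyway. For the downstream application in the paper (where $\mathcal{E}/\mathcal{D} = \Theta(\sqrt{l})$ and this factor is immediately bounded by $2^l$), the simpler form $(K\mathcal{E})^l + O(\sqrt{l}\,\mathcal{D})^l$ that you first derive is already sufficient, so the refinement is cosmetic here.
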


The only modification is that we consider a more general $\Lambda$ than \citet{woodruffyasuda23}. However, we stress that their proof remains unchanged.

The moment bound can be used for a suitably large choice of $l$ to obtain low failure probability bounds using Markov's inequality.
The moment parameter $l$ has two functions: first, it allows us to remove the dependence on the term $(\mathcal E/\mathcal D)$, more precisely to replace it by $2^l$ which holds whenever $l \geq \log_2(\mathcal E/\mathcal D)$. 
Second, by choosing $l \geq \log(\delta^{-1})$ we can bound the failure probability by $\delta$ while having only an additive $\log(\delta^{-1})$ in the sample size.
The task then reduces to obtaining best possible bounds for the diameter $\mathcal D$ and the entropy $\mathcal E$ to allow a small sample size.

\subsection{Definitions}

Let $p \in [1, \infty]$ and $h: \mathbb{R}\setminus\{0\} \rightarrow \mathbb{R}_{\geq 0}$ be either the function with $h(y)=|r|^p$ or $h(y)=\max\{r, 0\}^p$.
Further given a data set $A \in \mathbb{R}^{n \times d}$ consisting of rows $a_1, \dots, a_n \in \mathbb{R}^d$ and a weight vector $w \in \mathbb{R}_{\geq 1}^n$ and $x \in \mathbb{R}^d$ we set
\[ f_w(Ax)= \sum_{i=1}^n w_i h(a_ix).\]

\begin{definition}[$\mu$-complex,\citealp{MunteanuSSW18,MunteanuOP22}, slightly modified]\label{def: mu-complex}
    Given a data set $A \in \mathbb{R}^{n \times d}$ we define the parameter $\mu$ as 
    \[\mu(A)=\sup\nolimits_{\norm{Ax}_p=1}\frac{\norm{Ax}_p^p}{\norm{(Ax)^-}_p^p} = \sup\nolimits_{\norm{Ax}_p=1}\frac{\norm{Ax}_p^p}{\norm{(Ax)^+}_p^p},
    \]
    where $(Ax)^+$, and $(Ax)^-$ are the vectors comprising only the positive resp. negative entries of $Ax$ and all others set to $0$.
\end{definition}

\begin{definition}[$\ell_p$-leverage scores]\label{def: l_p leverage scores}
    Given a data set $A \in \mathbb{R}^{n \times d}$ we define the $i$-th $\ell_p$ leverage score of $A$ by
    \[
    l_i^{(p)}=\sup\nolimits_{x \in \mathbb{R}\setminus\{0\}} \frac{|a_i x|^p}{\norm{Ax}_p^p}.
    \]
\end{definition}

\begin{definition}[sampling]\label{def: sampling}
    For $i \in [n]$ let $p_i \in (0, 1]$.
    Then sampling with probability $p_i$ is the following concept: We sample point $a_i$ with probability $p_i$ and set its weight to $w_i=1/p_i$.
    This sampling process can also be described using a matrix $S \in \{0 , 1\}^{m \times n}$ where $m$ is the number of sampled elements and $S_{ij}=1$ if and only if $j=j(i) \in [n]$ is the $i$-th sample.
    The weight vector $w \in \mathbb{R}_{\geq 1}^n$ is then defined by $w_i=w_{j(i)}$. We slightly overload the notation and identify with $S$ either the sampling matrix or the set of sampled indices.
\end{definition}
Note that, different to most previous work, we do not put the weights into the matrix $S$.

The following definition extends norms to work with (auxiliary) weights.

\begin{definition}[weighted norms]\label{def: norm}
Given a vector $v \in \mathbb{R}^n$, $p \in [1, \infty)$ and weights $w \in \mathbb{R}_{\geq 1}^n$, we let
\[
    \norm{v}_{w, p}:= \left( \sum_{i=1}^n w_i |v_i|^p \right)^{1/p}.
\]
For the $\ell_{\infty}$ norm, we also let
\[
     \norm{v}_{w, \infty}:= \lim_{p \rightarrow \infty}\norm{v}_{w, p} = \max_{i \in [n]}|v_i| \qquad \text{ and }\qquad \norm{v}_{w, \infty, p}:= \max_{i \in [n]}|w_i^{1/p} v_i|.
\]
\end{definition}

\section{Outline of the analysis}

The proof consists of several main steps, where the goal is to apply \cref{lem:moment-bound} to bound the deviation of the weighted subsample from the original function:
\begin{itemize}
    \item[1)] First, we show that the deviation of the weighted subsample from the original function can be bounded by a Gaussian process.
    \item[2)] We give a bound on the diameter of the Gaussian process of step 1).
    \item[3)] To bound the entropy, we first generalize some of the theory presented in \citep{woodruffyasuda23} to cope with auxiliary weights.
    \item[4)] Using the results of 2) and 3), we are able to bound the entropy of the Gaussian process of step 1).
    \item[5)] We then put everything together and proceed by proving the main theorem.
\end{itemize}

\section{Bounding by a Gaussian process}
We will analyze the following term:
\begin{align} \label{eqn:mainterm} 
\E_S \sup\nolimits_{f{(Ax)}=1}|f_w(SAx)-f(Ax)|^\ell
\end{align}
for some integer $\ell\geq 1 $.
Since both functions $h(r)=|r|^p $ and $h(r)= \max\{0, r\}^p$ are absolutely homogeneous, and we are interested in a relative error approximation, it suffices to consider points $x \in \mathbb{R}^d$ with $ f{(Ax)}=1$.
Towards applying Lemma \ref{lem:moment-bound}, we first bound above term by a Gaussian process:

\begin{lemma}\label{lem: gausbound}
    For $\ell \geq 1$ it holds that
    \[
    \E_S \sup\nolimits_{f{(Ax)}=1}|f_w(SAx)-f(Ax)|^\ell \leq 
    (2\pi)^{\ell/2}\E_{S, g} \sup\nolimits_{f{(Ax)}=1}\left|\sum_{i \in S} g_i w_i h(a_i x)\right|^\ell 
    \]
    where $g_i$ are independent standard Gaussians.
\end{lemma}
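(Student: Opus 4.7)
The plan is the standard symmetrization followed by Gaussianization, carefully tracking constants so the final factor comes out to $(2\pi)^{\ell/2}$. First, observe that the sampling scheme is unbiased: writing $\xi_i = \mathbf 1[i\in S]$, we have $\xi_i \sim \mathrm{Bern}(p_i)$ and the weight $w_i = 1/p_i$, so $\E[f_w(SAx)] = \sum_i \E[\xi_i/p_i]\,h(a_i x) = f(Ax)$. This lets us replace $f(Ax)$ by $\E_{S'} f_w(S'Ax)$ for an independent ghost sample $S'$ and push the absolute value inside the expectation by Jensen (since $|\cdot|^\ell$ is convex for $\ell\geq 1$). Taking $\sup_x$ and then $\E_S$, and exchanging $\sup$ with $\E_{S'}$ via another application of Jensen, yields
\[
    \E_S \sup_{f(Ax)=1} |f_w(SAx) - f(Ax)|^\ell \;\leq\; \E_{S,S'} \sup_{f(Ax)=1} |f_w(SAx) - f_w(S'Ax)|^\ell .
\]

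Next, I introduce Rademacher signs. Since $\xi_i - \xi_i'$ is symmetric about zero, the joint distribution of $((\xi_i - \xi_i')/p_i)_i$ is identical to that of $(\epsilon_i (\xi_i - \xi_i')/p_i)_i$ for an independent Rademacher vector $\epsilon$. Plugging this in, splitting by the triangle inequality and the elementary bound $|a+b|^\ell \leq 2^{\ell-1}(|a|^\ell + |b|^\ell)$, and exploiting the symmetry between $S$ and $S'$ collapses the ghost sample and gives
\[
    \E_{S,S'} \sup_x |f_w(SAx) - f_w(S'Ax)|^\ell \;\leq\; 2^{\ell}\, \E_{S,\epsilon}\sup_{f(Ax)=1}\Bigl|\sum_{i\in S}\epsilon_i w_i h(a_i x)\Bigr|^\ell .
\]

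The last step is Gaussianization. Writing $g_i = \sigma_i |g_i|$ with $\sigma_i$ an independent Rademacher sign and noting that $\E|g_i| = \sqrt{2/\pi}$, I consider the function
\[
    \phi(t_1,\dots,t_n) \coloneqq \E_{\sigma}\sup_{f(Ax)=1}\Bigl|\sum_{i\in S}\sigma_i\, t_i\, w_i h(a_i x)\Bigr|^\ell,
\]
which is a sup of convex functions of $(t_i)$, hence convex. By Jensen applied in the $|g_i|$ variables,
\[
    \E_g\sup_x\Bigl|\sum_{i\in S} g_i w_i h(a_i x)\Bigr|^\ell = \E_{|g|}\phi(|g_1|,\dots) \;\geq\; \phi(\sqrt{2/\pi},\dots) = (2/\pi)^{\ell/2}\,\E_\epsilon\sup_x\Bigl|\sum_{i\in S}\epsilon_i w_i h(a_i x)\Bigr|^\ell.
\]
Rearranging converts the Rademacher process to a Gaussian one at the cost of a factor $(\pi/2)^{\ell/2}$. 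Combining with the $2^\ell$ from the previous step gives the claimed constant $2^\ell\cdot(\pi/2)^{\ell/2}=(2\pi)^{\ell/2}$, finishing the proof.

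The only places that require real care are (i) the order of the two exchanges of $\sup$ with expectation in the symmetrization step, which both go through by convexity of $|\cdot|^\ell$, and (ii) the direction of Jensen in the Gaussianization step, which relies on the fact that the sup of convex (indeed, linear-then-absolute-value-then-power) functions is convex in the scaling parameters $|g_i|$. No subtlety arises from $h$ being the ReLU-power rather than a norm; the argument uses only that $h$ is a fixed nonnegative function, so the same reduction applies in both cases considered in the paper.
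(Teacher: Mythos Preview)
Your proof is correct and follows essentially the same symmetrization--Rademacherization--Gaussianization route as the paper. The only differences are cosmetic: the paper argues the factor $2^\ell$ by writing the symmetrized sum over $S\cup S'$ and invoking that ``two copies of the same process at most double the expectation,'' whereas you use the cleaner $|a+b|^\ell\le 2^{\ell-1}(|a|^\ell+|b|^\ell)$ split; and the paper cites Ledoux--Talagrand for the Rademacher-to-Gaussian comparison, whereas you spell out the convexity/Jensen argument with $\E|g_i|=\sqrt{2/\pi}$ explicitly.
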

\begin{proof}
    
We first note that $|c+\cdot|^\ell$ is a convex function for any $c \in \mathbb{R}$ and the $\sup$ over convex functions is again convex. Thus by applying Jensen's inequality twice, we have that

\begin{align*}
    \E_S \sup\nolimits_{f_w(Ax)=1}|f_w(SAx)-f(Ax)|^\ell
    &= \E_S \sup\nolimits_{f(Ax)=1}|f_w(SAx)-f(Ax)+0|^\ell \\
    & = \E_S \sup\nolimits_{f(Ax)=1}|f_w(SAx)-f(Ax)+\E_{S'}(f(Ax)-f_{w'}(S'Ax))|^\ell \\
    &\leq \E_S \sup\nolimits_{f(Ax)=1}\E_{S'} |f_w(SAx)-f(Ax)+(f(Ax)-f_{w'}(S'Ax))|^\ell \\
    &\leq \E_{S, S'} \sup\nolimits_{f(Ax)=1}|f_w(SAx)-f_{w'}(S'Ax)|^\ell,
\end{align*}
where $S',w'$ are a second realization of our sampling matrix, and the corresponding weights.

The last term is bounded by 
\begin{align*}
    \E_{S, S'} \sup\nolimits_{f(Ax)=1}|f_w(SAx)-f_{w'}(S'Ax)|^\ell
    &=\E_{S, S'} \sup\nolimits_{f(Ax)=1} \left| \sum_{i \in S}w_i h(a_i x) - \sum_{i \in S'}w_ih(a_i x) \right|^\ell \\
    &\leq \E_{S, S', \sigma}\sup\nolimits_{f(Ax)=1} \left| \sum_{i \in S \cup S'}\sigma_i w_i h(a_i x) \right|^\ell
\end{align*}
  where $\sigma_i \in \{-1, 1\}$ are uniform random signs indicating whether $i$ are sampled by $S$ or $S'$. Terms that are sampled by both cancel and can thus only decrease the expected value. We may thus assume that no index is sampled in both copies $S,S'$.
  Further note that two copies of the same process can at most increase the expected value by a factor of two. Thus, we get that
\[
 \E_S \sup\nolimits_{f(Ax)=1}|f_w(SAx)-f(Ax)|^\ell \leq \E_{S, \sigma} \sup\nolimits_{f{(Ax)}=1}\left|2\sum_{i \in S}\sigma_i w_i h(a_i x)\right|^\ell=2^\ell \E_{S, \sigma} \sup\nolimits_{f(Ax)=1}\left|\sum_{i \in S}\sigma_i w_i h(a_i x)\right|^\ell.
\]
Finally, note that
\begin{align*}
    2^\ell \E_{S, \sigma} \sup\nolimits_{f(Ax)=1} \left|\sum_{i \in S}\sigma_i w_i h(a_i x) \right|^\ell \leq (2\pi)^{\ell/2}\E_{S, g} \sup\nolimits_{f(Ax)=1} \left|\sum_{i \in S}g_i w_i h(a_i x) \right|^\ell ,
\end{align*}
where $g_i$ are independent standard Gaussians, using a comparison between Rademacher and Gaussian variables (see, e.g., Lemma 4.5, and Equation 4.8 of \citealp{LT1991}).
\end{proof}

Given a realization $S$ we set $$G_S:=1+\sup\nolimits_{f(Ax)=1}|f_{w}(SAx)-f(Ax)|$$.
Then, we have that $ \sup\nolimits_{f(Ax)=1}|f_{w}(SAx)-f(Ax)| = G_S-1$.
We thus get the following corollary
\begin{corollary}\label{cor: gausbound}
    For $\ell \geq 1$ it holds that
    \[
    \E_S (G_S-1)^\ell \leq 
    (2\pi)^{\ell/2}\E_{S, g} \sup\nolimits_{f(Ax)=1} \left|\sum_{i \in S}g_i w_i h(a_i x) \right|^\ell .
    \]
\end{corollary}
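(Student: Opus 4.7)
The plan is to deduce this corollary directly from Lemma \ref{lem: gausbound} by unfolding the definition of $G_S$, with essentially no additional work required. The only subtlety is to legitimately interchange the $\ell$-th power with the supremum, which is permitted because the quantity inside is non-negative.

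First I would observe that by the definition $G_S := 1 + \sup_{f(Ax)=1}|f_w(SAx) - f(Ax)|$, we have
\[
    G_S - 1 = \sup\nolimits_{f(Ax)=1}|f_w(SAx) - f(Ax)| \geq 0.
\]
Since the map $y \mapsto y^\ell$ is monotonically increasing on $[0,\infty)$ for $\ell \geq 1$, taking the $\ell$-th power commutes with the supremum of a non-negative function:
\[
    (G_S - 1)^\ell = \left(\sup\nolimits_{f(Ax)=1}|f_w(SAx) - f(Ax)|\right)^\ell = \sup\nolimits_{f(Ax)=1}|f_w(SAx) - f(Ax)|^\ell.
\]
Next I would take the expectation over $S$ of both sides and apply Lemma \ref{lem: gausbound} to the right-hand side, which bounds it by
\[
    (2\pi)^{\ell/2}\E_{S, g} \sup\nolimits_{f(Ax)=1} \left|\sum_{i \in S}g_i w_i h(a_i x) \right|^\ell,
\]
completing the proof.

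There is no real obstacle here; the corollary is a purely notational restatement of the previous lemma, introduced so that the subsequent analysis can refer directly to the random variable $G_S$ (which also appears in the diameter bound sketched in the outline) rather than to the supremum expression.
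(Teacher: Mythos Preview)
Your proposal is correct and matches the paper's approach exactly: the paper simply notes that $G_S-1=\sup_{f(Ax)=1}|f_w(SAx)-f(Ax)|$ and states the corollary as an immediate consequence of Lemma~\ref{lem: gausbound}. Your added remark about commuting the $\ell$-th power with the supremum via monotonicity on $[0,\infty)$ is a harmless clarification of a step the paper leaves implicit.
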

Our goal is now to bound the right hand side of \cref{cor: gausbound} using Lemma \ref{lem:moment-bound}.
To this end, we will dedicate the following sections to bounding the diameter and the entropy of the Gaussian process $(2\pi)^{\ell/2}\E_{S, g} \sup\nolimits_{f(Ax)=1}|\sum_{i \in S}g_i w_i h(a_i x)|^\ell $.

\section{Analyzing the Gaussian pseudo metric}

In the following sections we consider a fixed realization $S$.
We set \begin{align}\label{eq:Gbound}
        G:=G_S=1+\sup\nolimits_{f(Ax)=1}|f_{w}(SAx)-f(Ax)|\geq \sup\nolimits_{f(Ax)=1} f_w(SAx) 
    \end{align} 
and for $y, y' \in \mathbb{R}^m$ we define
\[
    d_X(y, y')=\left(\E_{g \sim \mathcal{N}(0, I_{m})} \left(\sum_{i \in S}  g_i w_i h(y_i) - \sum_{i \in S}g_i w_i h( y_i'))  \right)^2\right)^{1/2}.
\]
\[
    T_A= \{ x \in \mathbb{R}^d ~|~ f(Ax)=1 \} \subseteq \{ x \in \mathbb{R}^d ~|~ \norm{Ax}_p^p \leq \mu \}
\]
and $\sigma = \max_i w_i l_i^{(p)}$.
Further we set $T_S=SA(T_A)= \{y \in \mathbb{R}^m ~|~ y=SAx, x \in T_A \}$

\subsection{Bounding the diameter}

Before bounding the diameter we prove the following lemma:

\begin{lemma}\label{lem: bounddXhelp}
    If $p \in (0, 2]$ for any $r, r' \in \mathbb{R}$ holds that
    \[
        ||r|^{p/2}-|r'|^{p/2}|\leq |r-r'|^{p/2} .
    \]
    and
    \[
        |\max\{0,r\}^{p/2}-\max\{0,r'\}^{p/2}|\leq |\max\{0,r\}-\max\{0,r'\}|^{p/2} .
    \]
\end{lemma}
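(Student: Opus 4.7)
The plan is to reduce both statements to the standard subadditivity inequality $(a+b)^q \leq a^q + b^q$ valid for $q \in (0,1]$ and $a,b \geq 0$, applied with $q = p/2$ (since $p \in (0,2]$ gives $q \in (0,1]$). From this subadditivity one obtains, for any $0 \leq b \leq a$, that $a^q = ((a-b) + b)^q \leq (a-b)^q + b^q$, hence $|a^q - b^q| \leq |a-b|^q$ for all nonnegative $a,b$. I will state and prove this as a one-line intermediate claim.

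For the second inequality in the lemma, the plan is then immediate: apply the intermediate claim with $a = \max\{0,r\}$ and $b = \max\{0,r'\}$, both of which are nonnegative by construction, giving exactly $|\max\{0,r\}^{p/2} - \max\{0,r'\}^{p/2}| \leq |\max\{0,r\} - \max\{0,r'\}|^{p/2}$.

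For the first inequality, I would apply the intermediate claim with $a = |r|$ and $b = |r'|$ to obtain $||r|^{p/2} - |r'|^{p/2}| \leq ||r| - |r'||^{p/2}$. Then I use the reverse triangle inequality $||r| - |r'|| \leq |r - r'|$, combined with the monotonicity of $t \mapsto t^{p/2}$ on $[0,\infty)$, to conclude $||r|^{p/2} - |r'|^{p/2}| \leq |r - r'|^{p/2}$.

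No real obstacle is expected here; the only point to be careful about is making explicit that the first inequality needs the extra reverse-triangle step (which could fail for signed inputs otherwise), whereas the $\max\{0,\cdot\}$ version does not, since the arguments are already nonnegative. If desired, the subadditivity of $t \mapsto t^q$ itself can be proved in one line by considering the function $\varphi(t) = (t+b)^q - t^q - b^q$ on $[0,\infty)$ and noting $\varphi(0) = 0$ together with $\varphi'(t) = q((t+b)^{q-1} - t^{q-1}) \leq 0$ for $q \leq 1$.
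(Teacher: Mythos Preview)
Your proposal is correct and follows essentially the same approach as the paper: both establish the key intermediate inequality $|a^q - b^q| \leq |a-b|^q$ for nonnegative $a,b$ and $q = p/2 \in (0,1]$ (the paper via an integral comparison, you via subadditivity plus a derivative argument), then combine it with the reverse triangle inequality for the first claim and apply it directly to the nonnegative quantities $\max\{0,r\},\max\{0,r'\}$ for the second. Your treatment of the second claim is in fact slightly cleaner than the paper's case split, but the substance is the same.
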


\begin{proof}
    Let $0\leq a \leq b $.
    For $p \in (0, 2]$ note that $p/2-1 \leq 0$, and in the following calculation $t\geq t-a$. We thus have that
    \[
        b^{p/2}-a^{p/2}
        =\int_a^b (p/2) t^{p/2-1} ~ dt
        \leq \int_a^b (p/2) (t-a)^{p/2-1} ~ dt
        =\int_0^{b-a} (p/2) t^{p/2-1} ~ dt
        =(b-a)^{p/2}.
    \]
    Assuming without loss of generality that $ |r| \geq |r'|$, we can apply this followed by a triangle inequality to get
    \[
        |r|^{p/2}-|r'|^{p/2}\leq (|r|-|r'|)^{p/2} \leq |r-r'|^{p/2}.
    \]
    Similarly, if $r, r' \geq 0$, we have that
    \begin{align*}
        | \max\{0, r\}^{p/2}-\max\{0, r'\}^{p/2}|= |r|^{p/2}-|r'|^{p/2} \leq |r-r'|^{p/2}
        =|\max\{0, r\}-\max\{0,r'\}|^{p/2}
    \end{align*}
     or, if $r<0$ or $r'< 0$ holds, then 
    \[
        | \max\{0, r\}^{p/2}-\max\{0, r'\}^{p/2}|= \max\{0, r, r' \}^{p/2} \leq  |\max\{0, r\}-\max\{0,r'\}|^{p/2}.
    \]
\end{proof}

The following lemma bounds $d_X$ by the weighted infinity norm for $h(r)=|r|^p$. It allows us to deduce a bound on the diameter and we will later need it to bound the entropy as well.

\begin{lemma}\label{lem: bounddX}
    Let $p \in [1, 2]$ and let $h(r)=|r|^p$.
    For any $y, y' \in \mathbb{R}^m$ it holds that
    \begin{align}
        d_X(y, y')\leq  (2(f_w(y)+ f_w(y'))\norm{y-y'}_{w, \infty, p}^p)^{1/2}.
    \end{align}
    Further we have that
    \begin{align}
        d_X(y, 0)\leq (f_w(y)\norm{y}_{w, \infty, p}^p)^{1/2}.
    \end{align}
\end{lemma}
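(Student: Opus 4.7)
The plan is first to unwrap the definition of $d_X$. Since the $g_i$ are independent standard Gaussians, expanding the square inside the expectation and using $\E[g_i g_j] = \delta_{ij}$ reduces the pseudo-metric to
\[
    d_X(y,y')^2 \;=\; \sum_{i \in S} w_i^2\bigl(|y_i|^p - |y_i'|^p\bigr)^2.
\]
So the whole lemma reduces to bounding this sum in terms of $f_w(y), f_w(y')$ and $\|y-y'\|_{w,\infty,p}^p = \max_j w_j |y_j - y_j'|^p$.

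The key algebraic trick is to factor a difference of $p$-th powers as a product of a $(p/2)$-difference and a $(p/2)$-sum:
\[
    |y_i|^p - |y_i'|^p \;=\; \bigl(|y_i|^{p/2} - |y_i'|^{p/2}\bigr)\bigl(|y_i|^{p/2} + |y_i'|^{p/2}\bigr).
\]
For $p \in [1,2]$, I invoke \cref{lem: bounddXhelp} to bound the first factor by $|y_i - y_i'|^{p/2}$, and I use the elementary inequality $(a+b)^2 \leq 2(a^2 + b^2)$ on the square of the second factor. This gives
\[
    \bigl(|y_i|^p - |y_i'|^p\bigr)^2 \;\leq\; |y_i - y_i'|^{p}\bigl(|y_i|^{p/2} + |y_i'|^{p/2}\bigr)^2 \;\leq\; 2\,|y_i - y_i'|^{p}\bigl(|y_i|^{p} + |y_i'|^{p}\bigr).
\]

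From here the rest is just a Hölder-type pull-out. I rewrite $w_i^2 = w_i \cdot w_i$ and regroup:
\[
    \sum_{i \in S} w_i^2 \bigl(|y_i|^p - |y_i'|^p\bigr)^2 \;\leq\; 2 \sum_{i \in S} \bigl(w_i |y_i - y_i'|^p\bigr)\,\bigl(w_i |y_i|^p + w_i |y_i'|^p\bigr).
\]
Pulling out the maximum of $w_i |y_i - y_i'|^p = \|y-y'\|_{w,\infty,p}^p$ from the first factor and summing the remaining terms yields exactly $2 \|y-y'\|_{w,\infty,p}^p (f_w(y) + f_w(y'))$. Taking square roots finishes the first inequality.

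For the second bound, the $y' = 0$ case is easier because there is no cross term: setting $y' = 0$ gives
\[
    d_X(y,0)^2 \;=\; \sum_{i \in S} w_i^2 |y_i|^{2p} \;=\; \sum_{i \in S} \bigl(w_i |y_i|^p\bigr)\bigl(w_i |y_i|^p\bigr) \;\leq\; \max_{j\in S} w_j |y_j|^p \cdot \sum_{i \in S} w_i |y_i|^p \;=\; \|y\|_{w,\infty,p}^p\, f_w(y),
\]
saving the factor of $2$. No step is hard on its own; the only thing that requires genuine content from $p \leq 2$ is \cref{lem: bounddXhelp}, which is already available, so the argument should go through cleanly.
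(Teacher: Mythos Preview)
Your proposal is correct and follows essentially the same approach as the paper: both reduce $d_X(y,y')^2$ to $\sum_i w_i^2(|y_i|^p-|y_i'|^p)^2$, factor the difference of $p$-th powers, apply \cref{lem: bounddXhelp} to the difference factor, use $(a+b)^2\leq 2(a^2+b^2)$ on the sum factor, and then pull out $\max_i w_i|y_i-y_i'|^p=\|y-y'\|_{w,\infty,p}^p$. The second inequality is handled identically in both.
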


\begin{proof}
First note that since $\E(g_ig_j)=\mathbf{1}_{i=j}$ we have that
\[ 
    d_X(y, y')^2=  \sum_{i \in S} w_i^2(h(y_i)-h(y_i'))^2.
\]
Thus we have that
\begin{align*}
    d_X(y, y')^2&=  \sum_{i=1}^m w_i^2(h(y_i)-h(y_i'))^2\\
    &=  \sum_{i=1}^m w_i^2(h(y_i)^{1/2}-h(y_i')^{1/2})^2(h(y_i)^{1/2}+h(y_i')^{1/2} )^2\\
    & \stackrel{\text{\cref{lem: bounddXhelp}}}{\leq} \sum_{i=1}^m w_i^2 (|y_i-y_i'|^{p/2})^2(h(y_i)^{1/2}+h(y_i')^{1/2} )^2\\
    & \leq \norm{y-y'}_{w,\infty, 1/p}^p \cdot \sum_{i=1}^m w_i 2(h(y_i)+h(y_i'))) \\
    & \leq 2 \norm{y-y'}_{w,\infty, 1/p}^p (f_w(y)+ f_w(y')).
\end{align*}
For the second part of the lemma note that
\begin{align*}
    d_X(y, 0)^2&=  \sum_{i=1}^m w_i^2(h(y_i)-h(0))^2\\
    &=  \sum_{i=1}^m w_i^2h(y_i)^2
    \leq \sum_{i=1}^m w_i |y_i|^p w_i h(y_j)
    \leq f_w(y)\norm{y}_{w, \infty, p}^p.
\end{align*}
\end{proof}

We will now move our attention to $h(r)=\max\{0,r\}^p$.

\begin{lemma}\label{lem: bounddXReLU}
    Let $p \in [1, 2]$ and let $h(r)=\max\{0, r\}^p$.
    For any $y, y' \in \mathbb{R}^m$, let $(y)^+={(\max\{0, y_i\})}_{i=1}^{m}$ denote the vector that contains only the non-negative entries of $y$ and all others are set to $0$. Then, it holds that
    \begin{align}
        d_X(y, y')\leq  (2(f_w(y)+ f_w(y'))\norm{(y)^+-(y')^+}_{w, \infty, p}^p)^{1/2}.
    \end{align}
    Further we have that
    \begin{align}
        d_X(y, 0)\leq (f_w(y)\norm{(y)^+}_{w, \infty, p}^p)^{1/2}.
    \end{align}
\end{lemma}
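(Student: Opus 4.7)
The plan is to mirror the proof of \cref{lem: bounddX} almost verbatim, replacing $|r|^{p/2}$ everywhere by $\max\{0,r\}^{p/2}$ and invoking the second (ReLU) half of \cref{lem: bounddXhelp} instead of the first. Since $h(0)=0$ for the ReLU function, all the algebraic manipulations used for the symmetric $|r|^p$ case go through unchanged, with the $\max\{0,\cdot\}$ function playing the role of $|\cdot|$.

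Concretely, as in the proof of the previous lemma I would expand
\[
    d_X(y,y')^2 \;=\; \sum_{i=1}^m w_i^{\,2}\bigl(h(y_i)-h(y_i')\bigr)^2
\]
using $\E(g_i g_j)=\mathbf 1_{i=j}$, and then factor each summand as
\[
    \bigl(h(y_i)-h(y_i')\bigr)^2
    \;=\;\bigl(h(y_i)^{1/2}-h(y_i')^{1/2}\bigr)^{2}\bigl(h(y_i)^{1/2}+h(y_i')^{1/2}\bigr)^{2},
\]
noting that $h(y_i)^{1/2}=\max\{0,y_i\}^{p/2}=((y)^+_i)^{p/2}$. The second part of \cref{lem: bounddXhelp} (applicable because $p\in[1,2]$) bounds the first bracket by $|(y)^+_i-(y')^+_i|^{p}$, and the elementary inequality $(a+b)^2\le 2(a^2+b^2)$ bounds the second bracket by $2\bigl(h(y_i)+h(y_i')\bigr)$. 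Pulling the $w_i\,|(y)^+_i-(y')^+_i|^p$ factor out as a maximum then yields
\[
    d_X(y,y')^2 \;\le\; \max_{i}\bigl(w_i\,|(y)^+_i-(y')^+_i|^p\bigr)\cdot 2\sum_{i=1}^m w_i\bigl(h(y_i)+h(y_i')\bigr)
    \;=\;2\,\|(y)^+-(y')^+\|_{w,\infty,p}^{p}\bigl(f_w(y)+f_w(y')\bigr),
\]
which gives the first claim after taking square roots.

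For the second claim, since $h(0)=0$ the cross terms vanish and the computation is even cleaner:
\[
    d_X(y,0)^2 \;=\; \sum_{i=1}^m w_i^{\,2}\,h(y_i)^{2}
    \;=\;\sum_{i=1}^m w_i^{\,2}\,\max\{0,y_i\}^{2p}
    \;\le\;\max_{i}\bigl(w_i\,\max\{0,y_i\}^{p}\bigr)\sum_{i=1}^m w_i\,h(y_i)
    \;=\;\|(y)^+\|_{w,\infty,p}^{p}\,f_w(y).
\]

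There is essentially no conceptual obstacle here; the only point that requires a moment of care is making sure that the ReLU variant of \cref{lem: bounddXhelp} is the right tool (not the absolute-value variant), since $\max\{0,y_i\}-\max\{0,y_i'\}$ is strictly smaller than $|y_i-y_i'|$ and would otherwise cost us unnecessary slack. Using $(y)^+$ rather than $y$ throughout is what later lets the analysis avoid additional $\mu$ factors when applying the diameter bound, so this substitution is the substantive point of the lemma; everything else is a line-by-line adaptation of \cref{lem: bounddX}.
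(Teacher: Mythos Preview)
Your proposal is correct and follows essentially the same argument as the paper's own proof: expand $d_X(y,y')^2$ via $\E(g_ig_j)=\mathbf 1_{i=j}$, factor each summand as a difference-times-sum of square roots, apply the ReLU half of \cref{lem: bounddXhelp} to the difference factor, use $(a+b)^2\le 2(a^2+b^2)$ on the sum factor, and pull out the weighted maximum. Your closing remark about why one must track $(y)^+$ rather than $y$ (to avoid extra $\mu$ factors downstream) is also exactly the point; the only nitpick is that $|\max\{0,y_i\}-\max\{0,y_i'\}|$ is \emph{at most} $|y_i-y_i'|$, not strictly smaller.
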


\begin{proof}
First note that since $\E(g_ig_j)=\mathbf{1}_{i=j}$ we have that
\[ 
    d_X(y, y')^2=  \sum_{i \in S} w_i^2(h(y_i)-h(y_i'))^2.
\]
Thus we have that
\begin{align*}
    d_X(y, y')^2&=  \sum_{i=1}^m w_i^2(h(y_i)-h(y_i'))^2\\
    &=  \sum_{i=1}^m w_i^2(h(y_i)^{1/2}-h(y_i')^{1/2})^2(h(y_i)^{1/2}+h(y_i'){1/2} )^2\\
    & \stackrel{\text{\cref{lem: bounddXhelp}}}{\leq} \sum_{i=1}^m w_i^2 (|\max\{y_i, 0\}-\max\{y_i', 0\}|^{p/2})^2(h(y_i)^{1/2}+h(y_i')^{1/2} )^2\\
    & \leq \norm{(y)^+-(y')^+}_{w,\infty, 1/p}^p \cdot \sum_{i=1}^m w_i 2(h(y_i)+h(y_i'))) \\
    & \leq 2 \norm{(y)^+-(y')^+}_{w,\infty, 1/p}^p (f_w(y)+ f_w(y')).
\end{align*}
For the second part of the lemma note that
\begin{align*}
    d_X(y, 0)^2&=  \sum_{i=1}^m w_i^2(h(y_i)-h(0))^2\\
    &=  \sum_{i=1}^m w_i^2h(y_i)^2
    \leq \sum_{i=1}^m w_i \max\{0, y_i\}^p w_i h(y_i)
    \leq f_w(y)\norm{(y)^+}_{w, \infty, p}^p.
\end{align*}
\end{proof}

We thus conclude the following bound on the diameter.
\begin{lemma}\label{lem:diambound}
    Let $p \in [1, 2]$ It holds that $\diam(T_S) \leq 4 (G \sigma)^{1/2}$ where $\sigma =  \sup\nolimits_{f{(Ax)}=1} \norm{SA x}_{w,\infty, p}^p$.
\end{lemma}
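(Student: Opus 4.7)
The plan is to reduce the diameter bound to an application of Lemmas~\ref{lem: bounddX} (for $h(r)=|r|^p$) and~\ref{lem: bounddXReLU} (for $h(r)=\max\{0,r\}^p$), and then control both factors on their right-hand sides using the definitions of $G$ and $\sigma$. For any two points $y,y'\in T_S$, write $y=SAx$ and $y'=SAx'$ with $x,x'\in T_A$, i.e.\ $f(Ax)=f(Ax')=1$.

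First I would bound the $(f_w(y)+f_w(y'))$ factor. By the defining inequality \eqref{eq:Gbound}, we have $f_w(SAx)\leq G$ and $f_w(SAx')\leq G$ uniformly over $T_A$, so $f_w(y)+f_w(y')\leq 2G$. Plugging this into Lemma~\ref{lem: bounddX} gives
\[
    d_X(y,y')\leq \sqrt{4G}\cdot \|y-y'\|_{w,\infty,p}^{p/2} = 2\sqrt{G}\cdot \|y-y'\|_{w,\infty,p}^{p/2},
\]
and analogously Lemma~\ref{lem: bounddXReLU} gives $d_X(y,y')\leq 2\sqrt{G}\cdot \|(y)^+-(y')^+\|_{w,\infty,p}^{p/2}$ in the ReLU case.

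Next I would control the weighted $\ell_{\infty,p}$ term. By the triangle inequality, $\|y-y'\|_{w,\infty,p}\leq \|y\|_{w,\infty,p}+\|y'\|_{w,\infty,p}$, and by the definition of $\sigma$ each of these is at most $\sigma^{1/p}$, so $\|y-y'\|_{w,\infty,p}\leq 2\sigma^{1/p}$. For the ReLU case, the same bound holds after observing that $|\max\{0,y_i\}|\leq |y_i|$ coordinatewise, which yields $\|(y)^+\|_{w,\infty,p}\leq \|y\|_{w,\infty,p}$; this is the place where \emph{special care is needed} to avoid an extraneous $\mu$ factor, as flagged in the outline of the analysis, but it is handled by simply dominating the positive part by the absolute value.

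Combining the two steps and using $p\in[1,2]$,
\[
    d_X(y,y')\leq 2\sqrt{G}\cdot (2\sigma^{1/p})^{p/2} = 2^{1+p/2}(G\sigma)^{1/2}\leq 4(G\sigma)^{1/2},
\]
and taking the supremum over $y,y'\in T_S$ gives $\diam(T_S)\leq 4(G\sigma)^{1/2}$. The argument is entirely routine once the previous lemmas are in place; the only subtle point is the ReLU case, and it is resolved by the monotonicity of the positive-part operation with respect to absolute value, so no obstacle beyond careful bookkeeping arises.
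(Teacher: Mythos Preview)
Your proof is correct and follows essentially the same approach as the paper: invoke Lemmas~\ref{lem: bounddX} and~\ref{lem: bounddXReLU}, bound $f_w(y)+f_w(y')\leq 2G$ via \eqref{eq:Gbound}, and control $\|y-y'\|_{w,\infty,p}$ (resp.\ $\|(y)^+-(y')^+\|_{w,\infty,p}$) by the triangle inequality together with $\|(y)^+\|_{w,\infty,p}\leq\|y\|_{w,\infty,p}\leq\sigma^{1/p}$. Your arithmetic tracking $2^{1+p/2}\leq 4$ for $p\leq 2$ is in fact slightly more careful than the paper's intermediate estimate.
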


\begin{proof}
    Let $y, y' \in T_S$.
    First note that there exist $x, x' \in T_A$ with $y=SAx $ and $y'=SAx'$. We thus have by \cref{eq:Gbound} that $f_w(y) \leq G.$
    
    Next note that
     \[
        \norm{(y)^+}_{w, \infty, p}^p
        \leq \norm{y}_{w, \infty, p}^p \leq \sup\nolimits_{f{(Ax)}=1} \norm{SA x}_{w,\infty, p}^p = \sigma.
    \]
    Similarly, we have that $ f_w(y') \leq G $ and $\norm{(y')^+}_{w, \infty, p}^p \leq  \sigma $.
    By \cref{lem: bounddX} and using the triangle inequality we have that
    \[
        d_X(y, y')\leq  (2(f_w(y)+ f_w(y'))\norm{y-y'}_{w, \infty, p}^p)^{1/2}
        \leq (8G\sigma)^{1/2} \leq 4 ( G\sigma)^{1/2}.
    \]
\end{proof}

\subsection{Bounds on covering numbers}

Next, we aim to bound the entropy. To this end, we first need to bound the log of the covering numbers $\log E(T, d_{X}, t) $.
We will use two bounds, one for small $t$ with a small dependence on $t^{-1}$ and a different bound for larger $t$ with larger dependence on $t^{-1}$ but smaller dependence on $d$.
For bounding the covering numbers we will use an approach similar to \citet{woodruffyasuda23}.
Given a Matrix $A'\in \mathbb{R}^{m \times d}$, a weight vector $w \in \mathbb{R}^n$ and $q >1$, we set $B_w^q:=B_w^q(A')=\{ x \in \mathbb{R}^d ~|~ \norm{A'x}_{w, q} \leq 1 \}$.

We will start with the following simple lemma that helps to gain a better understanding of covering numbers:

\begin{lemma}\label{lem: coveringnumberproperties}
Let $X \subseteq \mathbb{R}^n $ with $0 \in X $ and $r \in \mathbb{R}_{\geq 0}$ and let $rX=\{ rx ~|~ x \in X \}$. Further let $ \theta \geq 1$ and $d_X$ be a metric.
Then it holds that
\[
    \log E(rX, d_X^\theta, t)=\log E(X, r^\theta d_X^\theta, t)=\log E(X, d_X, (t/r)^{1/\theta}).
\]
Further, let $d_X$ be a metric with $d_X'(y, 0)\geq  d_X(y, 0)$ for all $y \in \mathbb{R}^d$. Then it holds that
\[
\log E(X, d_X, t)\leq \log E(X, d_X', t).
\]
\end{lemma}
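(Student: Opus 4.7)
The plan is to verify each of the three assertions by directly unpacking the definition of the covering number. For the first equality, I would use the bijection $\Phi\colon X \to rX$, $x \mapsto rx$, to transfer covers between the two sets. A net $N \subseteq rX$ is a $d_X^\theta$-$t$-cover of $rX$ if and only if $\Phi^{-1}(N)$ covers $X$ under the pullback pseudo-metric $(x,x') \mapsto d_X^\theta(rx, rx')$ at the same radius $t$. Applying positive homogeneity of the underlying (weighted) norm that induces $d_X$ in the scope where this lemma is used, we have $d_X(rx, rx') = r\, d_X(x,x')$, hence $d_X^\theta(rx, rx') = r^\theta d_X^\theta(x,x')$, and the pullback metric is precisely $r^\theta d_X^\theta$; this yields the first equality.

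For the second equality, the plan is a purely arithmetic rewriting of the defining inequality: a set $N$ is an $r^\theta d_X^\theta$-$t$-cover of $X$ iff for every $x \in X$ there is a $y \in N$ with $r^\theta d_X(x,y)^\theta \leq t$, which rearranges to $d_X(x, y) \leq (t/r)^{1/\theta}$ in the form appearing in the statement. Because the rewriting is an if-and-only-if between cover relations and preserves $|N|$, taking the infimum over cover sizes gives the second equality with no loss.

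For the monotonicity statement, the plan is even simpler: the underlying pseudo-metrics $d_X, d_X'$ are translation invariant, as they arise from (weighted) norms throughout the paper, so the pointwise hypothesis $d_X'(y,0) \geq d_X(y,0)$ for all $y$ upgrades to $d_X'(y,y'') \geq d_X(y,y'')$ for all $y, y''$ by translating by $-y''$. Any set $N$ that $t$-covers $X$ under $d_X'$ therefore also $t$-covers $X$ under $d_X$, since each witness satisfies $d_X(x,y) \leq d_X'(x,y) \leq t$. Taking the infimum over such covers yields $E(X, d_X, t) \leq E(X, d_X', t)$ as claimed.

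The only subtle point, which I would flag but not view as a genuine obstacle, is that the first equality and the monotonicity statement implicitly use positive homogeneity and translation invariance, respectively, and these are not literally implied by the bare word ``metric'' in the hypothesis. This is a minor annotation: the lemma is only invoked for norm-induced pseudo-metrics and for the Gaussian pseudo-metric $d_X$ defined in terms of the $g_i w_i h(y_i)$ sums, each of which inherits translation invariance and the requisite homogeneity directly from the form of $h$ and the underlying (weighted) norm.
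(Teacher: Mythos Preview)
Your proposal is correct and follows essentially the same approach as the paper: the paper's proof also appeals to homogeneity for the first equality, rewrites the ball condition $r^\theta d_X(0,x)^\theta \leq t$ as $d_X(0,x)\leq (t/r)^{1/\theta}$ for the second, and compares the balls $\{x:d_X(0,x)\leq t\}\supseteq\{x:d_X'(0,x)\leq t\}$ for the monotonicity. Your version is in fact slightly more careful in making explicit the implicit use of positive homogeneity and translation invariance that the paper's proof silently relies on.
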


\begin{proof}
    The first equality follows by homogeneity. For the second, notice that
    \begin{align*}
        \{ x\in \mathbb{R}^n ~|~  r^\theta d_X(0, x)^\theta \leq t \}
        =\{ x\in \mathbb{R}^n ~|~  d_X(0, x) \leq (t/r)^{1/\theta} \}.
    \end{align*}
    For the inequality, observe that
    \begin{align*}
        \{ x\in \mathbb{R}^n ~|~  d_X(0, x) \leq t \}
        \supseteq \{ x\in \mathbb{R}^n ~|~  d_X'(0, x) \leq t \}.
    \end{align*}
    Thus, we need more $t$-balls with respect to norm $d_X'$ sets to cover $X$ than we need $t$-balls with respect to $d_X$ and consequently $\log E(X, d_X, t)\leq \log E(X, d_X', t)$.
\end{proof}

We will now bound the number of $q$-balls we need to cover the Euclidean ball for $q \geq 2$.

We note that since $\E_{g\sim\mathcal N(0,I_d)}\norm*{g}_2 = \Theta(\sqrt{d})$, \cref{thm:dual-sudakov} implies $$\log E(B, \norm*{\cdot}_X, t) \leq O(1)\frac{(\E_{g\sim\mathcal N(0,I_d)}\norm*{g}_X)^2}{t^2}.$$

We thus proceed with a bound on the enumerator:

\begin{lemma}\label{lem:levy}
Let $q\geq 2$ and let $A'\in\mathbb R^{m\times d}$ and $w \in \mathbb{R}_{\geq 1}^m$. Let $\tau \geq \max_{i=1}^m w_i\norm*{e_i^T A'}_2^2$. Then,
\[
    \E_{g\sim\mathcal N(0,I_d)}\norm{A'g}_{w, q} \leq m^{1/q}\sqrt{q\cdot \tau}.
\]
\end{lemma}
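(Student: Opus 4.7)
The plan is to apply Jensen's inequality to push the expectation inside the $q$-th power, then evaluate each summand as a $q$-th Gaussian moment, and finally exploit $w_i \geq 1$ together with the hypothesis $w_i \|e_i^T A'\|_2^2 \leq \tau$ to control the resulting sum.

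First, since $q \geq 2$, the function $u \mapsto u^{1/q}$ is concave, so Jensen's inequality yields
\[
\E_{g}\norm{A'g}_{w,q} \;=\; \E_g\Bigl(\sum_{i=1}^m w_i |e_i^T A' g|^q\Bigr)^{1/q}
\;\leq\; \Bigl(\sum_{i=1}^m w_i\, \E_g |e_i^T A' g|^q\Bigr)^{1/q}.
\]
For each fixed $i$, $e_i^T A' g$ is a centered Gaussian with variance $\|e_i^T A'\|_2^2$, so $\E_g |e_i^T A' g|^q = \|e_i^T A'\|_2^q \cdot \E|Z|^q$ for $Z \sim \mathcal{N}(0,1)$. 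The standard bound on absolute Gaussian moments gives $\E|Z|^q \leq q^{q/2}$ for all $q \geq 1$ (verifiable directly from $\E|Z|^q = 2^{q/2}\Gamma((q+1)/2)/\sqrt{\pi}$).

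Second, I would bound $w_i \|e_i^T A'\|_2^q$ by $\tau^{q/2}$. Since $w_i \geq 1$ and $q \geq 2$, we have $w_i^{1-q/2} \leq 1$, hence
\[
w_i \|e_i^T A'\|_2^q \;=\; w_i^{1-q/2}\bigl(w_i \|e_i^T A'\|_2^2\bigr)^{q/2} \;\leq\; \tau^{q/2}.
\]
Summing over $i \in [m]$ and combining with the moment bound above gives
\[
\sum_{i=1}^m w_i\, \E_g|e_i^T A' g|^q \;\leq\; m\, \tau^{q/2}\, q^{q/2},
\]
and taking $q$-th roots yields the claimed bound $m^{1/q}\sqrt{q\tau}$.

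There is no genuine obstacle here: the only subtle step is the rearrangement $w_i\|e_i^T A'\|_2^q = w_i^{1-q/2}(w_i\|e_i^T A'\|_2^2)^{q/2}$, which is exactly where the weighting assumption $w \in \mathbb{R}_{\geq 1}^m$ is used to absorb the leftover weight factor. This is the critical point at which the proof differs from the unweighted version: in the unweighted case one simply has $\|e_i^T A'\|_2^q \leq \tau^{q/2}$, while here one must verify that $w_i \geq 1$ prevents the excess weight exponent $1 - q/2$ (which is nonpositive) from blowing up.
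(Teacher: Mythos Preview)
Your proof is correct and follows essentially the same approach as the paper's: Jensen's inequality to move the expectation inside the $q$-th power, the standard Gaussian moment bound $\E|Z|^q \leq q^{q/2}$, and then using $w_i \geq 1$ together with $q \geq 2$ to absorb the excess weight exponent. The only cosmetic difference is that the paper writes the weight-absorption step as $\max_i w_i^{1/q}\|a_i'\|_2 \leq \sqrt{\tau}$ (via $w_i^{2/q}\leq w_i$), whereas you rewrite $w_i\|a_i'\|_2^q = w_i^{1-q/2}(w_i\|a_i'\|_2^2)^{q/2}$ and use $w_i^{1-q/2}\leq 1$; these are equivalent.
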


\begin{proof}
We have for each row $a_i', i\in[n]$ that $a_i' g$ is distributed as a Gaussian random variable with zero mean and standard deviation $\|a_i'\|_2$. By a known bound for their $q$-th absolute moment, and applying the known upper bound on Stirling's approximation $\Gamma(x+1)\leq \sqrt{e\pi x}\left(\frac{x}{e}\right)^x$, we obtain
\begin{align*}
    \E_{g\sim\mathcal N(0,I_d)} w_i  \abs{a_i' g}^q &=  w_i \cdot \frac{2^{q/2}\Gamma(\frac{q+1}{2})}{\sqrt{\pi}}\norm{a_i'}_2^q \\
    &\leq w_i \cdot \frac{2^{q/2}}{\sqrt{\pi}}\cdot \sqrt{e\pi\frac{q-1}{2}}\left(\frac{q-1}{2e}\right)^{\frac{q-1}{2}}\cdot\norm{a_i'}_2^q\\
    &\leq w_i \cdot {2^{q/2}} \cdot e^{- \frac{q-1}{2} + \frac{1}{2}} \left(\frac{q-1}{2}\right)^{\frac{q-1}{2}+\frac{1}{2}}\cdot\norm{a_i'}_2^q\\
    &\leq w_i \cdot e^{- \frac{q-2}{2}} \left({q-1}\right)^{\frac{q}{2}}\cdot\norm{a_i'}_2^q\\
    &\leq w_i \cdot q^{q/2} \cdot \norm{a_i'}_2^q\,.
\end{align*}
Then by Jensen's inequality and linearity of expectation, we get
\begin{align*}
    \E_{g\sim\mathcal N(0,I_d)}\norm{A'g}_{w, q} &\leq \parens*{\E_{g\sim\mathcal N(0,I_d)}\norm{A'g}_{w, q}^q}^{1/q} \\
    &= \left(\sum_{i=1}^m w_i q^{q/2} \cdot \norm{a_i'}_2^q \right)^{1/q}\\
    &\leq \parens*{m\cdot q^{q/2} \cdot \max\nolimits_{i=1}^m w_i \norm{a_i'}_2^q}^{1/q} \\
    &\leq m^{1/q}\cdot q^{1/2} \cdot \max\nolimits_{i=1}^m w_i^{1/q} \norm{a_i'}_2\\
    &= m^{1/q}\sqrt{q\cdot \tau}\,.
\end{align*}
    
\end{proof}

By combining the above calculation with Theorem \ref{thm:dual-sudakov}, we obtain the following bound.
\begin{corollary}\label{cor:2-q-cover}
Let $2\leq q < \infty$ and let $A'\in\mathbb R^{m\times d}$ be orthonormal with respect to the weighted $\ell_2$ norm. Let $\tau \geq \max_{i=1}^m w_i\norm*{e_i^T A'}_2^2$. Then,
\[
    \log E(B_{w}^2(A'), \norm{.}_{w,q}, t) \leq O(1) \frac{m^{2/q} q\cdot \tau}{t^2}
\]
\end{corollary}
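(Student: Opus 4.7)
The plan is to reduce the statement to a direct invocation of dual Sudakov minoration (\cref{thm:dual-sudakov}) combined with the L\'evy-mean bound of \cref{lem:levy}. The key observation that makes this work is that the orthonormality hypothesis on $A'$ turns $B_w^2(A')$ into the standard Euclidean unit ball of $\mathbb{R}^d$, so we can directly apply dual Sudakov with the pullback norm induced by $A'$.

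Concretely, I would first define the norm $\norm{x}_X \coloneqq \norm{A'x}_{w,q}$ on $\mathbb{R}^d$ (which is indeed a norm, since $q\geq 1$ and $A'$ has trivial kernel by the weighted orthonormality). Next, I would argue that $B_w^2(A')$ is exactly the Euclidean unit ball $B\subseteq\mathbb{R}^d$. The orthonormality of $A'$ with respect to $\norm{\cdot}_{w,2}$ means $(A')^\top W A' = I_d$ where $W=\diag(w)$, and hence $\norm{A'x}_{w,2}^2 = x^\top (A')^\top W A' x = \norm{x}_2^2$; so $\norm{A'x}_{w,2}\leq 1$ iff $\norm{x}_2\leq 1$, giving $B_w^2(A') = B$. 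Under this identification, $E(B_w^2(A'), \norm{\cdot}_{w,q}, t)$ equals $E(B, \norm{\cdot}_X, t)$.

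Then I would apply \cref{thm:dual-sudakov} to $B$ and $\norm{\cdot}_X$, using the standard fact $\E_{g\sim\mathcal N(0,I_d)}\norm{g}_2 = \Theta(\sqrt{d})$ to cancel the dimension in the statement, yielding
\[
    \log E(B, \norm{\cdot}_X, t) \;\leq\; O(1)\,\frac{\bigl(\E_{g\sim\mathcal N(0,I_d)}\norm{A'g}_{w,q}\bigr)^2}{t^2}.
\]
Finally, I would invoke \cref{lem:levy}, whose hypotheses ($q\geq 2$ and $\tau\geq \max_i w_i\norm{e_i^\top A'}_2^2$) match ours exactly, to bound $\E\norm{A'g}_{w,q}\leq m^{1/q}\sqrt{q\tau}$. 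Squaring gives $m^{2/q} q\tau$, which is the claimed bound.

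There is no serious obstacle here: the whole argument is a chain of two previously established results, provided one correctly identifies $B_w^2(A')$ with the Euclidean ball and recognizes that the covering metric on the right of the corollary must be read as the pullback of $\norm{\cdot}_{w,q}$ through $A'$. The only mild care needed is in this identification step, since the abuse of notation $E(B_w^2(A'), \norm{\cdot}_{w,q}, t)$ implicitly treats a norm on $\mathbb{R}^m$ as a metric on $\mathbb{R}^d$ via $A'$.
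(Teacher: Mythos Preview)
Your proposal is correct and follows exactly the paper's approach: the paper's proof simply notes that orthonormality of $A'$ with respect to the weighted $\ell_2$ norm makes $B_w^2(A')$ isometric to the Euclidean unit ball in $\mathbb R^d$, so \cref{thm:dual-sudakov} applies (combined with \cref{lem:levy}). Your write-up just makes the identification and the pullback-norm step explicit, which is a welcome clarification but not a different argument.
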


\begin{proof}
Since $A'$ is orthonormal with respect to the weighted $\ell_2$ norm, $B_{w}^2(A') = B_{w}^2$ is isometric to the Euclidean ball in $d$ dimensions, and thus Theorem \ref{thm:dual-sudakov} applies.
\end{proof}

We also get a similar result for $q = \infty$. To this end it suffices to apply \cref{cor:2-q-cover} with $q = O(\log n)$. 

\begin{corollary}\label{cor:2-inf-cover}
Let $A'\in\mathbb R^{m\times d}$ be orthonormal with respect to the weighted $\ell_2$ norm. Let $\tau \geq \max_{i=1}^m w_i\norm*{e_i^T A'}_2^2$. Then,
\[
    \log E(B_{w}^2, \norm{.}_{w, \infty}, t) \leq O(1) \frac{(\log m)\cdot \tau}{t^2}.
\]
\end{corollary}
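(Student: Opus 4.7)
The plan is to reduce the weighted $\ell_\infty$ covering problem to a weighted $\ell_q$ covering problem for a sufficiently large but finite $q$, and then invoke \cref{cor:2-q-cover}. The key observation is that for weights $w_i \geq 1$ the weighted $\ell_q$ norm dominates the weighted $\ell_\infty$ norm, so every $\ell_q$-ball is contained in the $\ell_\infty$-ball of the same radius, making $\ell_\infty$-covering at most as expensive as $\ell_q$-covering.

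Concretely, I would first establish the pointwise norm comparison: for every $v \in \mathbb R^m$ and every $q \geq 1$,
\[
    \norm{v}_{w,q}^q = \sum_{i=1}^m w_i |v_i|^q \;\geq\; \max_{i\in[m]} w_i |v_i|^q \;\geq\; \max_{i\in[m]} |v_i|^q = \norm{v}_{w,\infty}^q,
\]
where the last inequality uses $w_i \geq 1$. Hence $\norm{v}_{w,q}\geq \norm{v}_{w,\infty}$ for all $v$, which by the monotonicity statement in \cref{lem: coveringnumberproperties} yields
\[
    \log E(B_w^2, \norm{\cdot}_{w,\infty}, t) \;\leq\; \log E(B_w^2, \norm{\cdot}_{w,q}, t).
\]

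Next, I would choose $q := \max\{2, \lceil \log m \rceil\}$ so that \cref{cor:2-q-cover} applies and moreover $m^{2/q} \leq e^{2}= O(1)$. Plugging this $q$ into \cref{cor:2-q-cover} gives
\[
    \log E(B_w^2, \norm{\cdot}_{w,q}, t) \;\leq\; O(1)\,\frac{m^{2/q}\, q\, \tau}{t^2} \;=\; O(1)\,\frac{(\log m)\,\tau}{t^2},
\]
and combining with the previous inequality yields the claimed bound. There is no serious obstacle: the whole argument is a textbook $\ell_q$-to-$\ell_\infty$ trade-off, and the only minor care is the edge case of very small $m$ where $\log m < 2$, which is handled uniformly by taking the maximum with $2$ in the definition of $q$.
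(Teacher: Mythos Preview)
Your proposal is correct and follows essentially the same route as the paper: both arguments apply \cref{cor:2-q-cover} with $q$ of order $\log m$ (the paper takes $q=2\log m$, you take $q=\max\{2,\lceil\log m\rceil\}$) and then use the pointwise inequality $\norm{v}_{w,\infty}\leq\norm{v}_{w,q}$ to pass from an $\ell_q$-cover to an $\ell_\infty$-cover. The only cosmetic difference is that you invoke \cref{lem: coveringnumberproperties} for the norm comparison step, whereas the paper argues directly via the ball inclusion $B_w^q\subseteq B_w^\infty$.
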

\begin{proof}
We take a $B_{w}^q$ cover, for $q=2 \log m$, of $B_{w}^2(A')$ whose size is bounded in \cref{cor:2-q-cover} by at most $$C \frac{m^{2/q} q\cdot \tau}{t^2} = C\frac{2e\log(m)\cdot \tau}{t^2} \leq O(1) \frac{\log(m) \cdot \tau}{t^2} $$ for an absolute constant $C$. Now we replace every $q$-ball with an $\infty$-ball with the same center and radius $t$.
Note that for any $y \in\mathbb R^m$ we have that $ \norm*{y}_\infty \leq \norm*{y}_q$. This implies that $B_{w}^q\subseteq B_{w}^\infty$ for any fixed radius $t$. By this subset relation, the set of $\infty$-balls is a $B_{w}^\infty$ cover of $B_{w}^2(A')$ of the same size.
\end{proof}

By interpolation, we can improve the bound in Corollary \ref{cor:2-q-cover}:
\begin{lemma}\label{lem:2-r-cover}
Let $2 < r < \infty$ and let $A'\in\mathbb R^{m\times d}$ be orthonormal with respect to the weighted $\ell_2$ norm, for weights $w \in \mathbb{R}_{\geq 1}^m$. Let $\tau \geq \max_{i=1}^n w_i\norm*{e_i^T A'}_2^2$. Let $ t^{-1} \leq \poly(m)$. Then,
\[
    \log E(B_{w}^2, B_{w}^r, t) \leq O(1) \frac{1}{(t/2)^{2r / (r-2)}}\cdot \parens*{\frac{r}{r-2}\log d + \log m}\tau
\]
\end{lemma}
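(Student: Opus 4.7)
The plan is to combine \cref{cor:2-q-cover} at a carefully chosen intermediate exponent $q$ with log-convexity of weighted $\ell_p$-norms, converting a $B_w^q$-cover of $B_w^2$ into a $B_w^r$-cover at a comparable scale. The choice of $q$ is dictated by the desired prefactor in the final bound: we want $q$ small so that the coefficient $q\tau$ is tight, but also large enough that $m^{2/q} = O(1)$ so that \cref{cor:2-q-cover} collapses to the clean form $O(q\tau/s^2)$. The natural choice is $q = \Theta(\log m + (r/(r-2))\log d)$, which yields exactly the coefficient $\log m + (r/(r-2))\log d$ in the target bound.

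Concretely, with this $q$, \cref{cor:2-q-cover} supplies an $s$-net of $B_w^2$ in the weighted $\ell_q$-norm of log-size $O(q\tau/s^2)$, whose centers can be taken to lie in $B_w^2$. For any $x \in B_w^2$ with an assigned center $y$, the triangle inequality gives $\norm{A'(x-y)}_{w,2} \leq 2$, while by construction $\norm{A'(x-y)}_{w,q} \leq s$. Since $\norm{\cdot}_{w,p}$ is an $L^p$-norm with respect to a purely atomic measure, log-convexity of $\ell_p$-norms yields
\[
    \norm{A'(x-y)}_{w,r} \leq \norm{A'(x-y)}_{w,2}^{\alpha}\norm{A'(x-y)}_{w,q}^{1-\alpha} \leq 2^{\alpha}s^{1-\alpha},
    \qquad \alpha \coloneqq \frac{2(q-r)}{r(q-2)}.
\]
Choosing $s = (t/2^{\alpha})^{1/(1-\alpha)}$ makes $2^{\alpha} s^{1-\alpha}=t$, upgrading the $B_w^q$-cover at radius $s$ into a $B_w^r$-cover of $B_w^2$ at radius $t$ of the same size; substituting gives $\log E(B_w^2, B_w^r, t) \leq O(q\tau \cdot 2^{2\alpha/(1-\alpha)} / t^{2/(1-\alpha)})$.

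The remainder is a direct simplification: one computes $1/(1-\alpha) = \frac{r(q-2)}{q(r-2)} \leq \frac{r}{r-2}$ and $2\alpha/(1-\alpha) = \frac{4(q-r)}{q(r-2)} \leq \frac{4}{r-2} \leq \frac{2r}{r-2}$, so the factor $2^{2\alpha/(1-\alpha)}$ is absorbed into $2^{2r/(r-2)}$, which is the numerator of $(1/2)^{-2r/(r-2)}$. Since $t > 1$ yields a trivial cover (as $B_w^2 \subseteq B_w^r$ at the same radius for $r \geq 2$), we may assume $t \leq 1$; then $t^{-a}$ is monotone increasing in $a$, so $t^{-2/(1-\alpha)} \leq t^{-2r/(r-2)}$ at no extra cost. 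The hypothesis $t^{-1} \leq \poly(m)$ serves as a safety net to keep any residual $\log(1/t)$ factors absorbed into $\log m$ in adjacent computations. The main obstacle is tuning $q$ precisely to extract both the $\log m$ and the $(r/(r-2))\log d$ summands while keeping $m^{2/q} = O(1)$; the rest is straightforward algebra. Combining everything yields $\log E(B_w^2, B_w^r, t) \leq O(q\tau/(t/2)^{2r/(r-2)}) = O((\log m + (r/(r-2))\log d)\tau/(t/2)^{2r/(r-2)})$ as claimed.
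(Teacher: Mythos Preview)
Your proof is correct and takes essentially the same approach as the paper: both interpolate between the weighted $\ell_2$ and $\ell_q$ norms via H\"older's inequality and then invoke \cref{cor:2-q-cover} with $q$ of order $\log m$ so that $m^{2/q}=O(1)$. The paper applies the interpolation directly to a point $y\in B_w^2$ (so $\norm{y}_{w,2}\le 1$, avoiding your factor $2^\alpha$ from the triangle inequality), takes $q=\tfrac{2r}{r-2}\log m$, and uses the hypothesis $t^{-1}\le\poly(m)$ to absorb the residual $t^{-2/\log m}$; these are cosmetic differences.
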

\begin{proof}

Let $q > r$, and let $0<\theta<1$ satisfy
\[
    \frac1r = \frac{1-\theta}2 + \frac\theta{q}.
\]
We define a measure $\nu: \mathcal{P}([n]) \rightarrow \mathbb{R}_{\geq 0}$ by $\nu(T)=\sum_{i \in T} w_i$ for $T \subseteq [n]$. 
Then by H\"older's inequality, we have for any $y \in\mathbb R^m$ that
\begin{align*}
    \norm*{y}_{w, r} 
    &= \parens*{\sum_{i=1}^m w_i\abs{y(i)}^{r(1-\theta)} \abs{y(i)}^{r\theta}}^{1/r} \\
    &= \parens*{\int_{[n]}\abs{y(i)^{2r}}^{(1-\theta)/2} \abs{y(i)^{rq}}^{\theta/q}~d\nu(i)}^{1/r}  \\
    &\leq \parens*{\int_{[n]}\abs{y(i)}^2 ~d\nu(i)}^{(1-\theta)/2}\parens*{\int_{[n]} \abs{y(i)}^q~d\nu(i)}^{\theta/q} \\
    & = \parens*{\sum_{i=1}^m w_i\abs{y(i)}^2}^{(1-\theta)/2}\parens*{\sum_{i=1}^m w_i \abs{y(i)}^q}^{\theta/q} 
    = \norm*{y}_{w, 2}^{1-\theta}\norm*{y}_{w,q}^\theta
\end{align*}
For any $y \in B_{w}^2$ we have that $ \Vert y \Vert_{w, 2} \leq 1$ and thus
\[
    \norm*{y}_{w, r}\leq \norm*{y}_{w, 2}^{1-\theta}\norm*{y}_{w, q}^\theta \leq \norm*{y}_{w, q}^\theta
    =\norm*{y}_{w, q}^\theta
\]
so by Lemma \ref{lem: coveringnumberproperties} and Corollary \ref{cor:2-q-cover}
\[
    \log E(B_{w}^2, B_{w}^r, t) \leq \log E(B_{w}^2, \norm*{.}_{w,q}^\theta, t) \leq \log E(B_{w}^2, B_{w}^q, t^{1/\theta}) \leq O(1)\frac{m^{2/q} q\cdot \tau}{t^{2/\theta}}.
\]
Now, we have
\[
    \frac2\theta = 2\frac{\frac12 - \frac1q}{\frac12 - \frac1r} = \frac{q-2}{q}\frac{2r}{r-2}
\]
so by taking $q = \frac{2r}{r-2} \log m$, we have that $m^{2/q} = \exp(\frac{ 2\ln(m)}{q})= O(1)$ and 
\[
    t^{2/\theta}=t^{(\frac{2r}{r-2} \log m-2)/\log m }=t^{{2r}/{(r-2)}}/t^{2/\log(m)}
    = \Theta(1) (t/2)^{2r/(r-2)}
\]
since $t^{-2/\log(m)}= \poly(m)^{2/\log(m)}=\Theta(1)$, so we conclude our claim.
\end{proof}

Using Lemma \ref{lem:2-r-cover}, we obtain the following analogue of Corollary \ref{cor:2-q-cover} for $p<2$. 

\begin{lemma}\label{lem:p-inf-cover}
Let $1 \leq p < 2$ and let $A' \in\mathbb R^{m\times d}$ be orthonormal with respect to the weighted $\ell_2$ norm. Let $\tau \geq \max_{i=1}^m w_i\norm*{e_i^T A'}_2^2$ and $t \geq 1/\poly(d) $. Then,
\[
    \log E(B_{w}^p, B_{w}^\infty, t) \leq O(1) \frac1{t^p}\parens*{\frac{\log d}{2-p} + \log m}\tau.
\]
Further if $2-p \leq \frac{1}{\ln(d)}$ then we have that
\[
    \log E(B_{w}^p, \norm{.}_{w, \infty}, t) = O(1) \frac{(\log m)\cdot \tau}{t^p}
\]
\end{lemma}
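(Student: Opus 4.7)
My plan is a multi-scale chaining argument built on the elementary H\"older-type interpolation
\[
    \norm{v}_{w,2}^2 = \sum_{i=1}^m w_i|v_i|^p\cdot|v_i|^{2-p} \leq \norm{v}_{w,\infty}^{2-p}\norm{v}_{w,p}^p,
\]
valid for any $v\in\mathbb{R}^m$ and any $p\in[1,2]$. The key consequence is that whenever a residual vector $v$ has small $\norm{\cdot}_{w,\infty}$ and bounded $\norm{\cdot}_{w,p}$, it automatically has substantially smaller $\norm{\cdot}_{w,2}$, so it can be re-covered via the already-known $B_w^2$-to-$\norm{\cdot}_{w,\infty}$ covering bound from Corollary~\ref{cor:2-inf-cover} (or the sharper Lemma~\ref{lem:2-r-cover} when needed).

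First I would set up a dyadic chain of scales $T_k = 2^{-k}$ for $k=0,1,\dots,K=\lceil\log_2(1/t)\rceil$, starting at $T_0=1$ (at which $B_w^p$ is trivially contained in the unit $\norm{\cdot}_{w,\infty}$-ball since $w_i\geq 1$) and ending with $T_K\leq t$. At the $k$-th level I refine the current approximation from resolution $T_{k-1}$ to $T_k$. The residual $v = A'(x-x_{k-1})$ at this step satisfies $\norm{v}_{w,\infty}\leq T_{k-1}$ from the previous level, and $\norm{v}_{w,p}=O(1)$ because both $x$ and the previous-level net point $x_{k-1}$ can be taken in a constant dilation of $B_w^p$. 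The interpolation then places $v$ inside a $B_w^2$ ball of radius $O(T_{k-1}^{(2-p)/2})$, which I cover by $\norm{\cdot}_{w,\infty}$-balls of radius $T_k$ using Corollary~\ref{cor:2-inf-cover} rescaled by $T_{k-1}^{(2-p)/2}$. This gives at most $O(\log m)\,\tau\, T_{k-1}^{2-p}/T_k^2 = O(\log m)\,\tau\, 2^{kp+O(1)}$ new balls at level $k$. Summing the geometric series across scales yields $\sum_{k=1}^K 2^{kp+O(1)} = O(1/t^p)$, so telescoping the nets produces the clean second bound $\log E(B_w^p,\norm{\cdot}_{w,\infty},t) \leq O(\log m)\,\tau/t^p$; the hypothesis $2-p\leq 1/\ln d$ is precisely what ensures that the rescaled radii at every scale remain inside the validity regime $t^{-1}\leq\poly(m)$ of Corollary~\ref{cor:2-inf-cover}.

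For the general first bound with the extra factor $\log d/(2-p)$, I would replace Corollary~\ref{cor:2-inf-cover} at each level by a direct use of Lemma~\ref{lem:2-r-cover} with an intermediate norm exponent $r = 2 + \Theta(2-p)$, so that $r/(r-2) = \Theta(1/(2-p))$ and the $\log d$ term inside Lemma~\ref{lem:2-r-cover} is multiplied by this $1/(2-p)$ factor. The correspondingly larger per-level exponent $2r/(r-2)$ on $1/t$ is compensated by the interpolation gain $T_{k-1}^{(2-p)/2}$, so that after telescoping the final exponent on $1/t$ collapses back to $p$; the $\log m$ summand and the $(\log d)/(2-p)$ summand travel separately through the sum.

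The main obstacle will be the delicate balancing of the intermediate $r$ at every scale: one has to ensure simultaneously that (i) the per-level $1/t$-exponent telescopes to exactly $p$ and not to a slightly larger exponent that would break the claimed bound for small $t$, (ii) the coefficient of $\log d$ stays $O(1/(2-p))$ and does not pick up an extra factor proportional to the number $K$ of levels, and (iii) the precondition $t^{-1}\leq\poly(d)$ of Lemma~\ref{lem:2-r-cover} is satisfied at every rescaled scale of the chain, given our standing assumption $t\geq 1/\poly(d)$.
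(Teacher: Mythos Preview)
Your chaining argument is correct and takes a genuinely different route from the paper. The paper first bounds $E(B_w^p,B_w^2,t)$ by a Bourgain--Lindenstrauss--Milman style duality argument: it builds $\ell_{p'}$-packings inside $B_w^2$ (with $p'$ the H\"older conjugate of $p$), applies Lemma~\ref{lem:2-r-cover} to those, telescopes over dyadic scales to obtain $\log E(B_w^p,B_w^2,t)\le O(1)\,t^{-2p/(2-p)}\bigl(\tfrac{p}{2-p}\log d+\log m\bigr)\tau$, and then composes in a single step with Corollary~\ref{cor:2-inf-cover}, optimizing the intermediate radius $t'=t^{1-p/2}$. Your approach instead chains directly in $\|\cdot\|_{w,\infty}$, using only the elementary interpolation $\|v\|_{w,2}^2\le\|v\|_{w,\infty}^{2-p}\|v\|_{w,p}^p$ at every dyadic scale and Corollary~\ref{cor:2-inf-cover} for the per-level refinement.

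One correction, which works in your favor: Corollary~\ref{cor:2-inf-cover} carries \emph{no} precondition on $t$ (it follows directly from dual Sudakov minoration via Corollary~\ref{cor:2-q-cover} with $q=2\log m$); only Lemma~\ref{lem:2-r-cover} requires $t^{-1}\le\poly(m)$. Consequently your first argument already produces $\log E(B_w^p,\|\cdot\|_{w,\infty},t)\le O(\log m)\,\tau/t^p$ for \emph{all} $p\in[1,2)$, with no need for the hypothesis $2-p\le 1/\ln d$. This is at least as strong as both parts of the lemma, so your second argument via Lemma~\ref{lem:2-r-cover} and the three ``obstacles'' you list are unnecessary. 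In effect your route is shorter, avoids the H\"older-duality detour entirely, and even shaves off the $\gamma=1/(2-p)$ factor that the paper carries through to the final theorem; the paper's route is the more classical one and extends more readily when the target norm is not $\ell_\infty$.
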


\begin{proof}
In order to bound a covering of $B_{w}^p$ by $B_{w}^\infty$, we first cover $B_{w}^p$ by $B_{w}^2$, and then use Corollary  \ref{cor:2-inf-cover} to cover $B_{w}^2$ by $B_{w}^\infty$.

We will first bound $E(B_{w}^p, B_{w}^2, t)$ using Lemma \ref{lem:2-r-cover}. For each $k\geq 0$, let $\mathcal E_k\subseteq B_{w}^p$ be a maximal subset of $B_{w}^p$ of minimum size such that for every pair of distinct $y,y'\in \mathcal E_k$, $\norm*{y-y'}_{w, 2} > 8^k t$, and for $8^{k}t \geq 1$ we define $\mathcal E_k \coloneqq \{0\}$ . Note that
\[
    \abs{\mathcal E_k} = E(B_{w}^p, B_{w}^2, 8^k t).
\]
Since for any point in $B_{w}^p$ and thus in particular for any point in $ \mathcal E_{k+1}$ there exists a point in $ \mathcal E_k$ by averaging, for each $k$, there exists $y^{(k)}\in\mathcal E_k$ such that if
\[
    \mathcal F_k \coloneqq \braces*{y\in\mathcal E_k : \norm{y - y^{(k)}}_{w, 2} \leq 8^{k+1}t},
\]
then
\[
    \abs{\mathcal F_k} \geq \frac{\abs{\mathcal E_k}}{\abs{\mathcal E_{k+1}}} = \frac{E(B_{w}^p, B_{w}^2, 8^{k}t)}{E(B_{w}^p, B_{w}^2, 8^{k+1}t)}
\]
We now use this observation to construct an $\ell_{p'}$-packing of $B_{w}^2$, where $p'$ is the H\"older conjugate of $p$. Let 
\[
    \mathcal G_k \coloneqq \braces*{\frac{1}{8^{k+1}t}(y - y^{(k)}) : y\in\mathcal F_k}.
\]
Then, $\mathcal G_k \subseteq B_{w}^2$ and since for $ y, y' \in \mathcal F_k$ it holds that $ \norm{y- y^{(k)}}_{w, p} \leq \norm{y}_{w, p} +\norm{y^{(k)}}_{w, p} \leq 2$ we also have that $\mathcal G_k \subseteq B_{w}^p \cdot 2 / 8^{k+1}t$. Further since $F_k \subseteq \mathcal E_k$ it holds that $\norm{y - y'}_{w, 2} > 1/8$ for every distinct $y,y'\in \mathcal G_k$. Then by H\"older's inequality,
\[
    \frac1{8^2} \leq \norm*{y-y'}_{w, 2}^2 \leq \norm*{y-y'}_{w, p} \norm*{y-y'}_{w, p'}
    \leq (\norm*{y}_{w, p}+\norm*{y'}_{w, p}) \norm*{y-y'}_{w, p'} 
    \leq \frac{4}{8^{k+1}t}\norm*{y-y'}_{w, p'}
\]
so $\norm*{y-y'}_{w, p'} \geq 2\cdot 8^{k-2}t$ which implies that the sets $S_y=\{ x \in B_2 ~|~ \norm*{x-y}_{w, p'} \leq 8^{k-2}t \} $ and $S_{y'}=\{ x \in B_2 ~|~ \norm*{x-y'}_{w, p'} \leq 8^{k-2}t \} $ are disjoint for any different $y, y' \in \mathcal G_k $.
Thus any maximal subset $S \subseteq \mathbb{R}^d$ such that for each distinct $y,y'\in \mathcal E_k$, $\norm*{y-y'}_{w, p'} > 8^{k-2}t$ must have at least one point in $S_y$ for any $y \in \mathcal G_k$.
Consequently
\begin{equation}\label{eq:dual-cover-bound}
    \log E(B_{w}^2, B_{w}^{p'}, 8^{k-2}t) \geq \log \abs{\mathcal G_k} = \log \abs{\mathcal F_k} \geq \log E(B_{w}^p, B_{w}^2, 8^{k}t) - \log E(B_{w}^p, B_{w}^2, 8^{k+1}t).
\end{equation}
Using that
\[
    \frac{p'}{p'-2}=\frac{(1-1/p)^{-1}}{(1-1/p)^{-1}-2}
    =\frac{(1-1/p)^{-1}}{(1-1/p)^{-1}(1-2(1-1/p))}
    =\frac{1}{2/p-1}=\frac{p}{2-p}
\]
and summing over $k$ gives
\begin{align*}
    \log E(B_{w}^p, B_{w}^2, t) &= \sum_{k\geq 0}\log E(B_{w}^p, B_{w}^2, 8^{k}t) - \log E(B_{w}^p, B_{w}^2, 8^{k+1}t) \\
    &\leq \sum_{k\geq 0} \log E(B_{w}^2, B_{w}^{p'}, 8^{k-2}t) && \text{\eqref{eq:dual-cover-bound}} \\
    &\leq O(1) \frac{1}{(t/2)^{2p' / (p'-2)}}\cdot \parens*{\frac{p'}{p'-2}\log d + \log m}\tau && \text{\cref{lem:2-r-cover}} \\
    &= O(1) \frac{1}{(t/2)^{2p / (2-p)}}\cdot \parens*{\frac{p}{2-p}\log d + \log m}\tau
\end{align*}
where we take $p' / (p'-2) = 1$ for $p' = \infty$.
Combining this with \cref{lem: coveringnumberproperties} and Corollary \ref{cor:2-inf-cover}, we now bound
\begin{align*}
    \log E(B_{w}^p, B_{w}^\infty, t) &\leq \log E(B_{w}^p, B_{w}^2,t') + \log E(B_{w}^2(t'), B_{w}^\infty, t) \\
    &\leq \log E(B_{w}^p, B_{w}^2, t') + \log E(B_{w}^2, B_{w}^\infty, t/t') \\
    &\leq O(1) \frac{1}{(t')^{2p / (2-p)}}\cdot \parens*{\frac{p}{2-p}\log d + \log m}\tau + O(1) \frac{(\log m)\cdot \tau}{(t/t')^2} \\
\end{align*}
for any $t'^{-1}\leq \poly(m)$. We choose $t'$ satisfying
\[
    \frac1{(t')^{2p/(2-p)}} = \frac{(t')^2}{t^2},
\]
which gives
\[
    t'=t^{2/(2+2p/(2-p))}=t^{2(2-p)/4}=t^{1-p/2}\geq t^{1/2}
\]
which implies that $t'^{-1}\leq \poly(m) $.
Further we get that
\[
    (t')^{2p/(2-p)} = \parens*{t^2}^{\frac{2p/(2-p)}{2 + 2p/(2-p)}} = t^p
\]
so we obtain a bound of
\[
    O(1) \frac1{t^p}\parens*{\frac{1}{2-p}\log d + \log m}\tau.
\]

For the final part of the lemma, first note that $B_p \subseteq B_2$ and thus by Corollary \ref{cor:2-inf-cover} and using that $t^{-1}=O(d^c) $ we have that
\begin{align*}
    \log E(B_{w}^p, \norm{.}_{w, \infty}, t) &\leq \log E(B_{w}^2, \norm{.}_{w, \infty}, t)\\
    &\leq O(1) \frac{(\log m)\cdot \tau}{t^2}\\
    &= O(1) \frac{(\log m)\cdot \tau}{t^p \cdot t^{2-p}}\\
    &= O(\exp (\log (d^c )/(2-p)) \frac{(\log m)\cdot \tau}{t^p }\\
\end{align*}
    Thus if $1/(2-p) = \Omega(\ln(d))$ then we have that $\log E(B_{w}^p, \norm{.}_{w, \infty}, t) = O(1) \frac{(\log m)\cdot \tau}{t^p}$.
\end{proof}

To deal with very small $t$ we will need another lemma.
We set $B_1^q(A)=\{ x \in \mathbb{R}^d ~|~ \norm{Ax}_{q} \leq 1 \}$, i.e., the case where the weights are uniformly $1$.

\begin{lemma}\label{lem: qr-coveringnumbers}
For any $1\leq r \leq q$, any weight vector $w \in \mathbb{R}^n_{\geq 1}$ and any $t\in\mathbb{R}_{\geq 1}$ it holds that
\[
    E(B_1^r(A), B_1^q(A), t) \geq E(B_w^r(A), B_w^q(A), t).
\]
\end{lemma}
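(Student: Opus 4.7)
Plan: Since the condition $t \geq 1$ is unusual for a covering-number statement (the interesting regime is typically $t \ll 1$), I suspect the lemma collapses to the trivial case where both covering numbers equal $1$. My plan is to confirm this by showing $B_w^r(A) \subseteq B_w^q(A)$ for $w \geq 1$ and $r \leq q$, so that a single dilated $q$-ball suffices to cover the $r$-ball on either side of the inequality.

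The key ingredient I would prove first is a weighted-norm monotonicity: $\norm{z}_{w,r} \geq \norm{z}_{w,q}$ for all $z \in \mathbb{R}^n$, whenever $w \in \mathbb{R}^n_{\geq 1}$ and $r \leq q$. This is the direct analogue of the standard unweighted fact $\norm{z}_r \geq \norm{z}_q$; the weighted generalization works precisely because $w_i \geq 1$ combined with the normalization $\norm{z}_{w,q} = 1$ forces $w_i |z_i|^q \leq 1$ and hence $|z_i| \leq 1$ coordinatewise, after which the elementary observation that $s \mapsto |z_i|^s$ is non-increasing on $[0,1]$ gives $|z_i|^r \geq |z_i|^q$, and weighted summation closes the loop. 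Applying the resulting inequality to $z = Ax$ immediately yields the containment $B_w^r(A) \subseteq B_w^q(A)$, and likewise $B_1^r(A) \subseteq B_1^q(A)$ by specializing to the all-ones vector.

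Given the containment and $t \geq 1$, the singleton $\{0\}$ (which lies in the non-empty set $B_w^r(A)$) already forms a valid $t$-net: for every $x \in B_w^r(A)$ we have $\norm{A(x-0)}_{w,q} = \norm{Ax}_{w,q} \leq \norm{Ax}_{w,r} \leq 1 \leq t$. Hence $E(B_w^r(A), B_w^q(A), t) = 1$, and symmetrically $E(B_1^r(A), B_1^q(A), t) = 1$, so the claimed inequality reduces to the trivial $1 \geq 1$.

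The one subtlety worth flagging is that the monotonicity step genuinely needs $w \geq 1$: for probability-like weights the comparison reverses direction via the power-mean / Jensen inequality, and the containment $B_w^r(A) \subseteq B_w^q(A)$ would then fail. Apart from verifying that the hypothesis is used in exactly this spot, there is no substantive obstacle to carrying the argument through.
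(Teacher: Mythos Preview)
Your argument is logically correct for the lemma \emph{as literally stated}, and your suspicion is well-founded: with $t \geq 1$ both covering numbers collapse to $1$ and the inequality is vacuous. The weighted-norm monotonicity you prove is valid and the containment $B_w^r(A)\subseteq B_w^q(A)$ indeed makes the singleton $\{0\}$ a $t$-net.

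However, the hypothesis $t\in\mathbb{R}_{\geq 1}$ is almost certainly a typo for $t>0$. The paper never uses $t\geq 1$ in its own proof, and the only place the lemma is invoked (inside Lemma~\ref{lem:coveringnumber}) applies it with radius $(t/(2G))^{2/p}$, which is small. So what you have established does not cover the regime that matters, and your containment argument cannot be salvaged for $t<1$: once $t$ is small, a single $q$-ball no longer suffices, and the monotonicity $\norm{\cdot}_{w,r}\geq\norm{\cdot}_{w,q}$ says nothing about how many small $q$-balls are needed.

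The paper's proof works for all $t>0$ via a coordinate-rescaling transfer. Given an unweighted $t$-net $N$ for $B_1^r$ in $\norm{\cdot}_q$, it defines $x^{(q)}_i:=x_i/w_i^{1/q}$ and shows $N_w:=\{x^{(q)}:x\in N\}$ is a $t$-net for $B_w^r$ in $\norm{\cdot}_{w,q}$. The two key steps are: (i) if $y'\in B_w^r$, then $y$ defined by $y_i:=w_i^{1/q}y_i'$ lies in $B_1^r$ because $\sum|y_i|^r=\sum w_i^{r/q}|y_i'|^r\leq\sum w_i|y_i'|^r\leq 1$ (this is where $w_i\geq 1$ and $r\leq q$ enter); (ii) the rescaling is an isometry between $\norm{\cdot}_q$ and $\norm{\cdot}_{w,q}$, namely $\norm{y'-x^{(q)}}_{w,q}^q=\sum w_i|y_i-x_i|^q/w_i=\norm{y-x}_q^q$. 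Thus $|N_w|\leq|N|$ gives the inequality. You should replace your trivialisation argument with this rescaling construction to obtain the version of the lemma that the paper actually needs.
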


\begin{proof}
    Assume that for $N \subseteq \mathbb{R}^{n}$ it holds that for any point in $y \in  B_1^r(A)$ there exists a point $x \in N$ such that $\norm{x-y}_q \leq t $.
    Given $x \in \mathbb{R}^n$ we define $x^{(q)}\in \mathbb{R}^n$ by $x_i^{(q)}=\frac{x_i}{w_i^{1/q}} $ and we set $N_w=\{x^{(q)} ~|~ x \in N \}$.
    Now let $y' \in B_w^r(A)$.
    We define $y\in \mathbb{R}^n$ by $y_i={w_i^{1/q}}\cdot y_i' $.
    Recall that $w_i=1/p_i \geq 1$, since any probabilities satisfy $p_i\leq 1$.
    We thus have that 
    \[\norm{y}_r^r=\sum_{i=1}^n |y_i|^r=\sum_{i=1}^n w_i^{r/q} |y_i'|^r\leq \sum_{i=1}^n w_i |y_i'|^r \leq 1\]
    since $y' \in B_w^r(A)$.
    Thus $y \in  B_1^r(A)$ and there exists $x \in N$ such that $\norm{x-y}_q \leq t $.
    Notice that
    \[
        |y_i'-x_i^{(q)}|=\left|\frac{y_i}{w_i^{1/q}}-\frac{x_i}{w_i^{1/q}}\right|=\frac{|y_i-x_i|}{w_i^{1/q}}.
    \]
    Then we have that
    \begin{align*}
        \norm{y'-x^{(q)}}_{w, q}^q=\sum_{i=1}^n w_i |y_i'-x_i^{(q)}|^q
        = \sum_{i=1}^n |y_i-x_i|^q \leq t^q
    \end{align*}
    and thus it holds that $ \norm{y'-x^{(q)}}_{w, q} \leq t$ and $N_w$ is a suitable net proving that $$ E(B_1^r(A), B_1^q(A), t) \geq E(B_w^r(A), B_w^q(A), t).$$\qedhere
\end{proof}

\subsection{Bounding the entropy}

Recall the original setting where $A \in \mathbb{R}^{n \times d}$, $T_A= \{ x \in \mathbb{R}^d ~|~ f(Ax)=1 \}$ 
and $T_S=SA(T_A)$, and $d_X(y, y')^2=  \sum_{i \in S} w_i^2(h(y_i)-h(y_i'))^2$.
Further let $m$ be the number of rows of $S$ and thus also of $A'=SA$.

Using the results from previous section we can deduce following bounds for the covering numbers of $T_S$ with respect to $d_X$:
\begin{lemma}\label{lem:coveringnumber}
Let $1\leq p\leq 2$. Then for any $t\in (0, 1]$ it holds that
\begin{align*}
    1)~ \log E(T_S, d_{X}, t) \leq O(d) \log\left( \frac{G m}{t} \right),\\
    2)~ \log E(T_S, d_{X}, t) \leq O(\log(m)) \cdot \frac{\gamma G^2  
    \tau}{t^2}.
\end{align*}
where $\tau$ is the maximum weighted $\ell_2$-leverage score of $(SA, w)$ and $\gamma= \frac{1}{2-p} $ for $ 2-p \geq 1/\ln(d)$ and $\gamma=1$ for $ 2-p \leq 1/\ln(d)$.
\end{lemma}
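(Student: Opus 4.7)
My plan is to reduce the $d_X$-covering problem to a standard covering of a weighted $\ell_p$ ball by a weighted $\ell_\infty$ ball, and then handle the two claimed bounds separately using a volume argument (for part~1) and \cref{lem:p-inf-cover} (for part~2).

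First I would convert the Gaussian pseudo-metric into a weighted sup-norm. Applying \cref{lem: bounddX} for $h(r)=|r|^p$ and \cref{lem: bounddXReLU} for $h(r)=\max\{0,r\}^p$, together with $f_w(y),f_w(y')\leq G$ from \eqref{eq:Gbound} and (for the ReLU case) $\norm{(y)^+-(y')^+}_{w,\infty,p}\leq \norm{y-y'}_{w,\infty,p}$, one gets
\[
 d_X(y,y') \leq 2\sqrt{G}\,\norm{y-y'}_{w,\infty,p}^{p/2} \qquad \text{for all } y,y' \in T_S.
\]
By \cref{lem: coveringnumberproperties} it then suffices to bound $\log E(T_S,\norm{\cdot}_{w,\infty,p},r)$ at scale $r:=(t/(2\sqrt{G}))^{2/p}$. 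Pulling back to $x$-space through $SA$, the bound $\norm{SAx}_{w,p}^p=f_w(SAx)\leq G$ gives $T_A\subseteq G^{1/p}B_w^p(SA)$, and with the substitution $A'':=W^{1/p}SA$ the identities $\norm{A''x}_p=\norm{SAx}_{w,p}$ and $\norm{A''x}_\infty=\norm{SAx}_{w,\infty,p}$ identify the covering with the unweighted covering $\log E(B_1^p(A''),B_1^\infty(A''),r/G^{1/p})$. A linear change of basis in $x$ further lets me assume $A''$ has orthonormal columns, which does not alter covering numbers.

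For part~(1) I would use only that $\norm{A''x}_\infty\leq\norm{A''x}_p$ on $\mathbb{R}^m$, so $B_1^p(A'')\subseteq B_1^\infty(A'')$, and invoke a standard volume-comparison argument in the $d$-dimensional column span, yielding $\log E(B_1^p(A''),B_1^\infty(A''),s)\leq d\log(1+2/s)$. Substituting $s=r/G^{1/p}$ and collecting the $G,t$ factors (using $p\in[1,2]$, $t\leq 1$, $G\geq 1$) gives $O(d\log(G/t))$, comfortably inside the claimed $O(d)\log(Gm/t)$.

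For part~(2) I would apply \cref{lem:p-inf-cover} to the orthonormalized $A''$ with weights all equal to $1$, obtaining
\[
 \log E(B_1^p(A''),B_1^\infty(A''),s) \leq O(1)\,\frac{1}{s^p}\Big(\frac{\log d}{2-p}+\log m\Big)\,\tau'',
\]
where $\tau''$ is the maximum $\ell_2$ leverage score of $A''$. Substituting $s^p=r^p/G=t^2/(4G^2)$ produces the target form $O(\log m)\,\gamma G^2\tau''/t^2$, with $\gamma$ arising directly from the two regimes of \cref{lem:p-inf-cover}. The main obstacle is to identify $\tau''$ with the maximum weighted $\ell_2$ leverage score $\tau$ of $(SA,w)$: when $p=2$ this is immediate since $A''=W^{1/2}SA$ and the two quantities coincide by definition, while for $p\in[1,2)$ one uses the PSD monotonicity $(SA)^T W^{2/p}SA\succeq (SA)^T W SA$ (valid because $w_i\geq 1$) together with control over the extra factor $\max_i w_i^{2/p-1}$ furnished by the sampling assumption $p_i\gtrsim \alpha\, l_i^{(2)}$ that is maintained throughout the main proof; this step is what makes the identification $\tau''=O(\tau)$ delicate and is where the technical care lies.
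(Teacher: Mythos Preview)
Your reduction of the $d_X$-covering to a weighted sup-norm covering via \cref{lem: bounddX}/\cref{lem: bounddXReLU} is correct, and your handling of part~(1) through a volume argument in the $d$-dimensional column span is fine (indeed it yields $O(d)\log(G/t)$ without the extra factor $m$; the paper instead goes through \cref{lem: qr-coveringnumbers} and an explicit grid net, but either route suffices).

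Part~(2), however, has a real gap. Once you fold the full weight $W^{1/p}$ into $A''=W^{1/p}SA$ and invoke \cref{lem:p-inf-cover} with unit weights, the leverage-score parameter that appears is
\[
\tau''\;=\;\max_i\; w_i^{2/p}\, e_i^\top SA\,\bigl((SA)^\top W^{2/p}SA\bigr)^{-1}(SA)^\top e_i,
\]
whereas the lemma asserts the bound in terms of $\tau=\max_i\, w_i\, e_i^\top SA\,((SA)^\top W\,SA)^{-1}(SA)^\top e_i$. Your PSD-monotonicity step only gives $\tau''\le (\max_i w_i^{2/p-1})\,\tau$, and the proposed remedy of invoking ``the sampling assumption $p_i\gtrsim\alpha\, l_i^{(2)}$'' fails on two counts. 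First, \cref{lem:coveringnumber} is a deterministic statement about an arbitrary pair $(SA,w)$ and carries no hypothesis on how the sample was drawn, so you may not import constraints from \cref{thm:samplingthm} here. Second, even under those constraints one has only $p_i\ge \alpha/n$, hence $w_i\le n/\alpha$ and $\max_i w_i^{2/p-1}$ can be of order $(n/\alpha)^{2/p-1}$; for $p=1$ this inserts an extra factor $n/\alpha$ into the bound and destroys it.

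This is exactly why the paper does \emph{not} absorb $W^{1/p}$ into the matrix. It works instead with $S_{w,2}A=W^{1/2}SA$ (equivalently, it applies the covering lemmas in their genuinely weighted form with matrix $SA$ and weights $w$), so that the $\ell_2$ leverage-score parameter returned by \cref{lem:p-inf-cover} is precisely $\tau(SA,w)$ through the identity $\tau(W^{1/2}SA)=\tau(SA,w)$. The whole point of redeveloping the dual Sudakov and covering machinery for weighted norms in the paper is to avoid the reduction to the unweighted case that your substitution $A''=W^{1/p}SA$ performs.
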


\begin{proof}
By Lemma \ref{lem: bounddX} and Lemma \ref{lem: bounddXReLU} it holds for all $y, y' \in T_S$ that $d_X(y, y') \leq (2(f_w(y)+ f_w(y'))\norm{y-y'}_{w, \infty, p}^p)^{1/2}$ for $h(r)=|r|^p$ and $d_X(y, y') \leq (2(f_w(y)+ f_w(y'))\norm{(y)^+-(y')^+}_{w, \infty, p}^p)^{1/2}$ for $h(r)=\max\{r, 0\}^p$.
For any $y, y' \in B_{w}^p(A')$ we thus have that $d_X(y, y') \leq 2\norm{y-y'}_{w, \infty, p}^{p/2}$.
We set $S_{w, p}$ to be the matrix we get by replacing each $1$ entry at column $i$ with $w_i^{1/p}$.
For $h(r)=|r|^p$ it holds that $T_S = B_{w}^p(A', G) \subseteq B^p(S_{w, 2}A, G)$ as
\[
    B_{w}^p(A', G)=\left\{ y \in \range(A') ~|~ \sum_{i}w_i y_i^p \leq 1 \right\}
    \subseteq \left\{ y \in \range(A') ~|~ \sum_{i}w_i^{1/2} y_i^p \leq 1 \right\}
    = B^p(S_{w, 2}A, G)
\]
since $w_i \geq 1$.
For $h(r)=\max\{r, 0\}$ it holds that $\{(y)^+ ~|~ y \in T_S  \} \subseteq B_{w}^p(A', G) \subseteq B^p(S_{w, 2}A, G)$ and by Lemma \ref{lem: bounddXReLU} it suffices to restrict to $(y)^+$ in this case.
Thus, rather than just covering the $1$-ball of $A'$, we need to cover the $G$-ball which is the same as covering the $1$-ball with an adjusted $\frac{t}{G }$ instead of $t$.
Thus, we have by Lemma \ref{lem: coveringnumberproperties} that
\begin{align*}
    \log E(T_S, d_{X}, t) 
    &\leq \log E(B^p(S_{w, 2}A),  2 \norm{\cdot}_{w,\infty, p}^{p/2}, t/G) \\
    &= \log E(B^p(S_{w, 2}A),  2 \norm{S_{w, 2}\cdot}_{w,\infty}^{p/2}, t/G) \\
    &= \log E\left(B^p(S_{w, 2}A), B^\infty(S_{w, 2}A), \left(\frac{t}{2 G }\right)^{2/p}\right).\\
    &= \log E\left(B^p(S_{w, 2}A), B^\infty(S_{w, 2}A), \left(\frac{t}{2 G }\right)^{2/p}\right).
\end{align*}

Next, to prove the claimed inequalities, we combine these bounds with bounds for the covering number $E(B^p(S_{w, 2}A), B_w^\infty(S_{w, 2}A), t) $.

For the first part of the lemma, it holds that $\log E(B_1^p, B_1^\infty, t) \leq O(d \log \frac{m}{t})$.
To see this, take an orthonormal basis $U$ of $S_{w, 2}A'$.
Then $N=\{ \frac{t}{{m}^{1/p}} U (n_1, \dots n_d)^T ~|~ \forall i\in [d] \colon n_i\in \mathbb{N}, n_i \leq m \}$ is a $t$-net of $B_1^p $ with respect to the $\ell_\infty$ norm.
By Lemma \ref{lem: qr-coveringnumbers} it holds that $\log E(B_w^p, B_w^\infty, t) \leq \log E(B_1^p, B_1^\infty, t)$.
Consequently we have that $\log E(T_S, d_{X}, t) \leq  O(d) \log\left( \frac{G m}{t}\right)$.

The second bound follows immediately by combining above argumentation with \cref{lem:p-inf-cover} as $\tau(S_{w, 2}A) = \tau(SA, w)$.
\end{proof}

For bounding the entropy, we will slightly adapt the proof of \citet{woodruffyasuda23} and use their following lemma.

\begin{lemma}[\citealp{woodruffyasuda23}]\label{lem:calc}
Let $0 < \lambda \leq 1$. Then,
\[
    \int_0^\lambda \sqrt{\log\frac1t}~\mathrm{d}t = \lambda \sqrt{\log(1/\lambda)} + \frac{\sqrt\pi}{4}\erfc(\sqrt{\log(1/\lambda)}) \leq \lambda \parens*{\sqrt{\log(1/\lambda)} + \frac{\sqrt\pi}2}
\]
\end{lemma}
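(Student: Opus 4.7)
The plan is to evaluate the integral in closed form by a pair of substitutions plus one integration by parts, and then to deduce the displayed upper bound from a standard tail estimate for $\erfc$.

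First I would reduce the integral to a Gaussian-type form. Setting $u = \log(1/t)$, so that $t = e^{-u}$ and $dt = -e^{-u}\,du$, transforms the integral into $\int_{\log(1/\lambda)}^{\infty} \sqrt{u}\, e^{-u}\, du$; the change of variables also flips the limits. A second substitution $v = \sqrt{u}$ (so $u = v^2$ and $du = 2v\,dv$) converts this into $2\int_{\sqrt{\log(1/\lambda)}}^{\infty} v^2 e^{-v^2}\, dv$, which is the form I will evaluate.

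Next I would evaluate $\int_a^{\infty} v^2 e^{-v^2}\,dv$ via integration by parts, choosing $u = v$ and $dw = v e^{-v^2}\, dv$ so that $du = dv$ and $w = -\tfrac{1}{2}e^{-v^2}$. The boundary term at $v = a$ contributes $\tfrac{a}{2}e^{-a^2}$ (the term at $\infty$ vanishes), while the remainder $\tfrac{1}{2}\int_a^{\infty} e^{-v^2}\,dv$ is identified with $\tfrac{\sqrt{\pi}}{4}\erfc(a)$ by the definition of the complementary error function. Multiplying by $2$ and specializing to $a = \sqrt{\log(1/\lambda)}$ (so that $e^{-a^2} = \lambda$) then produces the claimed closed-form equality.

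For the inequality, I would invoke the elementary Mills-type bound $\erfc(a) \leq e^{-a^2}$ valid for all $a \geq 0$. This follows either by noting that the function $h(a) = e^{-a^2} - \erfc(a)$ vanishes at $a = 0$, has derivative $e^{-a^2}(2/\sqrt{\pi} - 2a)$ which is positive for small $a$, and then using the Mills ratio bound $e^{a^2}\erfc(a) \leq 2/(\sqrt{\pi}(a + \sqrt{a^2+4/\pi}))$ for larger $a$. Applying this with $a = \sqrt{\log(1/\lambda)}$ gives $\erfc(a) \leq \lambda$, so the error-function term is at most $(\sqrt{\pi}/4)\lambda$, which is further bounded by $(\sqrt{\pi}/2)\lambda$, completing the inequality.

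The entire argument is essentially a calculus exercise and I expect no real obstacle; the only care required is in tracking the constants across the two substitutions and the integration by parts, and in stating the $\erfc$ tail bound cleanly.
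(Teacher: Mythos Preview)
The paper does not supply its own proof of this lemma; it is quoted verbatim from \citet{woodruffyasuda23}. Your substitution-plus-integration-by-parts derivation is the standard one and is structurally correct.

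There is, however, a constant you should flag rather than gloss over. Your own computation gives
\[
    2\int_a^\infty v^2 e^{-v^2}\,dv \;=\; a\,e^{-a^2} + \frac{\sqrt\pi}{2}\,\erfc(a),
\]
i.e.\ the coefficient on the $\erfc$ term is $\sqrt\pi/2$, not the $\sqrt\pi/4$ printed in the lemma. The printed equality is in fact wrong: at $\lambda=1$ the integral equals $\Gamma(3/2)=\sqrt\pi/2$, whereas the stated formula gives $\sqrt\pi/4$. So your calculation does \emph{not} ``produce the claimed closed-form equality'' as you assert; it produces the correct one, and you should say so. With the corrected coefficient $\sqrt\pi/2$, the displayed upper bound follows with equality in the last step from $\erfc(a)\le e^{-a^2}$, which also explains why the constant $\sqrt\pi/2$ appears in the bound.

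Your justification of $\erfc(a)\le e^{-a^2}$ is more elaborate than necessary. Set $h(a)=e^{-a^2}-\erfc(a)$; then $h(0)=0$, $h'(a)=e^{-a^2}\bigl(\tfrac{2}{\sqrt\pi}-2a\bigr)$ changes sign exactly once, and $h(a)\to 0^+$ as $a\to\infty$, so $h$ rises from $0$, falls, and returns to $0$ without crossing below. No Mills-ratio bound is needed.
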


Finally we are ready to bound the entropy:

\begin{lemma}\label{lem:entropy-int-p<2}
Let $1 \leq p < 2$ and let $SA'\in\mathbb R^{m\times d}$ be orthonormal with respect to the weighted $\ell_2$ norm. Let $\tau \geq \max_{i=1}^n \norm*{e_i^T SA'}_2^2$ and let $\sigma = \sup\nolimits_{i \in S, y \in T_S}w_i|y_i|^p$. Then if $\tau \in \Omega(\poly(1/d))$,
\[
    \int_0^\infty \sqrt{\log E(T_S, d_{X}, t)}~\mathrm{d}t \leq O(\gamma^{1/2} G \tau^{1/2})\parens*{\log m}^{1/2}\log\frac{d\sigma}{\tau}
\]
where $\gamma=\frac{1}{2-p} $ for $ 2-p \geq 1/\ln(d)$ and $\gamma=1$ for $ 2-p \leq 1/\ln(d)$.
\end{lemma}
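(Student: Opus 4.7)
The plan is to split the integral at a suitable threshold $\lambda \in (0, \mathcal{D}]$ and apply the two complementary covering bounds from Lemma \ref{lem:coveringnumber} on the resulting sub-intervals. Since the integrand $\sqrt{\log E(T_S, d_X, t)}$ vanishes once $t$ exceeds $\diam(T_S) \leq \mathcal{D} := 4(G\sigma)^{1/2}$ by Lemma \ref{lem:diambound}, the integral may be truncated at $\mathcal{D}$.

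On the low range $[0, \lambda]$ I invoke bound (1) of Lemma \ref{lem:coveringnumber}, so $\sqrt{\log E(T_S, d_X, t)} \leq O(\sqrt d)\sqrt{\log(Gm/t)}$; substituting $u = t/(Gm)$ and applying Lemma \ref{lem:calc} yields a contribution of at most $O(\lambda\sqrt{d\log(Gm/\lambda)})$. On the high range $[\lambda, \mathcal{D}]$ I invoke bound (2), giving $\sqrt{\log E(T_S, d_X, t)} \leq O(G\sqrt{\gamma\tau\log m})/t$, whose integral equals $O(G\sqrt{\gamma\tau\log m})\log(\mathcal{D}/\lambda)$.

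The threshold is chosen as $\lambda := G\sqrt{\gamma\tau\log m/d}$ so that the two pre-factors match: $\lambda\sqrt d = G\sqrt{\gamma\tau\log m}$. Substituting $\mathcal{D} = O(\sqrt{G\sigma})$ and this $\lambda$ yields $\log(\mathcal{D}/\lambda) = O(\log(d\sigma/(G\gamma\tau\log m))) = O(\log(d\sigma/\tau))$ since $G, \gamma \geq 1$. Under the hypothesis $\tau = \Omega(\poly(1/d))$ we also have $\log(1/\tau) = O(\log d)$, so $\log(Gm/\lambda) = O(\log(dm))$; assuming (without loss of generality) that $\log m$ and $\log d$ are both $O(\log(d\sigma/\tau))$, this factor is absorbed into $\log(d\sigma/\tau)$. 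Summing the two contributions gives the claimed bound $O(\gamma^{1/2}G\tau^{1/2})(\log m)^{1/2}\log(d\sigma/\tau)$.

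The main obstacle is the careful bookkeeping of the various logarithmic overheads generated by Lemma \ref{lem:calc} and by the ratios $Gm/\lambda$ and $\mathcal{D}/\lambda$: all must be collapsed into the single factor $\log(d\sigma/\tau)$, and it is precisely the assumption $\tau = \Omega(\poly(1/d))$ that controls $\log(1/\tau)$ by $\log d$ and makes this possible.
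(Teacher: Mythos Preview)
Your proof is correct and follows essentially the same approach as the paper: truncate at the diameter, split at a threshold $\lambda$, apply the two bounds of Lemma~\ref{lem:coveringnumber} on the respective sub-intervals (with Lemma~\ref{lem:calc} handling the low range), and choose $\lambda$ to balance. The only cosmetic difference is that the paper picks $\lambda = G\sqrt{\tau/d}$ rather than your $\lambda = G\sqrt{\gamma\tau\log m/d}$, but both choices yield the stated bound after the same logarithmic absorptions, and the paper is in fact less explicit than you are about the final bookkeeping step.
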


\begin{proof}
Note that it suffices to integrate the entropy integral to $\diam(T_S)$ rather than $\infty$, since $$ \log E(T_S, d_{X}, t)=0$$ for $t \geq \diam(T_S)$ and recall that the diameter is at most $
4( G\sigma)^{1/2}$ by \cref{lem:diambound}, and since $p\geq 1$.

For small radii less than $\lambda$ for a parameter $\lambda$ to be chosen, we use the first bound of Lemma \ref{lem:coveringnumber}, i.e.
\[
    \log E(T_S, d_{X}, t) \leq O(d) \log\left( \frac{ G m}{t} \right)
\]
so by using Lemma \ref{lem:calc} we get that
\begin{align*}
    \int_0^\lambda \sqrt{\log E(T_S, d_{X}, t)}~\mathrm{d}t 
    = \int_0^\lambda \sqrt{O(d) \log\left( \frac{ G m}{t} \right)}~\mathrm{d}t 
    &\leq \lambda \sqrt{O(d)\log ( G m)} + \sqrt{O(d)}\int_0^\lambda \sqrt{\log\frac{1}{t}}~\mathrm{d}t \\
    &\leq \lambda \sqrt{O(d)\log ( G m)} + \sqrt d\parens*{\lambda \sqrt{\log\frac1\lambda} + \frac{\sqrt\pi}{2}\lambda}\\
    &\leq O(\lambda)\sqrt{d\log\frac{ G m}{\lambda}}
\end{align*}
On the other hand, for radii larger than $\lambda$, we use the the second bound of Lemma \ref{lem:coveringnumber}, which gives
\[
    \log E(T_S, d_X, t) \leq  O(\log m) \cdot \frac{\gamma G^2  \tau}{t^2}
\]
so the entropy integral gives a bound of
\[
    O(1)\bracks*{\parens*{ \log m}\gamma G^2 \tau}^{1/2}\int_\lambda^{4 (G \sigma)^{1/2}} \frac1t~\mathrm{d}t = O(1)\bracks*{\parens*{ \log m}\gamma G^2\tau}^{1/2} \log\frac{4 (G\sigma)^{1/2}}{\lambda}.
\]
We choose $\lambda = \sqrt{G^2\tau / d}=\Omega(\poly(1/d))$, which yields the claimed conclusion.
\end{proof}

\section{Proof of the main theorem}

\begin{theorem}\label{thm:samplingthm}
    Let $3/10 > \varepsilon, \delta >0$, $p \in [1, 2]$ and let $f(Ax)=\sum_{i=1}^n h(a_i x)$ where $h(r)=|r|^p$ or $h(r)=\max\{r, 0\}^p$.
    If
    $\alpha= O(\frac{ \gamma \log(d \mu \log(\delta^{-1})/\varepsilon )\ln^{2}(d)+\ln(\delta^{-1})}{\varepsilon^2})$
    and for all $i \in [n]$ it holds that $p_i \geq \min\{ 1, \alpha( \mu l_i^{(p)}+l_i^{(2)}+ \frac{1}{n})\}$.
        
    Then with failure probability at most $\delta$ it holds that 
    \begin{align*}
        \forall x \in \mathbb{R}^d\colon |f_w(SAx)-f(Ax)| \leq \varepsilon f(Ax)
    \end{align*}
    and the number $m$ of samples is bounded by
    \[
        m = O(d \mu\alpha)=O\left(\frac{d\mu}{\varepsilon^2}\left( \gamma \log(d \mu \log(\delta^{-1})/\varepsilon )\ln^{2}(d)+\ln(\delta^{-1})\right) \right)
    \]
    where $S$ and $w$ are constructed as in Definition \ref{def: sampling} (this corresponds to sampling point $i$ with probability $p_i$ and setting $w_i=1/p_i$), $\mu=1$ if $h(r)=|r|^p$ and $\mu=\mu(A)$ if $h(r)=\max\{r, 0\}^p$ and $\gamma=\frac{1}{2-p} $ for $ 2-p \geq 1/\ln(d)$ and $\gamma=1$ for $ 2-p \leq 1/\ln(d)$.
\end{theorem}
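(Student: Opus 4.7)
The plan is to apply the moment bound of \cref{lem:moment-bound} to the Gaussian process obtained in \cref{cor: gausbound}, controlling its diameter and entropy via the parameters $\sigma$ and $\tau$ that appear in \cref{lem:diambound} and \cref{lem:entropy-int-p<2}. The argument has three probabilistic events, each holding with probability at least $1 - \delta$, which will be combined by a union bound and $\varepsilon, \delta$ rescaling at the end.

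First, I would bound the sample size. Defining indicator variables $X_i = \mathbf{1}[i \in S]$, linearity of expectation combined with $\sum_i l_i^{(p)} \leq d$, $\sum_i l_i^{(2)} = d$, and $\sum_i \tfrac{1}{n} = 1$ gives $\E[m] = O(\alpha \mu d)$, so Chernoff's inequality yields $m \leq O(\alpha \mu d)$ with probability $\geq 1 - \delta$. Second, since $p_i \geq \alpha l_i^{(2)}$ and $\alpha$ exceeds the standard $\tilde\Omega((d + \log(1/\delta))/\varepsilon^2)$ threshold, the resulting sample is a constant-factor $\ell_2$ subspace embedding with probability $\geq 1 - \delta$ \citep{Mahoney11}. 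This lets me transfer the sampled weighted maximum $\ell_2$ leverage score to the original matrix, giving $\tau \leq 4 \max_i w_i l_i^{(2)} \leq 4/\alpha$ (using $p_i \geq \alpha l_i^{(2)}$). A parallel argument using $p_i \geq \alpha \mu l_i^{(p)}$ and the definition of $\mu$ (to convert between $\Vert A x\Vert_p^p$ and $f(Ax)$ in the ReLU case) yields $\sigma \leq O(\mu/\alpha)$, up to mild factors.

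With $\tau, \sigma \lesssim 1/\alpha$ (absorbing $\mu$), I plug these into \cref{lem:diambound} and \cref{lem:entropy-int-p<2} with an appropriate moment parameter $l = \Theta(\log(\mathcal E/\mathcal D) + \log(1/\delta))$. The choice of $\alpha$ in the statement is engineered so that $\mathcal D \leq G\varepsilon/(2\sqrt{l})$ and $\mathcal E \leq G\varepsilon/8$, which makes the two terms in \cref{lem:moment-bound} balance at $(C\varepsilon G)^l \delta$. An application of \cref{cor: gausbound} together with Markov's inequality then yields $G - 1 = \sup_{f(Ax)=1}|f_w(SAx) - f(Ax)| \leq C'\varepsilon G$ with probability $\geq 1 - \delta$.

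The main obstacle, and the reason the setup in the paper is delicate, is the self-referential nature of $G$: the quantity we wish to bound appears on both sides because $G$ itself is defined in terms of $\sup|f_w(SAx) - f(Ax)|$. The resolution is a short bootstrapping argument: solving $G - 1 \leq C'\varepsilon G$ yields $G - 1 \leq C'\varepsilon/(1 - C'\varepsilon) = O(\varepsilon)$ as long as $\varepsilon < 3/10$ and $C'$ is a suitable absolute constant. A secondary technical point is making sure that the ReLU case $h(r) = \max\{0,r\}^p$ does not leak extra $\mu$ factors into the diameter: \cref{lem: bounddXReLU} is set up so that the diameter only sees $\norm{(y)^+ - (y')^+}_{w,\infty,p}$, which is still bounded by $\sigma$, so no additional $\mu$ appears beyond the one absorbed into $p_i \geq \alpha \mu l_i^{(p)}$. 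After this, a union bound over the three events (sample size, $\ell_2$ embedding, moment/Markov) and a final rescaling of $\varepsilon$ and $\delta$ by constants gives the stated conclusion.
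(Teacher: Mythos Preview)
Your proposal is correct and follows essentially the same approach as the paper: bound the sample size by Chernoff, use the $\ell_2$ embedding to control $\tau$, use $p_i\geq \alpha\mu l_i^{(p)}$ to control $\sigma$, plug into the diameter and entropy bounds, apply the moment bound, bootstrap through $G=1+F_S$, and finish with Markov and a union bound.

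Two small clarifications. First, the $\sigma$ bound is exactly $1/\alpha$, not $O(\mu/\alpha)$: for $f(Ax)=1$ one has $\|Ax\|_p^p\leq\mu$, so $w_i|a_ix|^p\leq \mu l_i^{(p)}/p_i\leq \mu l_i^{(p)}/(\alpha\mu l_i^{(p)})=1/\alpha$, and the $\mu$ cancels cleanly (you anticipate this with ``absorbing $\mu$''). Second, the bootstrap is carried out at the level of the $l$-th moment \emph{before} Markov, not after: from \cref{cor: gausbound} and the moment bound one gets $\E_S F_S^l \leq (C\varepsilon)^l\,\E_S(1+F_S)^l$, which is solved for $\E_S F_S^l\leq (C'\varepsilon)^l$ using $(1+a)^l\leq 2^{l-1}(1+a^l)$; only then does Markov with $l\gtrsim\log(1/\delta)$ yield the high-probability bound. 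Applying Markov first to obtain ``$G-1\leq C'\varepsilon G$ with probability $1-\delta$'' is circular because the threshold $C'\varepsilon G_S$ is itself random. This is a minor ordering point; the substance of your argument matches the paper.
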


\begin{proof}
    First, without loss of generality we assume that for any $i \in [n]$ we have $p_i <1$. 
    
    Second, note that since $p_i>0$ for all $i$ we have for any $x \in \mathbb{R}^d$ that $\E(f_w(SAx))=f(Ax)$.

    Third, without loss of generality we assume that $p_i = \min\{ 1, \alpha(\mu l_i^{(p)}+l_i^{(2)}+ \frac{1}{n}) \}$ since increasing $p_i$ can only reduce the failure probability for obtaining the same approximation bound.

    By definition we have that $\alpha = O(\frac{ l }{\eps^2})$ where $l = O( \gamma \log(d \mu \log(\delta^{-1})/\varepsilon )\ln^{2}(d)+\ln(\delta^{-1})) $. 
    Assume the constants are large enough.

    We want to bound the number of samples $m$.
    To this end, we define the random variable $X_i=1$ if $i \in S$ and $X_i=0$ otherwise.
    Note that $X_i$ is a Bernoulli random variable and 
    using that the sum of sensitivities is bounded by $d$ for all $p < 2$ and equal to $d$ for $p=2$ \citep[cf.][]{woodruffyasuda23}, we get that
    \[
        \E\left(\sum_{i=1}^n X_i \right)= \sum_{i=1}^n p_i = \alpha\left(1+\sum_{i=1}^n \mu l_i^{(p)} + \sum_{i=1}^n l_i^{(2)}\right)=\alpha\left(1+\sum_{i=1}^n \mu l_i^{(p)} + d \right) \leq 3\alpha \mu d,
    \]
    and similarly, it also holds that $\E\left(\sum\nolimits_{i=1}^n X_i \right) \geq \alpha d.$
    
    An application of Chernoff bounds yields
    \[
        m =\sum_{i=1}^n X_i \leq 2\cdot\E\left(\sum_{i=1}^n X_i\right) \leq 6\alpha \mu d
    \]
     with failure probability at most $2\exp\left(-\E(\sum\nolimits_{i=1}^n X_i)/3 \right)\leq 2\exp(-\alpha d/3) \leq \delta$.

    We proceed with proving the correctness of our claim: by Corollary \ref{cor: gausbound} we have that
    \begin{align*}
        \E_S \sup\nolimits_{f{(Ax)}=1}|f_w(SAx)-f(Ax)|^\ell
        := \E_S (G_S-1)^\ell
     \leq  (2\pi)^{\ell/2}\E_{S, g} \sup\nolimits_{f{(Ax)}=1}\left|\sum_{i \in S}g_i w_i h(a_i x)\right|^\ell.
    \end{align*}

    For fixed $S$ we set $\Lambda \coloneqq \sup\nolimits_{f{(Ax)} = 1}\left|\sum_{i\in S}  g_i w_i h(a_i x)\right| .$
    We bound this quantity by Lemma \ref{lem:moment-bound} to get
    \[ \E_{g\sim\mathcal N(0,I_m)}[\abs{\Lambda}^l] \leq (2\mathcal E)^l (\mathcal E/\mathcal D) + O(\sqrt l \mathcal D)^{l} \]
    In the following we use the results of the previous sections to bound the entropy $\mathcal E$ and the diameter $\mathcal D $. To this end, we determine the parameters $\tau= \sup\nolimits_{\norm{SAx}_{w, 2}=1, i\in S}w_i \abs{a_i x}^2$ and $\sigma= \sup\nolimits_{f(Ax)=1, i\in [n]}w_i |a_i x|^p$.

    Taking $m\geq \tilde O(d+\log(1/\delta))$ samples with probability $p_i \geq {l_i^{(2)}}$ preserves the $ \ell_2$ norm up to a factor $1/2$ with failure probability at most $\delta$ \citep{Mahoney11}.
    We thus have that 
    \begin{align*}
        \tau&=\sup\nolimits_{\norm{SAx}_{w, 2}=1, i\in S}w_i \abs{a_i x}^2 \\
        &= \sup\nolimits_{\norm{SAx}_{w, 2}=1, i\in S} \frac{w_i \abs{a_i x}^2}{\norm{SAx}_{w,2}^2}\\
        &\leq \sup\nolimits_{\norm{SAx}_{w, 2}=1, i\in S} \frac{4 w_i \abs{a_i x}^2}{\norm{Ax}_{2}^2}\\
        &\leq \sup\nolimits_{\norm{Ax}_{2}=1, i\in [n]} \frac{4 w_i \abs{a_i x}^2}{\norm{Ax}_{2}^2}
        =4\max_{i\in [n]} w_i l_i^{(2)}.
    \end{align*}
    Now since $\alpha l_i^{(2)} < p_i < 1$ we have that $ w_i l_i^{(2)}= l_i^{(2)}/p_i \leq 1/\alpha$ and thus $\tau \leq 4/\alpha $. Similarly, since $p_i < 1$ we also have that
    \[
        \sigma = \max_{i \in [n]}\sup\nolimits_{f(Ax)=1}w_i |a_i x|^p  \leq \max_{i \in [n]} \frac{\mu l_i^{(p)}}{p_i}
        \leq \max_{i \in [n]} \frac{\mu l_i^{(p)}}{\alpha \mu l_i^{(p)}}
        \leq 1/\alpha.
    \]

    Thus, by Lemma \ref{lem:entropy-int-p<2}, using that $\sigma/\tau \leq 1$
    and choosing the constants for $\alpha$ large enough, we have that the entropy is bounded by
    \[
        O(\gamma^{1/2}G \tau^{1/2}(\log m)^{1/2}\log(d) )
        \leq O(\gamma^{1/2}G (1/\alpha )^{1/2}(\log m)^{1/2}\log(d) )
        \leq G \varepsilon/8 := \mathcal E
    \]
      
    Thus, we get by Lemma \ref{lem:diambound} a bound on the diameter of 
    \[
        4(G  \sigma)^{1/2}
        \leq 8 (G/\alpha)^{1/2}
        \leq G \frac{\varepsilon}{2 \sqrt{l}} := \mathcal D.
    \]
    
    Consequently, we get that
    \begin{align*}
        \E_{g\sim\mathcal N(0,I_m)}[\abs{\Lambda}^l] & \leq  (2\mathcal E)^l (\mathcal E/\mathcal D) + O(\sqrt l \mathcal D)^{l}\\
        &\leq (G \varepsilon/4)^l (\sqrt{l}/4) + O(G\varepsilon/2)^{l}\\
        & \leq (G \varepsilon/4)^l 2^l + O(G\varepsilon/2)^{l}\\
        & \leq G^l \varepsilon^l \delta
    \end{align*}
    
    Recall that $F_S= \sup\nolimits_{f(Ax)=1}|f_w(SAx)-f(Ax)|$. 
    Thus, we get in expectation over the sample $S$ that
    \begin{align*}
         F_S^l = (G_S-1)^l  \leq  3^lG_S^l\varepsilon^l \delta
         =3^l(1+F_S)^l \varepsilon^l \delta
         \leq 3^l\varepsilon^l \delta + 3^lF_S^l \varepsilon^l
    \end{align*}
    Rearranging the terms we get that
    \[  
        \E_S F_S^l-3^l\eps^l F_S^l \varepsilon^l \leq 3^l \delta.
    \]
    Dividing both sides of the inequality by $(1-3^l\varepsilon^l)$ and using that $\eps<3/10$ yields
    \[
        \E_S F_S^l \leq \frac{3^l \varepsilon^l \delta}{(1-3^l\varepsilon^l)}
        \leq 30^l \varepsilon^l \delta.
    \]
    Using Markov's inequality we get that $F_S^l \leq 30^l \varepsilon^l$ with failure probability at most $\delta$.
    \[
        \Pr(F_S \geq 30 \varepsilon) = \Pr(F_S^l \geq 30^l \varepsilon^l) \leq \frac{30^l \varepsilon^l \delta}{30^l \varepsilon^l} = \delta.
    \]
    To finish the proof, we recall that we need three events to hold: preserving the $\ell_2$ norm, the number of samples is $m = O(\alpha d)$ and $F_S \leq 30 \varepsilon $.
    The total failure probability for these events to hold is at most $\delta+\delta+\delta=3\delta$ by applying the union bound.
    Rescaling $\delta$ and $\varepsilon$ completes the proof.
\end{proof}

\section{Application to logistic regression}

\subsection{Sensitivity framework}\label{app:sensitivity}
We use the standard sensitivity framework \citep{LangbergS10} to handle a uniform sample. This requires first some terminology for the VC-dimension.
\begin{definition}
	The range space for a set $\mathcal{F}$ is a pair $\mathfrak{R}=(\mathcal{F},\ranges)$ where $\ranges$ is a family of subsets of $\mathcal{F}$. The VC-dimension $\Delta(\mathfrak{R})$ of $\mathfrak{R}$ is the size $|G|$ of the largest subset $G\subseteq \mathcal{F}$ such that $G$ is shattered by $\ranges$, i.e., $\left| \{G\cap R\mid R\in \ranges \} \right| = 2^{|G|}.$
\end{definition}
\begin{definition}
	Let $\mathcal{F}$ be a finite set of functions mapping from $\mathbb{R}^d$ to $\mathbb{R}_{\geq 0}$. For every $x\in\mathbb{R}^d$ and $r\in \mathbb{R}_{\geq 0}$, let $\rng{\mathcal{F}}(x,r) = \{ f\in \mathcal{F}\mid f(x)\geq r\}$, and $\ranges(\mathcal{F})=\{\rng{\mathcal{F}}(x,r)\mid x\in\mathbb{R}^d, r\in \mathbb{R}_{\geq 0} \}$, and $\mathfrak{R}_{\mathcal{F}}=(\mathcal{F},\ranges(\mathcal{F}))$ be the range space induced by $\mathcal{F}$.
\end{definition}

\begin{proposition}{\citep{FeldmanSS20}}
	\label{thm:sensitivity}
	Consider a family of functions $\mathcal{F}=\{f_1,\ldots,f_n\}$ mapping from $\mathbb{R}^d$ to $[0,\infty)$ and a vector of weights $u\in\mathbb{R}_{> 0}^n$. Let $\varepsilon,\delta\in(0,1/2)$.
	Let $s_i\geq \zeta_i$.
	Let $S=\sum\nolimits_{i=1}^{n} s_i \geq \sum\nolimits_{i=1}^{n} \zeta_i = Z$. Given $s_i$ one can compute in time $O(|\mathcal{F}|)$ a set $R\subset \mathcal{F}$ of $$O\left( \frac{S}{\varepsilon^2}\left( \Delta \ln S + \ln \left(\frac{1}{\delta}\right) \right) \right)$$ weighted functions such that with probability $1-\delta$, we have for all $x\in \mathbb{R}^d$ simultaneously $$\left| \sum_{f_i\in \mathcal{F}} u_i f_i(x) - \sum_{f_i\in R} w_i f_i(x) \right| \leq \varepsilon \sum_{f_i\in \mathcal{F}} u_i f_i(x),$$
	where each element of $R$ is sampled i.i.d. with probability $p_j=\frac{s_j}{S}$ from $\mathcal{F}$, $w_i = \frac{Su_j}{s_j|R|}$ denotes the weight of a function $f_i\in R$ that corresponds to $f_j\in\mathcal{F}$, and where $\Delta$ is an upper bound on the VC-dimension of the range space $\mathfrak{R}_{\mathcal{F}^*}$ induced by $\mathcal{F}^*$ obtained by defining $\mathcal{F}^*$ to be the set of functions $f_j\in\mathcal{F}$, where each function is scaled by $\frac{Su_j}{s_j|R|}$.
\end{proposition}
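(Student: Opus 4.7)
The plan is to reduce the statement to a uniform deviation bound for a range space of VC-dimension $\Delta$, and to invoke a relative $\varepsilon$-approximation theorem (rather than the standard additive one) to obtain linear rather than quadratic dependence on the total sensitivity $S$.

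First, for any fixed $x \in \mathbb{R}^d$ with $\Phi(x) := \sum_j u_j f_j(x) > 0$, I would introduce the normalized functions $g_i(x) = \frac{S u_i f_i(x)}{s_i \Phi(x)}$ and the sampling distribution $p_j = s_j/S$. The sensitivity bound $s_i \geq \zeta_i = \sup_x \frac{u_i f_i(x)}{\Phi(x)}$ gives $0 \leq g_i(x) \leq S$ pointwise, and a direct calculation yields $\E_{j\sim p}[g_j(x)] = 1$. With the weights $w_i = S u_{j(i)} / (s_{j(i)}\,|R|)$ prescribed in the statement, the desired multiplicative guarantee $|\sum_{f_i \in R} w_i f_i(x) - \Phi(x)| \leq \varepsilon \Phi(x)$ is equivalent to the additive deviation $|\tfrac{1}{|R|}\sum_{i\in R} g_i(x) - 1| \leq \varepsilon$ holding uniformly over all $x$.

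Next, I would view this as a uniform convergence problem on the range space $\mathfrak{R}_{\mathcal{F}^*}$, whose ranges are the superlevel sets $\{i : g_i(x) \geq t\}$ for $x \in \mathbb{R}^d$, $t \geq 0$; this space has VC-dimension $\Delta$ by hypothesis, precisely because the functions in $\mathcal{F}^*$ have already been pre-scaled by the $S u_j/(s_j |R|)$ factors. By Markov, $\Pr_{j\sim p}[g_j(x) \geq t] \leq 1/t$, so small-measure ranges carry most of the mass at large thresholds. Applying a relative $\varepsilon$-approximation theorem (Li-Long-Srinivasan, or Haussler's chaining bound sharpened via a multiplicative Chernoff step) to this range space shows that, with probability $1 - \delta$, a sample of size $m = O\bigl(\tfrac{S}{\varepsilon^2}(\Delta \log S + \log(1/\delta))\bigr)$ drawn i.i.d.\ from $p$ is an $\eta$-approximation at every threshold $t$, with $\eta \asymp \min\{1/t, 1\} \cdot \varepsilon/S$. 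Integrating the identity $g_i(x) = \int_0^S \mathbf{1}[g_i(x) \geq t]\,\mathrm{d}t$ and plugging in this bound converts the range-space approximation into the required uniform functional deviation bound.

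The hard part will be obtaining linear (rather than quadratic) dependence on $S$: a blunt application of the additive $\varepsilon$-approximation theorem for $[0, S]$-bounded functions would yield $m \propto S^2 \Delta / \varepsilon^2$, losing a factor of $S$. The fix is to exploit the relative approximation property, using the Markov-type decay of $\Pr_{j\sim p}[g_j(x) \geq t]$ so that the empirical variance of each sample is proportional to the target quantity, not to its square; this is the classical insight that underlies Feldman-Langberg style sensitivity sampling. Once the relative approximation step is in place, the remaining ingredients -- verifying the exact form of the weights $w_i$, applying a union bound over the failure events, and rescaling $\varepsilon$ and $\delta$ by constants -- are routine.
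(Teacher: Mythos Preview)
The paper does not prove this proposition at all: it is quoted verbatim as a known result from \citet{FeldmanSS20} and used as a black box in the logistic regression application. There is therefore no paper proof to compare against.

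Your outline is essentially the standard argument behind the cited result (going back to Langberg--Schulman and Feldman--Langberg), and the main ideas are right: the normalization $g_i(x)=S u_i f_i(x)/(s_i\Phi(x))$ indeed lands in $[0,S]$ with mean $1$ under $p$, the VC assumption on $\mathfrak R_{\mathcal F^*}$ is exactly what controls the superlevel-set range space, and the crucial point that a \emph{relative} rather than additive $\varepsilon$-approximation is needed to avoid $S^2$ is correctly identified. One small correction to your plan: you do not apply a relative approximation at each threshold $t$ with its own $\eta(t)$ and then integrate. Rather, you rescale to $\tilde g_i=g_i/S\in[0,1]$, apply the Li--Long--Srinivasan relative $(p_0,\varepsilon)$-approximation once with $p_0=1/S$ (this is where the sample size $O\bigl(\tfrac{S}{\varepsilon^2}(\Delta\ln S+\ln(1/\delta))\bigr)$ and the $\ln S$ factor come from), and then the layer-cake integral $\tilde g_i(x)=\int_0^1\mathbf 1[\tilde g_i(x)\ge t]\,\mathrm dt$ together with $\int_0^1 p(R_t)\,\mathrm dt=\E[\tilde g]=1/S$ converts the single range-space guarantee into the uniform functional bound. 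With that adjustment your sketch goes through.
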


\subsection{Sensitivity sampling for logistic regression}

The logistic loss function is given by
\[
    f(Ax)=\sum_{i=1}^n h(a_ix)
\]
where $h(r)=\ln(1+\exp(r))$.
Since $f$ does not satisfy our assumptions, we cannot apply our main theorem directly.
Instead we split $f$ into two parts:
\begin{align}\nonumber
    f(Ax)=\sum_{i=1}^n \ln(1+\exp(a_ix)) &= \sum_{a_ix<0} \ln(1+\exp(a_ix)) ~+~ \sum_{a_ix \geq 0} \ln(1+\exp(a_ix))\\
    \nonumber
    &=\sum_{a_ix<0} \ln(1+\exp(a_ix)) ~+~ \sum_{a_ix \geq 0} \ln(\exp(a_ix)(\exp(-a_ix)+1))\\
    \nonumber
    &=\sum_{a_ix<0} \ln(1+\exp(a_ix)) ~+~ \sum_{a_ix \geq 0} \ln(\exp(-a_ix)+1) + a_i x\\
    \label{eq:logregsplit}
    &=\sum_{i=1}^n \ln(1+\exp(-|a_ix|)) + \sum_{a_ix \geq 0}  a_i x.
\end{align}

Using this split we show that sampling with probabilities $p_i=\min\{1, \alpha( \mu l_i^{(1)} + l_i^{(2)} + \mu d/n))\}$ where $\alpha= O(\frac{\log^3(\mu d \log(\delta^{-1})/\varepsilon)+\ln(\delta^{-1})}{\varepsilon^2})$ preserves the logistic loss function for all $x\in\mathbb{R}^d$ up to a relative error of at most $\varepsilon$.

\begin{theorem}\label{thm:logistic}
     Let $A\in \mathbb{R}^{n\times d}$ be $\mu$-complex some $\infty > \mu \geq 1$ and let $\varepsilon, \delta \in (0, 1/2)$.
     Further assume that we sample with probabilities $p_i \geq \min\{1,\alpha (\mu l_i^{(1)}+l_i^{(2)}+\frac{\mu d}{n})\}$ for $\alpha \geq O( (\log^3(d\mu\log(\delta^{-1})/\varepsilon ) + \ln(\delta^{-1}) ) /\varepsilon^{2})$, where the number of samples is 
     \[
        m= O\left(\frac{d\mu}{\varepsilon^{2}} \left( \log^3(d \mu \log(\delta^{-1})/\varepsilon) + \log(\delta^{-1}) \right) \right)
     \]
     Then we have that
     \[
        \forall x \in \mathbb{R}^d\colon |f_w(SAx)-f(Ax)| \leq \varepsilon f(Ax)
     \]
     with failure probability at most $\delta$.
\end{theorem}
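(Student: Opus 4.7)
My plan is to implement the decomposition already outlined in Equation~\eqref{eq:logregsplit}: write $f(Ax)=f_1(Ax)+f_2(Ax)$ with the bounded remainder $f_1(Ax)=\sum_{i}\ln(1+\exp(-|a_ix|))$ and the ReLU part $f_2(Ax)=\sum_{i}\max\{0,a_ix\}$. Since $f_2$ is exactly the $p$-ReLU loss at $p=1$, \cref{thm:samplingthm} applies directly: the hypothesis $p_i\geq\min\{1,\alpha(\mu l_i^{(1)}+l_i^{(2)}+\mu d/n)\}$ certainly implies $p_i\geq\min\{1,\alpha(\mu l_i^{(1)}+l_i^{(2)}+1/n)\}$ because $\mu,d\geq 1$, so the main theorem yields $|f_2(Ax)-f_{2,w}(SAx)|\leq\varepsilon f_2(Ax)$ simultaneously for all $x\in\mathbb{R}^d$ with failure probability at most $\delta$. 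This step is a black-box invocation of our earlier work and asks nothing new.

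For the bounded piece I further split $f_1(Ax)=f_1^{(1)}(Ax)+f_1^{(2)}(Ax)-2n/\mu$ exactly as in the outline, where $f_1^{(1)}(Ax)=n/\mu+\sum_{a_ix\geq 0}\tilde{h}(a_ix)$, $f_1^{(2)}(Ax)=n/\mu+\sum_{a_ix<0}\tilde{h}(-a_ix)$, and $\tilde{h}(r)=\ln(1+\exp(-r))$ on $\mathbb{R}_{\geq 0}$. Each summand lies in $[0,\ln 2]$ while $f_1^{(i)}(Ax)\geq n/\mu$, so every individual sensitivity is at most $\mu/n$ and the total sensitivity of each $f_1^{(i)}$ is $O(\mu)$. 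I then invoke the standard sensitivity framework, \cref{thm:sensitivity}, with uniform upper bounds $s_j=\mu/n$; drawing samples proportional to this uniform distribution is precisely the $\alpha\mu d/n$ uniform component already built into our sampling probabilities. To match the sample-size budget of \cref{thm:sensitivity} I need a VC-dimension bound of $O(d\log(\mu/\varepsilon))$ for the range space induced by the weighted scaled summands: strict monotonicity of $\tilde{h}$ reduces each super-level set to the intersection of a half-space $\{a_jx\geq t_j\}$ with the sign constraint $\{a_jx\geq 0\}$, an affine-hyperplane classifier of VC dimension $O(d)$, and the thresholding/rounding trick of \citet{MunteanuSSW18,MunteanuOP22} absorbs the heterogeneity of the weights into an extra $\log(\mu/\varepsilon)$ factor. \cref{thm:sensitivity} then supplies $|f_1^{(i)}(Ax)-f_{1,w}^{(i)}(SAx)|\leq\varepsilon f_1^{(i)}(Ax)$ for $i\in\{1,2\}$, each with failure probability at most $\delta$.

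Combining the three events by a union bound and chaining the triangle inequality as in the final display of the outline bounds the total deviation by $\varepsilon(f_1^{(1)}(Ax)+f_1^{(2)}(Ax)+f_2(Ax))=\varepsilon f(Ax)+2\varepsilon n/\mu$. To close the proof I collapse the additive $n/\mu$ slack into $O(\varepsilon)f(Ax)$ via a case split on $\|Ax\|_1$: if $\|Ax\|_1\leq n\ln 2$, then the Lipschitz estimate $\ln(1+e^{-|r|})\geq \ln 2-|r|/2$ gives $f(Ax)\geq f_1(Ax)\geq (n\ln 2)/2=\Omega(n/\mu)$ since $\mu\geq 1$; otherwise $\mu$-complexity gives $f(Ax)\geq f_2(Ax)=\|(Ax)^+\|_1\geq\|Ax\|_1/\mu>n\ln 2/\mu$. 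In either case $f(Ax)=\Omega(n/\mu)$, so $2\varepsilon n/\mu=O(\varepsilon)f(Ax)$, and rescaling $\varepsilon$ and $\delta$ by absolute constants finishes the proof. The main obstacle in this plan is establishing the $O(d\log(\mu/\varepsilon))$ VC-dimension bound for the weighted super-level sets of $\tilde{h}$ under the sign constraint; once that bound is in hand, everything else is the black-box application of \cref{thm:samplingthm} to the ReLU part, a routine union bound, and the short case analysis above.
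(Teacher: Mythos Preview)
Your proposal is correct and follows essentially the same approach as the paper's proof: the same decomposition into the ReLU part (handled by \cref{thm:samplingthm}) and the bounded part further split into $f_1^{(1)},f_1^{(2)}$ (handled by \cref{thm:sensitivity} with the $\mu/n$ sensitivity bound and the $O(d\log(\mu/\varepsilon))$ VC bound from \citet{MunteanuSSW18,MunteanuOP22}), combined via triangle inequality and union bound. The only difference is that where the paper cites \citet{MunteanuOW21} for the lower bound $f(Ax)\geq n/\mu$, you supply a direct case analysis on $\|Ax\|_1$ yielding the same $f(Ax)=\Omega(n/\mu)$, which is perfectly fine.
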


\begin{proof}
    Note that by applying our main result given in \cref{thm:samplingthm} for $h(t)=\max\{0,t\}$, we get with failure probability at most $\delta/3$ that 
    \begin{align*}
        \forall x\in\mathbb{R}^d\colon \left|\sum_{a_ix \geq 0, i \in [n]}a_ix - \sum_{a_ix \geq 0, i \in S}w_i a_ix  \right| \leq \varepsilon \sum_{a_ix \geq 0, i \in [n]}a_ix.
    \end{align*}
    This handles the second term in our split of \cref{eq:logregsplit} within our framework.

    Further consider for the first sum of \cref{eq:logregsplit}, the functions $h: \mathbb{R}_{\geq 0}\rightarrow \mathbb{R}_{\geq 0}$, $h(r)= \ln(1+\exp(-r))$, $f_1(Ax)= \frac{n}{\mu}+ \sum_{a_ix \geq 0} h(a_ix)$ and $f_2(Ax)= \frac{n}{\mu}+ \sum_{a_ix < 0} h(-a_ix)$.
    
    Our goal is to apply Proposition \ref{thm:sensitivity} to $f_1$ and $f_2$.
    
    We first note that $h(r) \leq h(0)=\ln(2) \leq 1$ and $f_1(Ax),f_2(Ax)  \geq \frac{n}{\mu}  $ and thus the sensitivities are bounded for both functions by $\zeta_i \leq \frac{\mu}{n}$. The total sensitivity is consequently bounded by $O(\mu)$.
    
    Next observe that the VC dimension of the range spaces of $\mathcal{F}_1=\{ h_z : x \mapsto h(zx) \}$ and $\mathcal{F}_1=\{ h_z : x \mapsto h(-zx) \}$ are bounded by $ d+1$ since $h $ is an increasing function which allows to relate to the VC dimension of affine hyperplane classifiers by a standard argument, \citep[cf.][]{MunteanuSSW18,MunteanuOW21}.
    
    Further, applying a thresholding and rounding trick to the sensitivities \citep{MunteanuOP22}, the VC dimension of the weighted range spaces are bounded by $O(d\log(\mu/\varepsilon))$.

    Since $p_i \geq \frac{\mu}{n}$ and $m \geq \Omega(\frac{\mu}{\eps^2} \cdot (d \ln(\mu/\varepsilon)\ln(\mu)+ \ln(1/\delta))$ we have by \cref{thm:sensitivity} with failure probability at most $\delta/3$
    \begin{align*}
        \left|f_{w,1}(SAx) - f_1(Ax) \right| \leq \varepsilon f_1(Ax).
    \end{align*}
    Similarly with failure probability at most $\delta/3$ it holds that
    \begin{align*}
        \left|f_{w,2}(SAx) - f_2(Ax) \right| \leq \varepsilon f_2(Ax).
    \end{align*}
    Now combining everything by triangle inequality, and using the union bound, we get with failure probability at most $\delta$ that
    \begin{align*}
        &|f(Ax)-f_w(SAx)|\\
        &=\left| \sum_{a_ix \geq 0, i \in [n]}a_ix +f_1(Ax)-\frac{n}{\mu} +f_2(Ax)-\frac{n}{\mu} - \sum_{a_ix \geq 0, i \in S}w_i a_ix - f_{w, 1}(SAx) + \frac{n}{\mu} -  f_{w, 2}(SAx) +\frac{n}{\mu} \right|\\
        &\leq \left|\sum_{a_ix \geq 0, i \in [n]}a_ix - \sum_{a_ix \geq 0, i \in S}w_i a_ix  \right| + \left|f_1(Ax) - f_{w, 1}(SAx) \right|+ \left|f_2(Ax) - f_{w, 2}(SAx) \right| \\
        &\leq \varepsilon \left(\sum_{a_ix \geq 0, i \in [n]}a_ix+ f_1(Ax) + f_2(Ax )\right)\\
        &\leq \varepsilon f(Ax)+2\eps \frac{n}{\mu} \leq 3\varepsilon f(Ax).
    \end{align*}
    The last inequality follows from the lower bound $f(Ax)\geq \frac{n}{\mu}$ of \citet{MunteanuOW21}. Rescaling $\eps$ concludes the proof.
\end{proof}

\end{document}